 \author{Thomas Debris--Alazard$^{1}$}
\email{thomas.debris@inria.fr}
\author{Philippe Gaborit$^{2}$}
\email{gaborit@unilim.fr}
\author{Romaric Neveu$^{2}$}
\email{romaric.neveu@etu.unilim.fr}
\author{Olivier Ruatta$^{2}$}
\email{olivier.ruatta@unilim.fr}
\address{$^{1}$ Inria and  Laboratoire LIX, \'Ecole Polytechnique, Palaiseau, France}
\address{$^{2}$ XLIM, University of Limoges, Limoges, France}
\newcommand{\Fq}{\mathbb{F}_q}
\newcommand{\Fqm}{\mathbb{F}_{q^m}}
\renewcommand{\vec}[1]{\mathbf{#1}}
\newcommand{\vv}{\vec{v}}
\newcommand{\minrank}{\mathsf{MinRank}}
\newcommand{\MinRank}{\mathsf{MinRank}}
	\spnewtheorem{fact}{Fact}{\bfseries}{\itshape}}{
	\newtheorem{remark}{Remark}
	\newtheorem{lemma}{Lemma}
	\newtheorem{definition}{Definition}
	\newtheorem{theorem}{Theorem}
	\newtheorem{proposition}{Proposition}
	\newtheorem{corollary}{Corollary}
\newtheorem{modeling}{Modeling}
\newtheorem{heuristic}{Heuristic}
\algnewcommand{\Keygen}{\item[\textbullet \textbf{~KeyGen:}]}
\algnewcommand{\Enc}{\item[\textbullet \textbf{~Encryption:}]}
\algnewcommand{\Dec}{\item[\textbullet \textbf{~Decryption:}]}
\newcommand{\eqdef}{\stackrel{\textup{def}}{=}}
\newcommand{\rk}{\mathsf{Rank}}
\newcommand{\tr}{\mathsf{Trace}}
\newcommand{\qbinom}[2]{\begin{bmatrix}#2 \\ #1\end{bmatrix}_q}
\newcommand{\RSD}{\ensuremath{\mathsf{RSD}}}
\newcommand{\RSL}{\ensuremath{\mathsf{RSL}}}
\newcommand{\MSL}{\mathsf{MSL}}
\newcommand{\adv}{\mathcal A}
\newcommand{\C}{\mathcal C}
\newcommand{\Unif}{\hookleftarrow}
\newcommand{\oneto}[1]{[1,#1]}
\newcommand{\cmark}{\textcolor{green}{\ding{51}}}\newcommand{\xmark}{\textcolor{red}{\ding{55}}}\newcommand{\Sp}[1]{\mathbf{Sp}(#1)}
	\title{A Minrank-based Encryption Scheme \`a la Alekhnovich-Regev}
\begin{document}

\maketitle

\begin{abstract}
		
		Introduced in 2003 and 2005, Alekhnovich and Regev' schemes were the first public-key encryptions whose security is only based on the average hardness of decoding random linear codes and $\mathsf{LWE}$, without other security assumptions. Such security guarantees made them very popular, being at the origin of the now standardized \textsf{HQC} or \textsf{Kyber}.

		We present an adaptation of Alekhnovich and Regev' encryption scheme whose security is only based on the hardness of a slight variation of $\minrank$, the so-called stationary-$\minrank$ problem. We succeeded to reach this strong security guarantee by showing that stationary-$\minrank$ benefits from a search-to-decision reduction. Our scheme therefore brings a partial answer to the long-standing open question of building an encryption scheme whose security relies solely on the hardness of $\minrank$. 
Finally, we show after a thoroughly security analysis that our scheme is practical and competitive with other encryption schemes admitting such strong security guarantees. Our scheme is slightly less efficient than \textsf{FrodoKEM}, but much more efficient than Alekhnovich and Regev' original schemes, with possibilities of improvements by considering more structure, in the same way as \textsf{HQC} and \textsf{Kyber}.
\end{abstract}

\section{Introduction}\label{sec:intro}

{\bf \noindent Post-quantum encryption schemes: the case of codes and lattices.}
Among all the candidates for post-quantum cryptography, codes and lattices have proven themselves to be strong candidates. This success, culminating in the standardization of \textsf{Kyber} (now \textsf{ML-KEM}) and \textsf{HQC} as key-exchange mechanisms, is the result of a long line of work started in 2003 by Alekhnovich~\cite{A03} for codes and in 2005 by Regev~\cite{R05} for lattices. In fact, while code-based encryptions
existed since 1978 with McEliece's scheme \cite{ME78},
Alekhnovich and Regev' schemes showed a core difference concerning their security hypothesis. This was the first time that security \textit{only} relied on the hardness of decoding a random linear code and $\textsf{LWE}$, both problems benefiting from many sources of hardness like search-to-decision reductions~\cite{FS96,R05}, worst-to-average case reductions~\cite{BLVW19,BCD23,DR25,MR04} and quantum reductions to the problem of finding short codewords and lattice points~\cite{DRT23a,R05}. 
\medskip

{\bf \noindent Alekhnovich and Regev' cryptosystem.} A public-key in Alekhnovich's encryption scheme is simply defined as an instance of the problem of decoding a random linear code. That is to say, a public random linear code $\C$, {\it i.e.,} a subspace of $\mathbb{F}_{2}^{n}$, and a noisy codeword $\vec{c} + \vec{e}$ where $\vec{c}\in \mathcal{C}$ and $\vec{e}$ being sparse, {\it i.e.,} with small Hamming weight. The associated secret-key is then defined as the solution of this decoding problem: the error $\vec{e}$. 
Then, to encrypt a single bit~$b \in \{0,1\}$, Alekhnovich proposed to proceed as follows: 
\medskip 
\begin{itemize}\setlength{\itemsep}{5pt}
	\item[$\bullet$] $\mathsf{Enc}(1)$ = $\vec{u}$ where $\vec{u}$ is a uniform vector;
	\item[$\bullet$] $\mathsf{Enc}(0)$ = $\vec{c}^{\perp}+\vec{e}'$ where $\vec{c}^{\perp}$ is a codeword of the {\em dual} (for the canonical inner product) of the code spanned by $\C$ and the noisy codeword $\vec{c} + \vec{e}$ while $\vec{e}'$ is as $\vec{e}$ a sparse vector.
\end{itemize}
\medskip 
To decrypt, an inner product is computed between the ciphertext and the secret-key. 
\medskip 
\begin{itemize}\setlength{\itemsep}{5pt}
	\item[$\bullet$] If $b=1$ has been encrypted, it is the inner product between a small weight vector and a uniform vector;
	\item[$\bullet$] If $b=0$ has been encrypted, it is the inner product between the secret-key and the vector~$\vec{e}'$ as $\vec{c}^{\perp}$ belongs to the dual of the code spanned by $\C$ and the noisy codeword $\vec{c} + \vec{e}$, therefore it belongs to the dual of the code spanned by~$\C$ and the secret-key~$\vec{e}$.
\end{itemize}
\medskip 
If $b=1$, the output value is a uniform bit. On the other hand, if $b=0$, the output is $0$ with high probability as it is the inner product of two sparse vectors. Then, repeating a small amount of times the previous process enables to recover the encrypted bit with overwhelming probability. 
Overall, two elements were critically used in Alekhnovich's encryption schemes: $(i)$ duality and~$(ii)$ the fact that the inner product of two sparse vectors is highly biased toward $0$.

While not efficient 
this scheme still marks a major breakthrough due to its security proof. The security only relies on distinguishing a uniform vector from a noisy codeword (in particular it does not require a ``structured'' decoding algorithm like in original McEliece's approach): problem that was shown to be equivalent to decoding a random code \cite{FS96}, 
the problem upon which the security of any code-based cryptographic scheme aims to be based on.

The same can be said about Regev's approach, the fundamental principle is that the inner product of two vectors of small coefficients modulo $q$ is kept ``small'' modulo $q$, and it still uses duality. It can thus be interpreted as being the same approach as that of Alekhnovich where the ``small Hamming weight'' is replaced by ``small coefficients modulo $q$''.

Of course, because Alekhnovich and Regev' schemes lacked efficiency, many improvements have been sought to make them practical. This started a very long line of work, introducing structures in the schemes to gain efficiency and performances at the price of security reductions, for instance \textsf{HQC} and \textsf{RQC}~\cite{DP12,DMN12,ABD+18,RQC,HQC} in the case of codes or \textsf{Frodo} and \textsf{Kyber} \cite{PVW08,KTX07,Kyber,Frodo} in the case of lattices.
Among these variants of Alekhnovich and Regev' schemes, a metric which is not Hamming or Euclidean has been considered with a certain success: the so-called {\em rank metric}.
\medskip

{\bf \noindent Matrix codes, rank metric and encryption schemes.} Introduced in 1978 by Delsarte~\cite{Delsarte}, matrix codes endowed with the rank metric are now fully part of the cryptographic landscape, either thanks to the now ubiquitous $\MinRank$ problem used for cryptanalysis~\cite{CG00,BFP11,FGP+15,B21,BTV22,SYY22,CMT23,GD24,SFI+25,STV25}, or thanks to the many cryptosystems using~$\Fqm$--linear codes~\cite{GPT,GHPT17,Loidreau,RQC}. Though known for more than $50$ years, building encryption schemes from the rank metric has often proven itself to be a very difficult task, either by following McEliece's framework, or by following Alekhnovich and Regev' approach.

A very promising line of work to design encryption schemes with matrix codes via McEliece's framework started with $\Fqm$--linear codes, codes which turn out to be structured matrix codes. The first family of $\mathbb{F}_{q^{m}}$--linear codes which has been proposed to design an encryption scheme {\em \`a la} McEliece were Gabidulin codes~\cite{G85}. This gave rise to the so-called \textsf{GPT} cryptosystem~\cite{GPT}. However, $\Fqm$--linearity and the particular structure of Gabidulin codes eased the cryptanalysis of this scheme and its many variations, broken by Overbeck in~\cite{O05,O08}. Many other schemes then followed this model, such as Loidreau's scheme~\cite{Loidreau} or \textsf{LowMS} \cite{ADG+}. A last scheme following McEliece's approach is \textsf{ROLLO} \cite{ROLLO}. It still uses $\mathbb{F}_{q^{m}}$--linear codes but not Gabidulin codes. It was instead proposed to use Low Rank Parity-Check  (LRPC) codes.

In 2016, Alekhnovich and Regev' framework was used in the rank metric, when Rank-Quasi-Cyclic (\textsf{RQC}) was introduced \cite{ABD+18}. The approach proved itself to be efficient, albeit at the cost of relying its security to the decoding problem of random $\mathbb{F}_{q^{m}}$--linear codes which are not generic matrix codes. 
A variation was also considered in~\cite{BBBG24}, which led to a very practical scheme. 
Over the years, \textsf{RQC} has been largely improved and has attracted the interest of the community \cite{BBBG24,ABD24,SCZ+}, making it fully part of the landscape of encryption schemes. In the meantime, in 2017, \textsf{RankPKE}, a scheme also relying on $\Fqm$--linearity via Alekhnovich's approach was introduced and used to build an identity-based encryption scheme \cite{GHPT17}. The security of this scheme relied on a variation of the problem to decode a random $\mathbb{F}_{q^{m}}$--linear code: 
the Rank Support Learning  (\textsf{RSL}) problem consisting of several decoding instances where the errors in the different noisy codewords are correlated. Overall, all these schemes attracted cryptanalysis, culminating in several attacks~\cite{AGHT18,BBB+20,BBC+20,BBB+23} that exploited the $\Fqm$--linearity, which has been used each time as part of the trapdoor in the aforementioned schemes.

A reader might now notice that among all these encryption schemes, not a single one used for its security the hardness of decoding a random matrix code, {\it i.e.,} $\MinRank$ problem, but only variations where the underlying codes to decode are particular matrix codes with an additional~$\mathbb{F}_{q^{m}}$--linear structure.
The lack of structure of the $\MinRank$ problem makes it a problem with a very strong security guarantee.
It is only very recently that the first encryption relying on $\MinRank$ was introduced. Following McEliece's framework, \cite{ACD24} managed to build a scheme starting from Gabidulin codes, and then removing their $\Fqm$--linear structure. However, this encryption scheme is very different from Alekhnovich and Regev' encryptions, as the security hypothesis are largely different. This leads us to the following question.
\begin{center}
	\textit{Can we build an encryption scheme following Alekhnovich and Regev' framework based solely \\ on $\minrank$ hardness, {\it i.e.,} the task of decoding a random~$\Fq$--linear matrix code?}
\end{center}
\medskip

{\bf \noindent Our contribution: an encryption scheme relying on stationary-$\MinRank$.} 
Our answer is mostly positive. We succeeded to design an encryption scheme, following Alekhnovich and Regev' framework, where we removed the $\mathbb{F}_{q^{m}}$--linear structure via the canonical  duality for matrix codes and the principle of ``small'' times ``small'' is ``small''. However, the security of our scheme is not directly based on $\minrank$ but on a slight variation:  stationary-$\minrank$ whose analogue in the Hamming metric case has been introduced in~\cite{KPR25}.

To explain how our scheme works and why we don't reduce its security directly to $\minrank$ hardness, one must first understand why natural adaptations of Alekhnovich and Regev' framework to the $\minrank$ setting is a priori doomed to failure. Roughly speaking, an adaptation of this approach fails when using $\MinRank$ as the inner product of two matrices of low rank (which is basically the trace of their product) has no reason to be biased toward $0$. The same goes if we take several $\MinRank$ instances. Taking $\ell_1$ instances in the public key, and $\ell_2$ in the ciphertext enables to build during decryption a $\ell_1 \times \ell_2$ matrix via different inner-products coming from small rank matrices. But then, there is no reason that the resulting matrix is a low rank matrix, as a priori the different inner products would not be related.

However, by taking slightly correlated instances, we actually obtain a working scheme thanks to the following (informal) theorem. 
\begin{theorem}[Informal]
	Let $\ell_1$ matrices $\vec{E}_{i}$'s such that their columns span the same space of small dimension $r$, and let $\ell_2$ matrices $\vec{F}_{j}$'s such that their rows span the same space of small dimension~$d$. Then, the matrix composed of all the~$\ell_1 \times \ell_2$ inner products $\langle \vec{E}_{i},\vec{F}_{j} \rangle$  is of dimension~$\le rd$.
\end{theorem}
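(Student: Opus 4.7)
The strategy is to realize the $\ell_1 \times \ell_2$ matrix $\vec{M} := \bigl(\langle \vec{E}_i, \vec{F}_j \rangle\bigr)_{i,j}$ as a product of two matrices whose common dimension is $rd$, which will automatically bound $\rk(\vec{M}) \le rd$.

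First, I would turn the hypothesis into explicit factorizations. Let $U \subseteq \Fq^m$ be the common column space of the $\vec{E}_i$'s (of dimension $r$) and $V \subseteq \Fq^n$ the common row space of the $\vec{F}_j$'s (of dimension $d$). Fix once and for all a matrix $\vec{U} \in \Fq^{m \times r}$ whose columns form a basis of $U$ and a matrix $\vec{V} \in \Fq^{d \times n}$ whose rows form a basis of $V$. The hypothesis translates exactly into the existence of $\vec{A}_i \in \Fq^{r \times n}$ and $\vec{B}_j \in \Fq^{m \times d}$ such that $\vec{E}_i = \vec{U}\vec{A}_i$ and $\vec{F}_j = \vec{B}_j \vec{V}$ for each $i, j$.

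Next, I would expand the Frobenius-type inner product $\langle \vec{E}_i, \vec{F}_j \rangle = \trace(\vec{E}_i^\top \vec{F}_j)$ and apply cyclic invariance of the trace to isolate the fixed parts:
$$M_{ij} \;=\; \trace\bigl(\vec{A}_i^\top \vec{U}^\top \vec{B}_j \vec{V}\bigr) \;=\; \trace\bigl((\vec{V}\vec{A}_i^\top)(\vec{U}^\top \vec{B}_j)\bigr).$$
The crucial point is that the two middle factors $\vec{V}\vec{A}_i^\top \in \Fq^{d \times r}$ and $\vec{U}^\top \vec{B}_j \in \Fq^{r \times d}$ depend only on $i$ (resp.\ only on $j$) and both live in spaces of dimension $rd$.

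Finally, I would vectorize these factors into vectors $\vec{y}_i, \vec{x}_j \in \Fq^{rd}$ under a consistent ordering chosen so that the trace becomes the usual dot product $M_{ij} = \vec{y}_i^\top \vec{x}_j$. Stacking the $\vec{y}_i$ as rows and the $\vec{x}_j$ as columns yields a factorization $\vec{M} = \vec{Y}\vec{X}$ with $\vec{Y} \in \Fq^{\ell_1 \times rd}$ and $\vec{X} \in \Fq^{rd \times \ell_2}$, from which $\rk(\vec{M}) \le rd$ is immediate. I do not expect any genuine obstacle: the whole content of the proof is the observation that sharing a column (resp.\ row) support forces every inner product to factor through an $rd$-dimensional intermediate space. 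The only care required is to commit to a single pair of bases $(\vec{U}, \vec{V})$ \emph{before} expanding, so that this middle space is the same across all $(i,j)$; everything else is bookkeeping with trace cyclicity and vectorization.
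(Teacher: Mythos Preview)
Your proof is correct and follows essentially the same approach as the paper's: both factor $\vec{E}_i$ and $\vec{F}_j$ through fixed bases of the common column/row supports, then use trace cyclicity to see that each entry $\langle \vec{E}_i,\vec{F}_j\rangle$ depends only on matrices living in an $rd$-dimensional space. The only difference is packaging: you conclude by writing $\vec{M}=\vec{Y}\vec{X}$ as an explicit factorization through $\Fq^{rd}$, whereas the paper argues by contradiction that $rd+1$ columns cannot be independent since the corresponding $r\times d$ matrices are linearly dependent---your direct factorization is arguably cleaner, but the mathematical content is identical.
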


We use this theorem as the core result to design our encryption scheme. The public-key consists now in taking $\ell_1$ instances $\vec{C}_{i}+\vec{E}_i$'s of $\minrank$ where matrices~$\vec{E}_{i}$'s (which are the secret-key) are such that their columns {\em span the same space}. Notice that recovering the secret-key from the public-key does not amount to solve $\minrank$ where independent instances are given, but a variation where different errors are correlated, it is the stationary-$\minrank$ problem. It roughly explains why the security of our scheme does not reduce to $\minrank$. Now to encrypt a bit we proceed as in Alekhnovich and Regev' framework, instead that to encrypt~$b=0$ we produce $\ell_{2}$ instances $\vec{D}_{j}+\vec{F}_{j}$'s of $\minrank$  where the different errors $\vec{F}_{j}$'s are such that their rows {\em span the same space}. By using once again the canonical duality approach, during decryption we compute a list of~$\ell_1 \times \ell_{2}$ inner-products~$\langle \vec{E}_{i},\vec{F}_{j} \rangle$ which in that case gives a small rank matrix! On the other hand, when~$b=1$ has been encrypted, our inner-products are derived from uniform matrices which typically gives a matrix with large rank.

We prove (following Alekhnovich and Regev' proof) that our scheme relies on the decisional version of stationary-$\MinRank$, and give a search-to-decision reduction. As a result, the security of our scheme relies solely on the hardness of stationary-$\MinRank$ problem in its search version, avoiding any additional security assumptions.

Finally, we carefully analyze attacks on our scheme.
We find that these attacks all come down to solving a $\MinRank$ instance, which strongly reinforces the view that the security of our scheme is close to decoding a random matrix code. This enabled us to propose parameter sets for our scheme. We obtain sizes of public-keys and ciphertexts of around~$14$kB each, for a total of~$28$kB. While larger than schemes such as \textsf{HQC}, \textsf{RQC}, or \textsf{Kyber} (now \textsf{ML-KEM}), it is comparable to \textsf{FrodoKEM}, with bigger but reasonable parameters. Finally, our work also opens the door to many other questions, in particular concerning improving the efficiency of the scheme, in the same vein as \textsf{HQC}, \textsf{RQC} and \textsf{Kyber} can be seen as improvements of Alekhnovich' and Regev's encryption schemes.
\medskip

{\bf \noindent Organization of the paper.} We begin by describing some notation and background on matrix codes in Section \ref{sec:preliminaires}. Then, we describe our construction to encrypt one bit in Section \ref{sec:Encstructure}, followed by several bits in Section \ref{sec:MRPKE}. Finally, we detail our considered attacks against our scheme in Section~\ref{sec:security}, which allows us to offer concrete parameters for $\lambda$ bits of security. Parameters, performances, and comparison to other schemes are provided in Section~\ref{sec:parameters}. \section{Notation and Matrix Codes Background} \label{sec:preliminaires}

{\bf \noindent Basic notation.} The notation $x \eqdef y$ means that
$x$ is being defined as equal to~$y$. Let $a < b$ be integers, we let
$[a,b]$ to denote the set of integers $\{a,a+1,\dots,b\}$.

Vectors are in row notation and they will be written with bold letters
such as $\vec{a}$. Uppercase bold letters such as $\vec{A}$ are used
to denote matrices. Let $q$ be a power of a prime number. We let
$\mathbb{F}_{q}$ to denote the finite field of cardinality $q$. Given
integers $m,n$, we let $\mathbb{F}_{q}^{m \times n}$ denote the set of
matrices with $m$ rows and $n$ columns whose entries belong to
$\mathbb{F}_{q}$. Given $\vec{M} \in \mathbb{F}_{q}^{m \times n}$, we let~$\vec{M}(i,j)$ denote its coefficient in position $(i,j)$ while $\mathbf{Col}(\vec{M},j) \in \mathbb{F}_{q}^{m}$ denotes column $j\in [1,n]$ of~$\vec{M}$. Furthermore, we let $\Sp{\vec{M}}$ denote the subspace of $\mathbb{F}_{q}^{m}$ spanned by {\it columns} of $\vec{M}$, {\it i.e.,}
$$
\Sp{\vec{M}} \eqdef \left\{ \sum_{j=1}^{n} \lambda_{j} \mathbf{Col}(\vec{M},j): \; \lambda_{j} \in \mathbb{F}_{q} \right\} . 
$$
We call $\Sp{\vec{M}}$ the {\em column support} of $\vec{M}$. The {\em row support} of $\vec{M}$ is then defined as~$\Sp{\vec{M}^{\top}}$.
Given vectors $\vv_1, \dots, \vv_s$ in a given space, we let
$\mathbf{Span}\left( \vv_1, \dots, \vv_s\right)$ to denote the subspace they span. In particular, given $\vec{B}_1,\dots,\vec{B}_{k} \in \mathbb{F}_{q}^{m \times n}$, notice that $\mathbf{Span}\left(  \vec{B}_{1},\dots,\vec{B}_{k}\right)$ is a $\mathbb{F}_{q}$-subspace of~$\mathbb{F}_{q}^{m \times n}$.

In what follows,
$\mathcal{B}_{t}^{m,n,q}$, 
will denote the ball 
of radius~$t$ around $\vec{0}_{m\times n}$
in~$\mathbb{F}_{q}^{m\times n}$ for the rank metric $|\cdot |$ which
is defined as 
$$
\vec{A} \in \mathbb{F}_{q}^{m \times n}, \; |\vec{A}| \eqdef
\rk(\vec{A}) \ .
$$

We will consider the following canonical inner product over $\mathbb{F}_{q}^{m \times n}$,  
$$
\forall \vec{A},\vec{B}\in \mathbb{F}_{q}^{m \times n}, \; \langle \vec{A}, \vec{B} \rangle \eqdef \tr\left( \vec{A}\vec{B}^{\top} \right).
$$

	{\bf \noindent Matrix codes.} A \emph{matrix code} $\mathcal{C}$ over
	$\mathbb{F}_{q}$ with length $m \times n$ and dimension~$k$ is a
	subspace of dimension $k$ of the vector space
	$\mathbb{F}_{q}^{m \times n}$. We say that it is an~$[m \times n,k]_{q}$-code.  
	Given an $[m \times n,k]_{q}$-code $\mathcal{C}$, its {\em dual} (relatively to the above inner product) is defined as 
	$$
	\mathcal{C}^{\perp}\eqdef \left\{ \vec{B}^{\perp}\in \mathbb{F}_{q}^{m
		\times n}\!\!: \; \forall \vec{C} \in \mathcal{C}, \; \langle 
	\vec{B}^{\perp}, \vec{C} \rangle = 0 \right\}.
	$$
	It defines an $[m \times n, mn-k]_{q}$-code. Furthermore, if $\vec{B}^{\perp}_{1},\dots,\vec{B}^{\perp}_{mn-k}$ denotes a basis of $\mathcal{C}^{\perp}$, then 
	$$
	\mathcal{C} = \left\{ \vec{C} \in \mathbb{F}_{q}^{m\times n}\!\!: \; \forall i \in [1,mn-k], \; \langle  \vec{B}_{i}^{\perp},\vec{C} \rangle= 0 \right\}.
	$$

	{\bf \noindent Probabilistic notation.} For a finite set $\mathcal{E}$, we write $X \Unif \mathcal{E}$ when $X$ is an element of $\mathcal{E}$ drawn uniformly at random. 
	
 \section{$\minrank$-based Encryption \`a la Alekhnovich-Regev} \label{sec:Encstructure}

Our focus in this paper is to design a rank-based encryption scheme following Alekhnovich~\cite{A03} and Regev'~\cite{R05} approach. Our aim is therefore to design an encryption scheme whose security is {\em only} based on the average hardness of the $\minrank$ problem, which is stated in its {\it primal} form as follows. It consists in decoding a random matrix-code, {\em i.e.,} a matrix-code sampled
uniformly at random.

\begin{definition}[$\minrank$, {\it primal representation}]\label{def:minRankPF} Let $m,n,k,t,q$ be integers that are functions of some security
	parameter $\lambda$ and such that $mn \geq k$. Let $\vec{E} \in \mathcal{B}_{t}^{m,n,q}$, $\vec{B}_1, \dots, \vec{B}_{k} \in \mathbb{F}_{q}^{m \times
		n}$ and~$\lambda_1,\dots,\lambda_{k} \in \mathbb{F}_{q}$ be sampled uniformly at
	random. Let,
	$$
	\vec{Y} \eqdef \sum_{\ell=1}^{k} \lambda_{\ell} \vec{B}_{\ell} + \vec{E} \ .
	$$
	The $\minrank(m,n,k,t,q)$
	problem consists, given $\left( \vec{B}_{1},\dots,\vec{B}_{k},\vec{Y}\right)$, in finding~$\vec{E}$.
\end{definition}

\begin{remark}Notice that given $\sum_{\ell} \lambda_{\ell} \vec{B}_{\ell} + \vec{E}$, we ask to recover $\vec{E}$ which has rank $\leq t$. Therefore, in order for this problem to be meaningful, parameter $t$ has to be small enough to ensure  with overwhelming probability the unicity of $\vec{E}$. It is sufficient and necessary to choose $t$
	below the so-called {\em Gilbert-Varshamov} radius which is for instance when $m=n$ asymptotically equivalent~\cite{courtois,L06} to~$
m \left( 1 - \sqrt{\frac{k}{m^{2}}} \right) \ .
	$
\end{remark}

It turns out that $\minrank$ also admits an equivalent {\em dual} form. By equivalent, we mean that from any solver (on average) of the primal representation of $\minrank$, we can deduce a solver of its dual form with the same probability of success (up to an exponentially small factor) and working with the same amount of time (up to polynomial factors) and reciprocally. The key fact to prove this result is that given a random matrix code\footnote{Random matrix codes are defined as matrix codes whose basis has been sampled uniformly at random.}, then its dual is still a random matrix code.  The dual version of $\minrank$ is stated as follows. We have chosen to state it as we will be switching between both forms throughout the paper.

\begin{definition}[$\minrank$, {\it dual representation}]\label{def:minRankDF} Let $m,n,k,t,q$ be integers that are functions of some security
	parameter $\lambda$ and such that $mn \geq k$. Let $\vec{E} \in \mathcal{B}_{t}^{m,n,q}$  and
	$\vec{B}^{\perp}_1, \dots, \vec{B}^{\perp}_{mn - k} \in \mathbb{F}_{q}^{m \times
		n}$ be sampled uniformly at
	random and
	$$
	\vec{s} \eqdef \left( \langle \vec{B}_{\ell}^{\perp}, \vec{E} \rangle
	\right)_{\ell= 1}^{ mn-k} .
	$$
	The $\minrank(m,n,k,t,q)$
	problem consists, given
	$(\vec{B}^{\perp}_{1},\dots,\vec{B}^{\perp}_{mn-k},\vec{s})$, in finding~$\vec{E}$.
\end{definition}
\medskip 

{\bf \noindent Encrypting one bit via $\minrank$ hardness.}
Alekhnovich~\cite{A03} and Regev'~\cite{R05} encryptions are both based on the fact that public keys are defined as an instance of decoding a random linear code (for codes endowed with the Hamming metric) and $\mathsf{LWE}$ problems while secret-keys are their associated solution. It is therefore tempting to define a public-key of our scheme as~$(\vec{B}_{\ell})_{\ell}$,~$\vec{Y} \eqdef \sum_{\ell} \lambda_{\ell} \vec{B}_{\ell} + \vec{E}$ and its associated secret-key as~$\vec{E}$. By doing so, if one wishes to encrypt one bit following \cite{A03,R05}, one proceeds as follows.
\medskip 
\begin{itemize}\setlength{\itemsep}{5pt}
	\item To encrypt $b = 1$: output $\vec{U}$ a uniform matrix with the same size than $\vec{Y}$,

	\item To encrypt $b = 0$: output $\vec{C}^{\perp}+\vec{F}$ where $\vec{C}^{\perp}$ has been sampled uniformly at random in the {\em dual} of the code spanned by the public key, {\it i.e.,} $\left( \vec{B}_{\ell} \right)_{\ell}$ and~$\vec{Y}$.
\end{itemize}
\medskip  
Given a cipher $\vec{Z}$, to decrypt we compute the following inner-product  
$$
\langle \vec{Z},\vec{E} \rangle = \left\{ \begin{array}{cl} 
	\langle \vec{U},\vec{E} \rangle & \mbox{ if $b=1$} \\
	\langle \vec{F}, \vec{E} \rangle & \mbox{ if $b=0$} 
\end{array}\right.  
$$ 
where in the second equality we used that $\vec{C}^{\perp}$ belong to the {\em dual} of the matrix code spanned by~$\left( \vec{B}_{\ell} \right)_{\ell}$ and $\vec{Y} = \sum_{\ell} \lambda_{\ell} \vec{B}_{\ell}+\vec{E}$, therefore it belongs to the {\em dual} of the matrix code spanned by~$\left( \vec{B}_{\ell} \right)_{\ell}$ and~$\vec{E}$. Alekhnovich and Regev' idea is that this inner-product should have a distribution strongly correlated to  the encrypted bit. In particular, it should be uniform when $b=1$ has been encrypted and ``small'' otherwise. By small we mean that repeating the operation a certain number of times produces a vector with small norm. However, in our case ``small'' should mean a matrix with small rank.

This discussion motivated us to introduce the variant of Alekhnovich and Regev' encryption where a list of $\ell_1\geq 1$ instances of $\minrank$ are given as public-key 
$$
\left( \left( \vec{B}_{\ell}^{(j)} \right)_{\ell}, \vec{Y}^{(j)} \eqdef \sum_{\ell} \lambda_{\ell} \vec{B}_{\ell}^{(j)} + \vec{E}^{(j)}  \right)_{j \in [1,\ell_1]} \ .
$$ 
The secret-key is then the collection of the $\vec{E}^{(j)}$'s. Now to encrypt a bit we simply proceed as above, but this times we repeat the process $\ell_2\geq 1$ times according to the bit we wish to encrypt. For instance, to encrypt $b = 0$, we output
$$
\left( \vec{Z}^{(i)} \eqdef \vec{C}_{i}^{\perp} + \vec{F}^{(i)} \right)_{i \in [1,\ell_2]}
$$
where the $\vec{C}_{i}^{\perp}$'s are sampled uniformly at random in the dual of a matrix-code (exhibited below) which is obtained from the public-key. Now to decrypt we compute the list of inner-products~$\langle \vec{Z}^{(i)},\vec{E}^{(j)} \rangle$'s for~$j \in [1,\ell_1]$ and~$i \in [1,\ell_2]$ to form {\em a matrix} of size $\ell_2 \times \ell_1$. They are for instance given by 
$$
\left( \langle \vec{C}_{i}^{\perp},\vec{E}^{(j)} \rangle + \langle \vec{F}^{(i)},\vec{E}^{(j)} \rangle \right)_{i \in [1,\ell_2], j \in [1,\ell_1]}
$$
in the case where $b = 0$ has been encrypted. However, notice that the $\langle \vec{C}_{i}^{\perp},\vec{E}^{(j)} \rangle$'s have no reason to vanish like in Alekhnovich and Regev' encryption. It is why if one wishes to encrypt $b = 0$, one has to draw uniformly at random $\vec{C}_{i}^{\perp}$ in the {\em dual} of the code spanned by 
$$
\left( \left( \vec{B}_{\ell}^{(j)} \right)_{\ell}, \vec{Y}^{(j)} \right)_{j \in [1,\ell_1]} \ .
$$ 
In particular, $\vec{C}_{i}^{\perp}$ belongs to the dual of the sum over $j$ of the codes spanned by $\left( \vec{B}_{\ell}^{(j)}\right)_{\ell}$ and the~$\vec{Y}^{(j)}$'s which are given by the public-key. Notice that this (sum) code contains all the $\vec{E}^{(j)}$'s, {\it i.e.,} the secret-key. Therefore the~$\langle \vec{C}_{i}^{\perp},\vec{E}^{(j)} \rangle$'s are all equal to $0$ for $j \in [1,\ell_1]$. In other words, during decryption, where we compute the list of inner-products~$\left( \langle \vec{Z}^{(i)}, \vec{E}^{(j)} \rangle\right)$'s, we obtain the following matrix
$$
\left\{ \begin{array}{cl} 
	\left( \langle \vec{U}^{(i)},\vec{E}^{(j)} \rangle \right)_{i \in [1,\ell_2], j \in [1,\ell_1]} & \mbox{ if $b=1$} \\
	\left( \langle \vec{F}^{(i)}, \vec{E}^{(j)} \rangle\right)_{i \in [1,\ell_2], j \in [1,\ell_1]} & \mbox{ if $b=0$} 
\end{array}\right.  
$$  
where the $\vec{U}^{(i)}$'s are uniform matrices. In particular, when $ b = 1$ has been encrypted, we typically obtain a full-rank matrix. On the other hand, following Alekhnovich and Regev' idea, we should obtain a matrix with small rank when~$b = 0$ has been encrypted. But there are no reason to achieve this. Following once again Alekhnovich and Regev' approach leads to choose the $\vec{E}^{(j)}$'s and~$\vec{F}^{(i)}$'s with {\em small rank}. But this does {\em not} imply that $\left( \langle \vec{F}^{(i)}, \vec{E}^{(j)} \rangle \right)_{i, j}$ is typically a small rank matrix. 

Our discussion seems to suggest that following  Alekhnovich and Regev' approach in the $\minrank$ case is doomed to failure. But, surprisingly, a small variation enables to ensure a small rank matrix when $b = 0$ has been encrypted. As shown in Theorem~\ref{theo:fund}, if the $\vec{E}^{(j)}$'s have small rank and {\em the same column support} and in addition, the $\vec{F}^{(i)}$'s also have a small rank and {\em the same row support}, then $\left( \langle \vec{F}^{(i)},\vec{E}^{(j)} \rangle \right)_{i, j}$ {\em is} a small rank matrix!

Therefore, we just need to impose a constraint on column and row supports of errors during key generation and encryption to achieve Alekhnovich and Regev' approach with matrix codes.   However, doing so means that security no longer relies on $\minrank$ average hardness. But as discussed later (see Theorem~\ref{theo:secu}), the security still reduces to one search problem which turns out to be a slight variation of $\minrank$: the so-called {\it stationary}-$\minrank$ problem given in Definition~\ref{def:stMinrank}. It roughly consists in $\minrank$ where we are given multiple instances with independent random matrix codes but with errors sharing the same unknown column support.

\begin{theorem}\label{theo:fund}
	Let $q,m,n,r,d,\ell_1,\ell_2$ be integers such that $m\geq n > r \geq d$. Let~$A\subseteq \mathbb{F}_{q}^{m}$ and~$B \subseteq \mathbb{F}_{q}^{n}$ be two subspaces with dimensions $r$ and $d$ respectively. Let~$\vec{A}_1,\dots, \vec{A}_{\ell_1} \in \mathbb{F}_{q}^{m\times n}$ and~$\vec{B}_{1},\dots, \vec{B}_{\ell_2} \in \mathbb{F}_{q}^{m \times n}$ such that 
	$$
	\forall j \in [1,\ell_1], \;\; \Sp{\vec{A}_{j}} = A \quad \mbox{and} \quad \forall i \in [1,\ell_2], \;\;  \Sp{\vec{B}_{i}^{\top}} = B \ .
	$$ 
	Let, $\vec{D} \in \mathbb{F}_{q}^{\ell_2 \times \ell_1}$ such that $\vec{D}(i, j) \eqdef \langle \vec{A}_{j},\vec{B}_{i}\rangle$. Then,
	$$
	| \vec{D}| \leq \min\left(rd, \ell_1,\ell_2 \right) \ .
	$$
\end{theorem}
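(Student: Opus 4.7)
The plan is to exploit the common column support of the $\vec{A}_j$'s and the common row support of the $\vec{B}_i$'s in order to factor each matrix through a small intermediate space, so that every entry of $\vec{D}$ becomes the standard inner product of two vectors living in a space of dimension $rd$.

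First, I would fix a matrix $\vec{U} \in \mathbb{F}_q^{m\times r}$ whose columns form a basis of $A$ and a matrix $\vec{V} \in \mathbb{F}_q^{n\times d}$ whose columns form a basis of $B$. Since $\Sp{\vec{A}_j} = A$, each column of $\vec{A}_j$ lies in $A$, so there exists $\vec{X}_j \in \mathbb{F}_q^{r\times n}$ with $\vec{A}_j = \vec{U} \vec{X}_j$. Similarly, since $\Sp{\vec{B}_i^{\top}} = B$ means the rows of $\vec{B}_i$ lie in $B$, there exists $\vec{Y}_i \in \mathbb{F}_q^{m\times d}$ with $\vec{B}_i = \vec{Y}_i \vec{V}^{\top}$.

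Next, I would plug these factorizations into the inner product:
$$
\langle \vec{A}_j, \vec{B}_i \rangle = \tr\!\bigl(\vec{U}\vec{X}_j \vec{V}\vec{Y}_i^{\top}\bigr) = \tr\!\bigl(\vec{Y}_i^{\top}\vec{U} \cdot \vec{X}_j \vec{V}\bigr),
$$
using the cyclic invariance of the trace. Setting $\vec{M}_j \eqdef \vec{X}_j \vec{V} \in \mathbb{F}_q^{r\times d}$ and $\vec{N}_i \eqdef \vec{U}^{\top}\vec{Y}_i \in \mathbb{F}_q^{r\times d}$, this rewrites as $\vec{D}(i,j) = \tr(\vec{N}_i^{\top}\vec{M}_j) = \langle \vec{M}_j, \vec{N}_i\rangle$, the inner product being now taken in the space $\mathbb{F}_q^{r\times d}$ of dimension $rd$.

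Finally, flattening each $\vec{M}_j$ and each $\vec{N}_i$ into a vector of $\mathbb{F}_q^{rd}$ and assembling them into matrices $\vec{M} \in \mathbb{F}_q^{\ell_1 \times rd}$ (rows $\vec{m}_j$) and $\vec{N} \in \mathbb{F}_q^{\ell_2 \times rd}$ (rows $\vec{n}_i$), one obtains $\vec{D} = \vec{N}\vec{M}^{\top}$. The rank of such a product is bounded by each of its three dimensions, which gives exactly $|\vec{D}| \le \min(rd,\ell_1,\ell_2)$.

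There is no real obstacle here; the only non-obvious move is recognising that the common column and row supports let one push the trace through a factorisation of size $r\times d$, after which the inequality is just the elementary rank bound for a product of two rectangular matrices.
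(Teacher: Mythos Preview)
Your proof is correct and follows essentially the same route as the paper: factor $\vec{A}_j = \vec{U}\vec{X}_j$ and $\vec{B}_i = \vec{Y}_i\vec{V}^{\top}$, apply cyclicity of the trace, and observe that the relevant objects live in the $rd$-dimensional space $\mathbb{F}_q^{r\times d}$. Your concluding step, writing $\vec{D} = \vec{N}\vec{M}^{\top}$ with inner dimension $rd$ and invoking the rank bound for a product, is in fact a little cleaner than the paper's equivalent argument, which reaches the same conclusion via a contradiction on linear independence of columns.
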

\begin{proof} 
	By definition it exists $\vec{A} \in \mathbb{F}_{q}^{m\times r}$ and $\vec{B} \in \mathbb{F}_{q}^{n \times d}$ such that 
$$
\forall j \in [1,\ell_1], \; \exists \vec{P}_{j} \in \mathbb{F}_{q}^{r \times n}:\; \vec{A}_{j} = \vec{A}\vec{P}_{j}, \quad  
\forall i \in [1,\ell_2], \; \exists \vec{Q}_{i} \in \mathbb{F}_{q}^{d \times m}: \; \vec{B}_{i}^{\top} = \vec{B}\vec{Q}_{i}
$$
We deduce that,
	$$
	\vec{D}(i,j)=\tr\left( \vec{A}_{j}\vec{B}_{i}^{\top} \right) = \tr\left( \vec{A}\vec{P}_{j}\vec{B}\vec{Q}_{i} \right) = \tr\left(\vec{P}_{j}\vec{B}\vec{Q}_{i}\vec{A} \right) 
	$$
	where in the last equality we used the cyclicity of the trace. 
	Notice now that $\vec{P}_{j}\vec{B} \in \mathbb{F}_{q}^{r \times d}$. Therefore, it exists at most $rd$ matrices $\vec{P}_{j}$'s such that the $\vec{P}_{j}\vec{B}$'s are linearly independent. Suppose now that $|\vec{D}| > rd$ and~$\ell_2 > rd$. Then it exists $rd+1$ columns which are linearly independent:
	$$
	\mathbf{Col}(\vec{D},j_1) = \tr\left(\vec{P}_{j_1}\vec{B}\vec{Q}_{i}\vec{A} \right)_{i} , \dots, \mathbf{Col}(\vec{D}, j_{rd+1}) = \tr\left(\vec{P}_{j_{rd+1}}\vec{B}\vec{Q}_{i}\vec{A} \right)_{i} \ .
	$$
	But we know that it exists a non-zero $(\lambda_{j_1},\dots,\lambda_{j_{rd+1}}) \in \mathbb{F}_{q}^{rd+1}$ such that $\sum_{a = 1}^{rd+1} \lambda_{j_1}\vec{P}_{j_a}\vec{B} = \vec{0}$ which contradicts the linear independence of the $\mathbf{Col}(\vec{D}, j_a)$'s.  
The same reasoning holds for rows of $\vec{D}$ which concludes the proof.\iftoggle{llncs}{\qed}{}
\end{proof}

Our public-key encryption scheme is described in Figure~\ref{algo:scheme}. In Proposition~\ref{propo:correctDec} we show that decryption is successful with overwhelming probability if $rd$ is sufficiently small compared to~$\ell_1$ and $\ell_2$. 

\begin{figure}[h!]
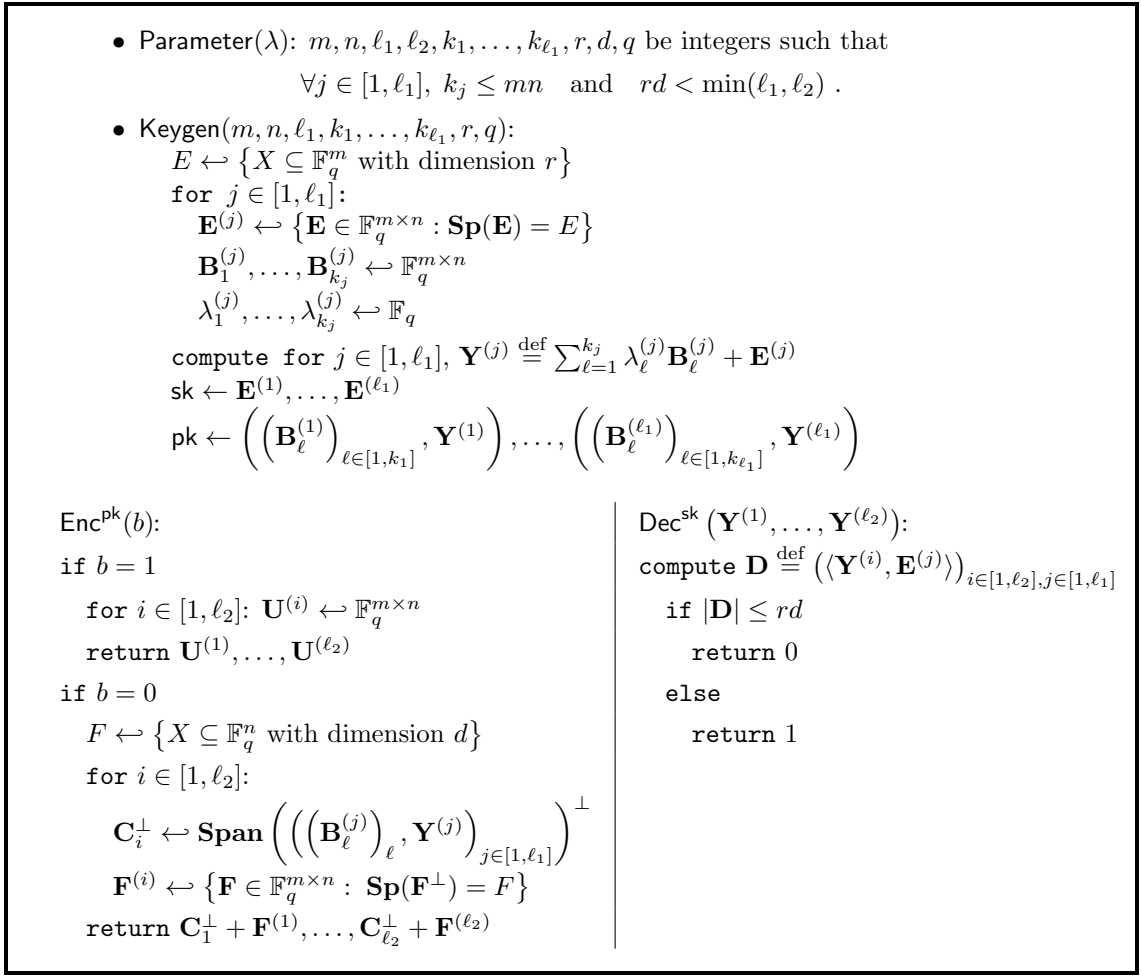

	
	\begin{tcolorbox}[colframe=black, colback=white, boxrule=0.5mm, sharp corners]
		\begin{itemize}\setlength{\itemsep}{5pt}
			\item $\mathsf{Parameter}(\lambda)$: $m,n,\ell_1,\ell_2, k_1,\dots,k_{\ell_1},r,d,q$ be integers such that 
			$$
			\forall j \in [1,\ell_1], \; k_{j} \leq mn \quad \mbox{and} \quad rd < \min(\ell_1,\ell_2) \ .
			$$

			\item$\mathsf{Keygen}(m,n,\ell_1, k_1,\dots,k_{\ell_1}, r, q)$:

			$E \Unif \left\{ X \subseteq \mathbb{F}_{q}^{m} \mbox{ with dimension $r$} \right\}$
			
			\texttt{for $j \in [1,\ell_1]$:}

			\quad $\vec{E}^{(j)} \Unif \left\{ \vec{E} \in \mathbb{F}_{q}^{m\times n} : \Sp{\vec{E}} = E \right\}$

			\quad $\vec{B}_1^{(j)},\dots, \vec{B}_{k_{j}}^{(j)} \Unif \mathbb{F}_{q}^{m \times n}$

			\quad $\lambda_{1}^{(j)},\dots,\lambda_{k_{j}}^{(j)} \Unif \mathbb{F}_{q}$

			\texttt{compute for} $j \in [1,\ell_1]$, $\vec{Y}^{(j)} \eqdef  \sum_{\ell=1}^{k_{j}} \lambda_{\ell}^{(j)}\vec{B}_{\ell}^{(j)} + \vec{E}^{(j)}$
			
			$\mathsf{sk} \leftarrow \vec{E}^{(1)},\dots,\vec{E}^{(\ell_1)}$

			$\mathsf{pk} \leftarrow \left( \left( \vec{B}_{\ell}^{(1)}\right)_{\ell \in [1,k_1]}, \vec{Y}^{(1)} \right), \dots,  \left( \left( \vec{B}_{\ell}^{(\ell_1)}\right)_{\ell \in [1,k_{\ell_1}]}, \vec{Y}^{(\ell_1)} \right)$
		\end{itemize}
		\medskip

		\begin{center}
			\renewcommand{\arraystretch}{1.3}
			\begin{tabular}{l@{\hspace{3mm}}|@{\hspace{3mm}}l}
				$\mathsf{Enc}^{\mathsf{pk}}(b)\!\!: \qquad \qquad \qquad$ & $\mathsf{Dec}^{\mathsf{sk}}\left(\vec{Y}^{(1)},\dots,\vec{Y}^{(\ell_2)}\right)\!\!:$ \\
				\texttt{if} $b = 1$ & \texttt{compute} $\vec{D} \eqdef \left( \langle \vec{Y}^{(i)},\vec{E}^{(j)} \rangle\right)_{i\in [1,\ell_{2}],j\in [1,\ell_1]}$ \\
\quad \texttt{for} $i\in [1,\ell_2]$: $\vec{U}^{(i)}\Unif \mathbb{F}_{q}^{m \times n}$  & \quad \texttt{if} $|\vec{D}| \leq rd$\\ 
				\quad \texttt{return} $\vec{U}^{(1)},\dots,\vec{U}^{(\ell_2)}$ &  \qquad \texttt{return} 0 \\
				\texttt{if} $b=0$ & \quad \texttt{else}		\\	
			\quad $F \Unif \left\{ X \subseteq \mathbb{F}_{q}^{n} \mbox{ with dimension $d$} \right\}$ & \qquad \texttt{return} 1 \\ 
				\quad \texttt{for} $i \in [1,\ell_2]$:&   \\
				\qquad $\vec{C}_i^{\perp} \Unif \mathbf{Span} \left( \left( \left( \vec{B}_{\ell}^{(j)}\right)_{\ell}, \vec{Y}^{(j)} \right)_{j \in [1,\ell_1]}\right)^{\perp}$ & \\

				\qquad $\vec{F}^{(i)} \Unif \left\{ \vec{F} \in \mathbb{F}_{q}^{m \times n}: \; \Sp{\vec{F}^{\perp}} =F \right\}$ & \\
				\quad\texttt{return} $\vec{C}_1^{\perp} + \vec{F}^{(1)}, \dots, \vec{C}_{\ell_2}^{\perp} + \vec{F}^{(\ell_2)}$ & 
			\end{tabular} 
\end{center} 
	\end{tcolorbox}
	\caption{Our first $\minrank$-based public-key encryption scheme.}\label{algo:scheme}
\end{figure} 

\begin{proposition}[Correctness of decryption]\label{propo:correctDec} 
	Decryption in 
Figure \ref{algo:scheme} fails with probability 
	$$
	O\left( q^{\min\left( rd+1-\ell_2, r(\ell_1-n)\right)} \right)
	$$
\end{proposition}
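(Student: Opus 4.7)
I would treat the two encryption modes separately. When $b = 0$, decryption succeeds with probability $1$ by duality and Theorem~\ref{theo:fund}. Indeed, since $\vec{C}_i^\perp$ is drawn from the dual of $\mathbf{Span}\bigl(((\vec{B}_\ell^{(j)})_\ell, \vec{Y}^{(j)})_{j \in [1,\ell_1]}\bigr)$, and $\vec{E}^{(j)} = \vec{Y}^{(j)} - \sum_\ell \lambda_\ell^{(j)} \vec{B}_\ell^{(j)}$ belongs to this same primal code, $\langle \vec{C}_i^\perp, \vec{E}^{(j)}\rangle = 0$. The decryption matrix therefore simplifies to $\vec{D}(i,j) = \langle \vec{F}^{(i)}, \vec{E}^{(j)}\rangle$, and Theorem~\ref{theo:fund} applied to the $\vec{E}^{(j)}$'s (common column support of dimension $r$) and the $\vec{F}^{(i)}$'s (common row support of dimension $d$) forces $|\vec{D}| \leq rd$ deterministically.

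When $b = 1$, failure is exactly the event $|\vec{D}| \leq rd$ with $\vec{D}(i, j) = \langle \vec{U}^{(i)}, \vec{E}^{(j)}\rangle$ and $\vec{U}^{(i)}$'s i.i.d.\ uniform in $\mathbb{F}_q^{m \times n}$. The plan is to re-express $\vec{D}$ as a bilinear form over $\mathbb{F}_q^{rn}$ and then control its rank in two complementary ways. Let $\vec{A} \in \mathbb{F}_q^{m \times r}$ have columns forming a basis of the shared column support $E$, and write $\vec{E}^{(j)} = \vec{A}\vec{P}^{(j)}$ with $\vec{P}^{(j)} \in \mathbb{F}_q^{r \times n}$ of full row rank. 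Cyclicity of the trace yields
$$
\vec{D}(i,j) = \tr\bigl(\vec{A}^\top \vec{U}^{(i)} (\vec{P}^{(j)})^\top\bigr) = \langle \vec{V}^{(i)}, \vec{P}^{(j)}\rangle,
$$
where $\vec{V}^{(i)} \eqdef \vec{A}^\top \vec{U}^{(i)} \in \mathbb{F}_q^{r \times n}$ is uniform because $\vec{A}^\top$ has full row rank $r$. Flattening the $\vec{V}^{(i)}$'s and $\vec{P}^{(j)}$'s into rows of matrices $\vec{W} \in \mathbb{F}_q^{\ell_2 \times rn}$ and $\vec{P} \in \mathbb{F}_q^{\ell_1 \times rn}$, one obtains $\vec{D} = \vec{W}\vec{P}^\top$, with $\vec{W}$ exactly uniform and the rows of $\vec{P}$ drawn uniformly subject only to each $r \times n$ block being full row rank.

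I would then bound $\Pr[|\vec{W}\vec{P}^\top| \leq rd]$ in two complementary ways, giving rise to the two exponents in the $\min$. First, conditional on $\vec{P}$ having the generic row rank $\min(\ell_1, rn)$ (which holds up to a negligible event), the rows of $\vec{D}$ are i.i.d.\ uniform in $\mathbb{F}_q^{\ell_1}$, so by a row-at-a-time argument the probability that the $\ell_2 \geq rd+1$ rows all lie in a subspace of dimension $\leq rd$ is at most $q^{rd - \ell_2} = O(q^{rd + 1 - \ell_2})$; this is meaningful only when $\ell_2 > rd$. Second, I would extract a structural bound from the specific shape of the $\vec{P}^{(j)}$'s: each of them consists of $r$ rows in $\mathbb{F}_q^n$, and the probability that such an $r$-tuple of rows lies in a fixed $\ell_1$-dimensional subspace of $\mathbb{F}_q^n$ is exactly $q^{r(\ell_1 - n)}$; combining this with a Gaussian-binomial count over candidate subspaces bounds the probability of a "bad alignment" of the $\vec{P}^{(j)}$'s that would force a rank drop of $\vec{D}$, and yields the exponent $r(\ell_1 - n)$ in the regime $\ell_1 < n$.

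The main obstacle I expect is the second bound: the first is a routine uniform-matrix calculation, but making the exponent $r(\ell_1 - n)$ appear cleanly requires a careful argument connecting a bad configuration of the $\ell_1$ row-subspaces of the $\vec{P}^{(j)}$'s inside $\mathbb{F}_q^n$ to an actual rank deficiency of $\vec{W}\vec{P}^\top$, and controlling that event via Gaussian-binomial coefficient estimates.
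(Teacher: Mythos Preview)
Your treatment of the $b=0$ case and of the first exponent $rd+1-\ell_2$ is essentially the paper's argument, recast through the factorisation $\vec{D}=\vec{W}\vec{P}^{\top}$. The paper argues directly on columns: if $|\vec{D}|\le rd$ then columns $1,\dots,rd+1$ of $\vec{D}$ are linearly dependent, i.e.\ there is a nonzero $(\lambda_1,\dots,\lambda_{rd+1})$ with $\langle \vec{U}^{(i)},\sum_j\lambda_j\vec{E}^{(j)}\rangle=0$ for every $i$; provided $\sum_j\lambda_j\vec{E}^{(j)}\neq 0$ this holds with probability $q^{-\ell_2}$ over the uniform $\vec{U}^{(i)}$'s, and a union bound over the $\lambda$'s yields $q^{rd+1-\ell_2}$. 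Your route via ``$\vec{D}$ is a uniform $\ell_2\times\ell_1$ matrix once $\vec{P}$ has full row rank'' lands at the same place.

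Where you go astray is the second exponent. In the paper, $r(\ell_1-n)$ is \emph{not} extracted from a separate geometric ``row-alignment'' argument; it is simply (a loose upper bound on) the probability that the $\vec{E}^{(j)}$'s fail to be linearly independent --- precisely the ``negligible event'' you already conditioned away when you required $\vec{P}$ to have full row rank. In your language, $\vec{P}\in\mathbb{F}_q^{\ell_1\times rn}$ having row rank $\ell_1$ is the same as the $\vec{P}^{(j)}$'s (hence the $\vec{E}^{(j)}$'s) being linearly independent in $\mathbb{F}_q^{r\times n}$, and the probability this fails is $O(q^{\ell_1-rn})$, for which $O(q^{r(\ell_1-n)})$ is the looser bound the paper records. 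The proof is then just
\[
\Pr[\text{fail}]\ \le\ \Pr[\text{fail}\mid \vec{E}^{(j)}\text{'s lin.\ indep.}]+\Pr[\vec{E}^{(j)}\text{'s lin.\ dep.}]\ \le\ q^{rd+1-\ell_2}+O\bigl(q^{r(\ell_1-n)}\bigr).
\]
Your plan to ``connect a bad configuration of the $\ell_1$ row-subspaces inside $\mathbb{F}_q^n$ to a rank deficiency of $\vec{W}\vec{P}^{\top}$'' via Gaussian-binomial counts is therefore unnecessary, and the event you sketch (all $r$ rows of a single $\vec{P}^{(j)}$ lying in a fixed $\ell_1$-dimensional subspace of $\mathbb{F}_q^n$) is neither necessary nor sufficient for a rank drop of $\vec{D}$, so that line would likely not close.
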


\begin{proof}
	Let $\vec{D} \in \mathbb{F}_{q}^{\ell_1 \times \ell_2}$ such that 
	$
	\forall i \in [1,\ell_2], \; \forall j \in [1,\ell_1], \; \vec{D}(i,j) \eqdef \langle \vec{U}^{(i)}, \vec{E}^{(j)} \rangle 
	$
	where the $\vec{U}^{(i)}\in \mathbb{F}_{q}^{m \times n}$ are independent and uniformly distributed and the $\vec{E}^{(j)}$s are fixed matrices computed during keys generation. Notice that if decryption fails, then it means that $\vec{D}$ has rank~$\leq rd < \min(\ell_1,\ell_{2})$ (if $b = 0$ has been encrypted, decryption is always successful as ensured by Theorem~\ref{theo:fund}). Therefore, decryption fails if it exists non-zero $(\lambda_1, \dots, \lambda_{rd+1}) \in \mathbb{F}_{q}^{rd+1}$ such that 
	$$
	\sum_{j=1}^{rd+1} \lambda_j\mathbf{Col}(\vec{D},j) = \sum_{j=1}^{rd+1} \lambda_{j} \left( \langle \vec{U}^{(i)}, \vec{E}^{(j)} \rangle \right)_{i \in [1,\ell_2]} =  \vec{0} 
	$$
	which implies by bilinearity of the inner product,
	$$
	\forall i \in [1,\ell_2], \; \langle \vec{U}^{(i)}, \sum_{j=1}^{rd+1}\lambda_{j}\vec{E}^{(j)} \rangle = 0 \ . 
	$$
	Let us suppose that the $\vec{E}^{(j)}$'s are linearly independent. The above equations for $i \in [1,\ell_2]$ and any non-zero $(\lambda_1, \dots, \lambda_{rd+1}) \in \mathbb{F}_{q}^{rd+1}$ are verified with probability~$
	\frac{1}{q^{\ell_2}}
	$ 
	as the $\vec{U}_{i}$'s are independent and uniform. By union bound we deduce that decryption fails with probability $\leq q^{rd+1-\ell_2}$ if the $\vec{E}_{j}$'s are linearly independent. Notice now that it happens with probability~$1-O\left( q^{r(\ell_1-n)} \right)$ during key generation. It concludes the proof.\iftoggle{llncs}{\qed}{} 
\end{proof}

{\bf \noindent About the security.} One may naturally argue that the security of our scheme in Figure~\ref{algo:scheme} does not a priori reduce to $\minrank$ as shown by the simple fact that computing a secret-key from a public-key does not amount to solve $\minrank$. Indeed, such computation reduces to solve the following {\it stationary}-$\minrank$ problem which has been introduced in Hamming metric in~\cite{KPR25}.

\begin{definition}[stationary-$\minrank$]\label{def:stMinrank} Let $m,n, N, k_1,\dots,k_N,t,q$ be integers which are functions of some security
	parameter $\lambda$ and such that $mn \geq k_j$ for all~$j \in [1,N]$.  Let
	$\vec{B}^{(j)}_1, \dots, \vec{B}^{(j)}_{k_j} \in \mathbb{F}_{q}^{m \times
		n}$ and~$\lambda^{(j)}_1,\dots,\lambda^{(j)}_{k_j} \in \mathbb{F}_{q}$ for $j \in [1,N]$ be sampled uniformly at
	random. Let $E \subseteq \mathbb{F}_{q}^{m}$ be a random subspace with dimension~$t$ and $\vec{E}^{(j)} \in \left\{ \vec{X} \in \mathbb{F}_{q}^{m \times n}: \; \Sp{\vec{X}}= E \right\}$ be uniformly distributed  for~$j \in [1,N]$. Let,
	$$
	\forall j \in [1,N], \; \vec{Y}^{(j)} \eqdef \sum_{i=1}^{k_j} \lambda^{(j)}_{i} \vec{B}^{(j)}_{i} + \vec{E}^{(j)} .
	$$
	The stationary-$\minrank(m,n,N,(k_i)_{i \in [1,N]},t,q)$
	problem consists, given $\left( \left( \vec{B}^{(j)}_{\ell}\right)_{\ell \in [1,k_j]},\vec{Y}^{(j)}\right)_{j}$, in finding $E$.
\end{definition}

\begin{remark}
	A {\it stationary}-$\minrank$ instance consists in multiple $\minrank$ instances where we are given noisy codewords from independent random matrix codes but with correlated errors.  
\end{remark}

Though breaking our encryption scheme by computing the secret-key from the knowledge of the public-key amounts to solve the above (slight) variation of $\minrank$, it does not show that
our scheme security reduces to it. In particular, an attacker could basically seek to distinguish uniform matrices from multiple $\vec{C}_{i}^{\perp} + \vec{F}_{i}$'s where the $\vec{F}_{i}$'s have the same row support and the~$\vec{C}_{i}^{\perp}$'s are codewords from the {\em same} code. Fortunately, the security of our scheme reduces to stationary-$\minrank$ as shown by the following theorem. 
\begin{theorem}\label{theo:secu}
	Consider an attacker against the public-key encryption scheme described in Figure~\ref{algo:scheme}. Suppose that this attacker extracts an encrypted bit in time $T$ with probability $1/2+\varepsilon$ by using knowledge of the public-key.

	Then, there exists an algorithm which solves stationary-$\minrank$ for parameters (with $\sum_{j=1}^{\ell_1} k_j+\ell_1= \sum_{i=1}^{\ell_2} k'_i$)
	$$
	(m,n,\ell_1,(k_i)_{i \in [1,\ell_1]},r,2)\quad \mbox{ or } \quad (m,n,\ell_{2},(mn/\ell_2-k'_i)_{i \in [1,N]},d,2) 
	$$ 
	working in time $O\left( \ell(mn)^{2} \log_2^{3}(\ell/\varepsilon) T\right)$ with probability $\Omega\left(\frac{\varepsilon^{2}}{\ell^{2}}\right)$ where $\ell \eqdef \max(\ell_1,\ell_2)$. 
\end{theorem}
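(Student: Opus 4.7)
My plan is to follow Alekhnovich--Regev's template in two stages: a hybrid argument reducing bit extraction to decisional stationary-$\minrank$, followed by a search-to-decision reduction for stationary-$\minrank$ itself.

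First stage: I would introduce four hybrids interpolating between the two encryption distributions. $H_0 = (\mathsf{pk},\mathsf{Enc}^{\mathsf{pk}}(0))$; $H_1 = (\widetilde{\mathsf{pk}}, \mathsf{Enc}^{\widetilde{\mathsf{pk}}}(0))$ where $\widetilde{\mathsf{pk}}$ is a uniform matrix tuple; $H_2 = (\widetilde{\mathsf{pk}}, \text{uniform ciphertexts})$; and $H_3 = (\mathsf{pk}, \text{uniform ciphertexts}) = (\mathsf{pk}, \mathsf{Enc}^{\mathsf{pk}}(1))$. The gap $H_0 \leftrightarrow H_1$ (and symmetrically $H_2 \leftrightarrow H_3$) is precisely the advantage of distinguishing a stationary-$\minrank(m,n,\ell_1,(k_j)_j,r,2)$ instance from uniform, since $\mathsf{pk}$ is a tuple of $\ell_1$ noisy codewords whose errors $\vec{E}^{(j)}$ share a common column support of dimension $r$. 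The gap $H_1 \leftrightarrow H_2$ corresponds, via the primal/dual equivalence of $\minrank$ (Definitions \ref{def:minRankPF}--\ref{def:minRankDF}) applied block-wise, to distinguishing uniform from a stationary-$\minrank(m,n,\ell_2,(mn/\ell_2-k'_i)_i,d,2)$ instance, because each ciphertext $\vec{C}_i^\perp + \vec{F}^{(i)}$ is a noisy codeword in a uniform random dual code with errors $\vec{F}^{(i)}$ sharing a row support $F$ of dimension $d$ (the condition $\sum_j k_j + \ell_1 = \sum_i k'_i$ is exactly the dimension matching between the primal $\mathsf{pk}$-code and the dual reshuffled into $\ell_2$ blocks). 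By the triangle inequality, one of the three adjacent gaps is at least $\varepsilon/3$, producing a decisional distinguisher against stationary-$\minrank$ with advantage $\Omega(\varepsilon)$ for one of the two stated parameter sets, running in time $T + \mathrm{poly}(mn,\ell)$.

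Second stage: I would invoke a search-to-decision reduction for stationary-$\minrank$ — the rank-metric analogue of the reduction established in~\cite{KPR25} for stationary decoding in the Hamming metric. Given a decisional distinguisher of advantage $\Omega(\varepsilon)$, the Goldreich--Levin/LPN-style template recovers the shared column (or row) support by iteratively re-randomizing the instance, calling the distinguisher $O(\log^2(\ell/\varepsilon))$ times per information bit of the support and applying majority voting; an additional hybrid over the $\ell$ instance indices at which the bit is embedded costs a factor $1/\ell$ in advantage, which gets squared to $1/\ell^{2}$ through the standard advantage-to-probability conversion. This yields success probability $\Omega(\varepsilon^{2}/\ell^{2})$ and running time $O(\ell (mn)^{2} \log^{3}(\ell/\varepsilon)\, T)$, matching the announced bounds.

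The main obstacle is precisely the self-reducibility step underlying the search-to-decision reduction: unlike plain $\minrank$ where each instance can be re-randomized independently, the $N$ error matrices here are correlated through a shared column support. The rerandomization must therefore act jointly on this support — for instance via a uniform $\mathbf{GL}_r(\fq)$ change of basis on the common support together with independent $\mathbf{GL}_n(\fq)$ right-actions on each $\vec{E}^{(j)}$ — while preserving both the stationary distribution and the hidden subspace the distinguisher is exploited to recover. Once this joint self-reducibility is verified, the remaining amplification and parameter tracking are routine, and the hybrid and primal/dual manipulations of the first stage are essentially bookkeeping.
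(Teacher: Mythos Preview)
Your first stage is essentially what the paper does (Lemma~\ref{lemma:advScheme}), with only cosmetic differences in the hybrid count: the paper uses two hops (replace $\mathsf{pk}$ by uniform, then identify the ciphertext as a stationary-$\minrank$ instance in dual form) and obtains $\varepsilon \le \varepsilon_0 + \varepsilon_1$, whereas you have three hops and $\varepsilon/3$. Either is fine.

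Your second stage has the right architecture (hybrid over the $N$ positions, then Goldreich--Levin), but you have misidentified the central mechanism and flagged a phantom obstacle. The paper's search-to-decision reduction (Theorem~\ref{theo:searchToDecision}) does \emph{not} try to recover the shared support directly, and in particular performs no joint rerandomization via $\GL_r$ or $\GL_n$ actions. Instead, after the hybrid isolates an index~$i_0$, it applies the single-instance Fischer--Stern trick \cite{FS96} at that position only: replace $\vec{H}^{(i_0)}$ by $\vec{M}^{(i_0)} \eqdef \vec{H}^{(i_0)} - \vec{u}^{\top}\vec{r}$ for uniform $\vec{u}$, so that the given syndrome satisfies $\vec{H}^{(i_0)}(\vec{e}^{(i_0)})^{\top} = \vec{M}^{(i_0)}(\vec{e}^{(i_0)})^{\top} + (\vec{e}^{(i_0)}\cdot\vec{r})\,\vec{u}$. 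Depending on the bit $\vec{e}^{(i_0)}\cdot\vec{r}$, the modified tuple is distributed as $\mathcal{H}_{i_0}$ or $\mathcal{H}_{i_0+1}$; the other $N-1$ positions are taken verbatim from the search instance (for $j<i_0$) or replaced by fresh uniform matrices (for $j>i_0$), and no support-level coupling needs to be simulated. Goldreich--Levin then recovers the full bitstring $\vec{e}^{(i_0)} = \rho(\vec{E}^{(i_0)})$, from which the column support $E = \Sp{\vec{E}^{(i_0)}}$ is read off trivially. So the ``joint self-reducibility'' you describe as the main obstacle is simply not needed: the reduction is the standard~\cite{FS96} argument applied at one coordinate, with the correlated instances carried along passively.
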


\begin{remark}
	Notice that we have fixed $q=2$. In our instantiations we will each time choose $q$ as being equal to $2$.  
\end{remark}

To prove this theorem we proceed in basically two steps. 
\medskip 
\begin{itemize}\setlength{\itemsep}{5pt}
	\item[{\bf Step 1.}] First, we introduce the {\it decisional} version of stationary-$\minrank$ which consists in the problem of distinguishing between a true stationary-$\minrank$ instance and random matrices with the same sizes. We then show that any attacker against our scheme can be turned into an algorithm solving the decisional stationary-$\minrank$ problem.

	\item[{\bf Step 2.}]We end the proof by proving that decisional stationary-$\minrank$ is harder that it search counterpart via a {\em search-to-decision} reduction using Goldreich-Levin theorem as subroutine.
\end{itemize}
\medskip 

{\bf \noindent Proof of Theorem~\ref{theo:secu}.} Our security reduction is nothing new compared to the security reduction of  Alekhnovich and Regev' encryption schemes. We follow exactly the same strategy. Our contribution is mainly to provide a search-to-decision reduction for the stationary-$\minrank$ problem. Our reduction basically follows the standard search-to-decision reduction (with an additional hybridization trick) for the decoding problem of random codes endowed with the Hamming metric from \cite{FS96}. The full proof can be found in Appendix~\ref{app:reduction}.

 \section{MinRankPKE} \label{sec:MRPKE}

The scheme presented in the previous section is not efficient. Taking secure parameters would roughly give us a public-key size of more than $6$MB and a ciphertext size of more than $4$MB. The main reason we are getting such poor parameters is that the encryption rate is particularly low, {\it i.e.,} $1/(\ell_{2} mn)$.

The aim of this section is to make our scheme efficient without affecting its security reduction. We do so by describing the encryption when we directly encrypt many bits at once instead of only one bit. Our proposed construction is not new, it consists in following exactly the same approach as with Alekhnovich and Regev' schemes to encrypt several bits all at once (see for instance \cite[\S4.4, Cryptosystem $2$]{A03}). In particular, we will make use of an efficient decodable code. Let us stress that it is known to {\em not} affect the security of the approach.  
We also describe the scheme instantiated by taking only one matrix code as a public-key instead of $\ell_1$ different small codes. Notice that doing so does not affect the security reduction. If an attacker can break the scheme with one code as a public-key, then it can break the case where $\ell_1$ codes are given. It is enough when trying to break the scheme with $\ell_1$ matrix codes to consider as public-key the sum of these codes and to feed this to the attacker. 
\medskip

{\bf \noindent Encrypting more than one bit with Gabidulin codes.} Encrypting more than one bit requires encoding the bits to encrypt by using an efficient decodable code. We propose to use the ubiquitous Gabidulin codes~\cite{G85}. To properly define these  codes, let us first introduce  $q$-polynomials. They are polynomials of the form
$$
P(X) = p_0 X + p_1 X^q + \cdots + p_k X^{q^k}
$$
where $ p_i \in \Fqm $ and $ p_k \neq 0 $. The integer $ k $ is called the \emph{$q$-degree} of $ P $. Let $ \mathcal{L}_{<k} $ denote the set of~$q$-polynomials of~$q $-degree less than $ k $.

Gabidulin codes belong to a particular sub-class of
matrix codes:  $\mathbb{F}_{q^{m}}$--linear codes. Recall that an $\mathbb{F}_{q^{m}}$--linear code $\C$ with length $n$
and dimension $\kappa$ is a subspace with $\mathbb{F}_{q^m}$--dimension $\kappa$ of $\mathbb{F}_{q^{m}}^{n}$. We say that it has
parameters $[n, \kappa]_{q^m}$ or that it is an $[n, \kappa]_{q^m}$-code. It turns out that $\Fqm$--linear
codes are {\it isometric} 
to a particular subclass of matrix codes.  However,
to exhibit this isometry, we first need to define the underlying metric
for~$\Fqm$--linear codes. Given two vectors~$\vec{v}, \vec{w} \in \Fqm^{n}$, their rank-distance is
defined as
$$
|\vec{v} - \vec{w}| \eqdef \dim_{\Fq} \mathbf{Span}\left( v_1-w_1,\dots,v_n-w_n \right)\ .
$$
The rank weight of $\vec{v} \in \Fqm^{n}$
is denoted $|\vec{v}| \eqdef |\vec{v}-\vec{0}|$.
Notice that the aforementioned rank-weight~$|\vec{v}|$ is nothing but the rank of the matrix obtained by decomposing entries of
$\vec{v}$ in a fixed~$\Fq$--basis of $\Fqm$
viewed as an $\Fq$--vector space with dimension~$m$.  This
decomposition then gives us the aforementioned isometry. Furthermore, an $\mathbb{F}_{q^{m}}$--linear code with dimension $\kappa$ can be viewed as an $[m \times n, \kappa m]_q$-code after applying the isometry.

\begin{definition}[Gabidulin codes] Let $ m, n, \kappa$ be integers such that $\kappa\leq n \leq m $, and let $ \vec{g} = (g_1, \dots, g_n) \in \mathbb{F}_{q^m}^n $ be a vector whose entries are $ \mathbb{F}_q $-linearly independent.  The {Gabidulin code} of evaluation vector $ \vec{g} $ is the following $[n, \kappa]_{\mathbb{F}_{q^{m}}}$-linear code
	\[
	\mathsf{Gab}(\vec{g}, k) := \left\{ \left(P(g_1), \dots, P(g_n)\right) ~|~ P \in \mathcal{L}_{<k} \right\} .
	\]
\end{definition}

Gabidulin codes 
benefit from an efficient decoding algorithm whose properties are summarized in the following proposition. Many algorithms allow to decode Gabidulin codes. Here, the choice of the algorithm does not matter regarding the security of the scheme, although most recent algorithms allow for a decoding that can be considered as fast (see \cite{SCZ+} for instance). 

\begin{proposition}[\cite{G85}]\label{prop:gabidulin}
		Given a Gabidulin code $\mathsf{Gab}(\vec{g},\kappa)$ with parameters $m,n,\kappa,q$, {\it i.e.,} given the knowledge of~$\vec{g}\in \Fqm^{n}$ and $\kappa$, there exists a deterministic algorithm $\mathsf{Decode}^{\mathsf{Gab}}$ running in $O(n^{2})$ operations in $\Fqm$ and such that given $\vec{y} \in \Fqm^{n}$, $\vec{g}$ and $\kappa$, 
	\begin{itemize}\setlength{\itemsep}{5pt}
		\item if $\vec{y} = \vec{c} + \vec{e}$ where $\vec{c} \in \mathsf{Gab}(\vec{g},\kappa)$ and $|\vec{e}| \leq \frac{n-\kappa}{2}$, it outputs $\vec{e}$,

		\item otherwise, it outputs $\bot$~. 
	\end{itemize}
	\end{proposition}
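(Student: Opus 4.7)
The plan is to adapt the Peterson--Gorenstein--Zierler strategy from Reed--Solomon decoding to the rank-metric setting, with $q$-linearized polynomials playing the role of ordinary polynomials. Write the received vector as $\vec{y}=\vec{c}+\vec{e}$, where $\vec{c}=(P(g_1),\dots,P(g_n))$ for some unknown $P\in\mathcal{L}_{<\kappa}$, and let $t \eqdef |\vec{e}|\leq (n-\kappa)/2$. Since $e_1,\dots,e_n$ span an $\Fq$-subspace $U\subseteq\Fqm$ of dimension exactly $t$, there is a unique monic $q$-polynomial $\Lambda$ of $q$-degree $t$ whose $\Fqm$-kernel equals $U$. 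This $\Lambda$ is the rank-metric analogue of the error-locator polynomial and satisfies $\Lambda(e_i)=0$ for every $i\in[1,n]$.

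First I would apply $\Lambda$ coordinatewise to $\vec{y}$. Using the $\Fq$-linearity of $q$-polynomials and the fact that the composition $N \eqdef \Lambda\circ P$ is again a $q$-polynomial, of $q$-degree strictly less than $\kappa+t$, one obtains the \emph{key equation}
\[
\Lambda(y_i)\;=\;N(g_i),\qquad i\in[1,n].
\]
Treating the coefficients of $\Lambda$ (monic, hence $t$ unknowns) and of $N$ ($\kappa+t$ unknowns) as variables over $\Fqm$, this is a system of $n$ linear equations in $\kappa+2t\leq n$ unknowns, with a Moore-type coefficient matrix built from the $g_i$'s. Gaussian elimination recovers $\Lambda$ and $N$; a right-Euclidean division of $q$-polynomials then yields $P=\Lambda\backslash N$ (the unique $P$ with $N=\Lambda\circ P$), and finally $\vec{e}=\vec{y}-(P(g_1),\dots,P(g_n))$. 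If the linear system is inconsistent or the right-division leaves a nonzero remainder --- the expected symptoms when the true error weight exceeds $(n-\kappa)/2$ --- the algorithm returns $\bot$.

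The main obstacle is \emph{uniqueness}: one must show that any other pair $(\Lambda',N')$ satisfying the key equation with $\Lambda'$ of $q$-degree at most $t$ coincides with $(\Lambda,N)$ up to a common scalar, so that Gaussian elimination returns the correct error-locator rather than a spurious one. The standard argument takes the cross-difference of two distinct valid pairs and extracts from it a nonzero codeword of $\mathsf{Gab}(\vec{g},\kappa)$ of rank weight at most $2t\leq n-\kappa$, contradicting the minimum rank distance $n-\kappa+1$ of the Gabidulin code. This minimum-distance statement is exactly where the $\Fq$-linear independence of $g_1,\dots,g_n$ enters, via the injectivity of the evaluation map $P\mapsto(P(g_1),\dots,P(g_n))$ on $\mathcal{L}_{<\kappa}$ at the rank-metric level; without that hypothesis, the code would fail to be MRD and the argument would collapse.

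For the complexity, both the structured linear solve and the $q$-polynomial right-division fit in $O(n^2)$ operations in $\Fqm$ by standard fast linearized-polynomial arithmetic, which matches the bound claimed in the proposition. Determinism is immediate, since every step (computing syndromes, Gaussian elimination, Euclidean division) is deterministic given the inputs $\vec{g}$ and $\kappa$.
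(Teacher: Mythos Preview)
The paper does not prove this proposition; it is quoted as a result from Gabidulin's original paper~\cite{G85} and used as a black box. There is therefore no in-paper proof to compare your proposal against.

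Your sketch follows a Welch--Berlekamp/interpolation approach rather than Gabidulin's original syndrome-based algorithm, which is a legitimate route. A few points deserve tightening. You treat $t$ as known when setting up the linear system, but only the bound $t\le t_{\max}\eqdef\lfloor(n-\kappa)/2\rfloor$ is available; the usual fix is to work with $t_{\max}$ throughout, at the cost of a nontrivial solution space. More substantively, your uniqueness paragraph establishes only that two \emph{successful} decodings must agree (their difference would be a nonzero codeword of $\mathsf{Gab}(\vec g,\kappa)$ of rank $\le 2t_{\max}$), but it does not show that \emph{every} solution $(\Lambda',N')$ of the key-equation system yields a clean right-division $N'=\Lambda'\circ P'$ with $P'$ of $q$-degree $<\kappa$. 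Since Gaussian elimination hands you one arbitrary element of the solution space, you need this: the correct step observes that $(N'-\Lambda'\circ P)(g_i)=\Lambda'(e_i)$ is a vector of rank at most $t$ lying in $\mathsf{Gab}(\vec g,\kappa+t_{\max})$, hence zero by the MRD property of that \emph{larger} Gabidulin code --- so right-division always succeeds and always returns the true $P$. (Note composition of $q$-polynomials is non-commutative, so the Reed--Solomon trick $\Lambda'N-\Lambda N'$ does not transplant directly; your ``cross-difference'' phrasing hides this.) Finally, plain Gaussian elimination on an $n\times n$ system costs $O(n^3)$; reaching $O(n^2)$ genuinely requires exploiting the Moore/linearized structure, e.g.\ via a linearized extended Euclidean algorithm, so the appeal to ``fast linearized-polynomial arithmetic'' is carrying nontrivial weight and should be made explicit.
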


Given a Gabidulin code $\mathsf{Gab}(\vec{g},\kappa)$, we will denote by $\mathsf{Gab}_{\vec{g}}(\vec{m}) \in \mathsf{Gab}(\vec{g},\kappa)$ the encoding of the vector~$\vec{m} \in \Fqm^{\kappa}$ into the code $\mathsf{Gab}(\vec{g},\kappa)$. We will also interpret the Gabidulin code as a matrix code, {\it i.e.,} $\mathsf{Gab}_{\vec{g}}(\vec{m}) \in \Fq^{m \times n}$ instead of $\Fqm^{n}$ via the aforementioned isometry. In particular, $\mathsf{Decode}^{\mathsf{Gab}}$ algorithm also applies to $m \times n$ matrix code arising from a Gabidulin code. 
\medskip

{\bf \noindent Vectorization of matrices.} For the sake of conciseness, we will sometimes consider matrices as vectors by using the bijection 
\begin{equation}\label{eq:rho} 
	\rho \text{ }: \text{ }\vec{A} = \begin{pmatrix}
		a_{1,1} & \dots & a_{1,n} \\
		\vdots & \ddots & \vdots \\
		a_{m,1} & \dots & a_{m,n}
	\end{pmatrix} \mapsto \rho(\vec{A}) \eqdef (a_{1,1} \dots a_{1,n} \dots a_{m,1}\dots a_{m,n}) \ .
\end{equation} 
The benefits of this consideration are three-fold:
\begin{itemize}\setlength{\itemsep}{5pt}
	\item[$\bullet$] It allows us to write the inner-product of matrices as the canonical inner-product for vectors, {\it i.e.,} $\langle \vec{A},\vec{B} \rangle = \tr(\vec{A}\vec{B}^\top) = \rho(\vec{A})\cdot \rho(\vec{B}) \eqdef \rho(\vec{A})\rho(\vec{B})^\top \ ;$
	\item[$\bullet$] It allows us to write a $\minrank$ instance $\vec{Y} + \sum\limits_{i=1}^{k} x_i\vec{B}_i = \vec{E}$ as 
	$$
	\vec{y} = \vec{xG} + \vec{e} 
	$$
	where $\vec{x}\in \Fq^{k}, \vec{y}  \eqdef \rho(\vec{Y}), \vec{e} \eqdef \rho(\vec{E})$ and $\vec{G} \in \Fq^{k \times mn}$ is the matrix such that its rows are all the~$\rho(\vec{B}_i)$'s, {\it i.e.} $\vec{G}^{\top} \eqdef \begin{bmatrix}
		\rho(\vec{B}_1)^{\top} & \cdots\;  & 
		\rho(\vec{B}_k)^{\top}
	\end{bmatrix}$;
	\item[$\bullet$] It allows us to view the matrix composed of the $\langle \vec{F}^{(i)}, \vec{E}^{(j)}\rangle$'s as the product $\vec{E}\vec{F}$ where $\vec{E}$ ({\it resp.} $\vec{F}$) is the matrix composed of the vectorized of $\vec{E}^{(j)}$'s ({\it resp.} $\vec{F}^{(i)}$'s).
\end{itemize} 
\medskip

{\bf \noindent Using a single matrix code as a public-key.} One of the limitations of the scheme described in Section \ref{sec:Encstructure} is the use of $\ell_1$ matrix codes to build the public-key. In fact, having this many codes implies using codes of small dimensions, making it harder to find suitable parameters. We will thus consider an instantiation of the scheme where all the codes are the same, {\it i.e.,} we take one bigger code. Only the public-key changes, we replace 
$$
\left( \left( \vec{B}_{\ell}^{(1)}\right)_{\ell \in [1,k_1]}, \vec{Y}^{(1)} \right), \dots,  \left( \left( \vec{B}_{\ell}^{(\ell_1)}\right)_{\ell \in [1,k_{\ell_1}]}, \vec{Y}^{(\ell_1)} \right)
$$ 
being the original public-key by: 
$$
\pk =  \left( \left( \vec{B}_{\ell}\right)_{\ell \in [1,k]}, \vec{Y}^{(1)},\dots,\vec{Y}^{(\ell_1)} \right) .
$$

{\bf \noindent Overview of the instantiation.} Our public-key now consists now in~$k$ matrices $\vec{B}_1,\dots,\vec{B}_k \in \Fq^{m \times n}$, and $\ell_1$ matrices $\vec{Y}^{(1)},\dots,\vec{Y}^{(\ell_1)} \in \Fq^{m \times n}$, such that 
$$
\forall j \in [1,\ell_1], \; \vec{Y}^{(j)} = \sum\limits_{\ell=1}^{k}x_\ell^{(j)}\vec{B}_\ell + \vec{E}^{(j)}
$$ 
for some vectors $\vec{x}^{(j)} \in \Fq^k$ and the $\vec{E}^{(j)}$'s being of rank less or equal to $r$ but with the same column support. Thus, thanks to the bijection $\rho$ (see Equation~\eqref{eq:rho}), it can be described as 
$$
\vec{T} \eqdef \vec{SG}+\vec{E} \in \Fq^{\ell_1 \times mn},
$$ 
where each row of $\vec{T}$ corresponds to a $\minrank$ instance and where $\vec{S} \in \Fq^{\ell_1 \times k}$ is the matrix composed of all the $x_\ell^{(j)}$'s.

In Section \ref{sec:Encstructure}, to encrypt one bit, we transmitted noisy codewords ${\vec{C}_j}^\perp + \vec{F}^{(j)}$'s where all the~${\vec{C}_j}^\perp$'s were in the dual of the matrix code $\mathbf{Span}\left( \C,\vec{E}^{(1)},\dots,\vec{E}^{(\ell_1)}\right)$ where~$\C$ was the sum of all the codes defined by the public-key. Here, we replaced these codes  by a single code.

To show how we will now encrypt bits we first change the notation to the vectorized one, and we will use the dual representation of $\minrank$. Let $\vec{H} \in \Fq^{mn-k-\ell_1 \times mn}$ be the matrix obtained via the vectorization of a basis of 
$$
\mathbf{Span}\left( \C,  \vec{E}^{(1)},\dots,\vec{E}^{(\ell_1)}\right)^{\perp} .
$$
Let $\vec{M}$ be uniformly sampled  in~$\Fq^{\ell_2 \times (mn-k-\ell_1)}$. Ciphertexts can now be written as
\begin{align*}
	\vec{Z} \eqdef \left\{ \begin{array}{ll} 
		\vec{U} ~~ &\text{If $b=1$} \\
		\vec{M}\vec{H}+ \vec{F} ~~ &\text{If $b=0$} 
	\end{array}\right.
\end{align*} where $\vec{F} \in \Fq^{\ell_2 \times mn}$ is composed of the vectorized $\vec{F}^{(i)}$'s (according to notation from Figure~\ref{algo:scheme}) and~$\vec{U}$ is a uniform $\ell_2 \times mn$ matrix.

Now the decryption consists as computing $\vec{E}\vec{Z}^{\top}$. When $b=0$ has been encrypted, it consists in
$$
\vec{E}\vec{Z}^\top = \vec{E}\vec{H}^\top\vec{M}+\vec{EF}^\top = \vec{EF}^\top
$$ 
as $\vec{EH}^\top = \vec{0}$. Indeed, its coefficients are given by inner product between the rows of $\vec{E}$ and rows of $\vec{H}$. The coefficients of $\vec{EF}^\top$ are then the $\langle \vec{F}^{(i)}, \vec{E}^{(j)}\rangle$'s (due to the correspondence between inner product of matrices and inner product of vectors as described above) and thus the decryption works as previously by checking the rank.

We can go even further, instead of transmitting $\ell_2$ matrices, one can transmit only inner products by using the $\minrank$ dual form. The ciphertext becomes

\begin{align*}
	\vec{Z} \eqdef
	 \left\{ \begin{array}{ll} 
		\vec{F}\begin{bmatrix}\vec{G} \\ \vec{T} \end{bmatrix}^\top  &\text{If $b=1$}\\
		\vec{U} \begin{bmatrix}\vec{G} \\ \vec{T} \end{bmatrix}^\top &\text{If $b=0$}
	\end{array}\right.
\end{align*} where $\vec{F}$ and $\vec{U}$ are the same as explained above. It is readily seen that $\vec{H}\begin{bmatrix}\vec{G} \\ \vec{T} \end{bmatrix}^\top  = \vec{0}$, so all we did was indeed to take a dual formulation.

Using this last formulation for the ciphertext, encrypting a message $\vec{m} \in \mathbb{F}_{q^{\ell_1}}^{\kappa}$ becomes clearer: the message should be encoded in the Gabidulin code and then added to $\vec{FT}^\top$. The ciphertext then becomes 
$$
\begin{bmatrix}
	\vec{U} & \vec{V}
\end{bmatrix} \eqdef \vec{F}\begin{bmatrix}\vec{G} \\ \vec{T} \end{bmatrix}^\top +\begin{bmatrix}
	\vec{0}^{\ell_2 \times K} &  \mathsf{Gab}_{\vec{g}}(\vec{m})
\end{bmatrix}.
$$ This allows to decrypt by computing 
$$
\vec{V} - \vec{US}^\top = \vec{FG}^\top \vec{S}^\top+ \vec{FE}^\top + \mathsf{Gab}_{\vec{g}}(\vec{m}) - \vec{FG}^\top\vec{S}^\top =  \vec{FE}^\top + \mathsf{Gab}_{\vec{g}}(\vec{m}),
$$ 
and then decode it thanks to the Gabidulin decoder. 
We fully describe our scheme, that we call \textsf{MinRankPKE}, in Figure \ref{alg:MinRankPKE}.

\begin{figure}[h!]
	\begin{tcolorbox}[colframe=black, colback=white, boxrule=0.5mm, sharp corners]
		\begin{itemize}\setlength{\itemsep}{5pt}
			\item $\mathsf{Parameter}(\lambda)$: $m,n,\ell_1,\ell_2, k,r,d,q$ be integers such that 
			$$
			k \leq mn \quad \mbox{and} \quad rd < \min(\ell_1,\ell_2) \ .
			$$

			\item$\mathsf{Keygen}(m,n,\ell_1, k, r, q)$: 
			
			$E \Unif \left\{ X \subseteq \mathbb{F}_{q}^{m} \mbox{ with dimension } d \right\}$ 
			
			\texttt{for $j \in [1,\ell_1]$:}
			
			\quad $\vec{E}^{(j)} \Unif  \left\{ \vec{E} \in \mathbb{F}_{q}^{m \times n}: \; \Sp{\vec{E}^{}} =E \right\}$ 
			
			$\vec{S} \Unif \Fq^{\ell_2 \times k}$
			
			$\vec{G} \Unif \mathbb{F}_{q}^{k \times mn}$ 
			
			\texttt{sample} $\mathsf{Gab}(\vec{g},\kappa)$ an $[\ell_2,\kappa]_{q^{m}}$ Gabidulin code on $\mathbb{F}_{q^{\ell_1}}$, 
			
			\texttt{compute} $\vec{E}^{\top} \eqdef \begin{bmatrix}
				\rho(\vec{E}^{(1)})^{\top}& \cdots & 
				\rho(\vec{E}^{(\ell_1)})^{\top}
			\end{bmatrix}$ 
			
			$\mathsf{sk} \leftarrow  \vec{S}$
			
			$\mathsf{pk} \leftarrow \vec{T} \eqdef \vec{SG}+\vec{E}$, $\mathsf{Gab}(\vec{g},\kappa)$
		\end{itemize}
		\medskip

		\begin{center}
			\renewcommand{\arraystretch}{1.3}
			\begin{tabular}{l@{\hspace{3mm}}|@{\hspace{3mm}}l}
				$\mathsf{Enc}^{\mathsf{pk}}(\vec{m})\!\!: \qquad \qquad \qquad$ & $\mathsf{Dec}^{\mathsf{sk}}\left(\vec{Y}^{(1)},\dots,\vec{Y}^{(\ell_2)}\right)\!\!:$ \\
				$F \Unif \left\{ X \subseteq \mathbb{F}_{q}^{n} \mbox{ with dimension } d \right\}$ & \texttt{compute} $\vec{W} \eqdef \vec{V}-\vec{US}^\top$ \\
				\texttt{for} $i \in [1,\ell_2]$: & \texttt{compute} $\Tilde{\vec{m}} \eqdef \mathsf{Decode}^{\mathsf{Gab}}(\vec{W})$\\
				\quad $\vec{F}^{(i)} \Unif  \left\{ \vec{F} \in \mathbb{F}_{q}^{m \times n}: \; \Sp{\vec{F}^{\top}} =F \right\}$ & \texttt{return} $\Tilde{\vec{m}}$. \\
				 \texttt{compute }$\vec{F}^{\top} = \begin{bmatrix}
					\rho(\vec{F}^{(1)})^{\top} & \cdots 
					\rho(\vec{F}^{(\ell_2)})^{\top}
				\end{bmatrix}$ & \\
				 \texttt{compute } $\vec{U} = \vec{F}\vec{G}^\top$ & \\
				  \texttt{compute } $\vec{V} = \vec{F}\vec{T}^\top + \mathsf{Gab}_{\vec{g}}(\vec{m})$ & \\
				\texttt{return} $(\vec{U},\vec{V})$. & \\
			\end{tabular} 
\end{center} 
	\end{tcolorbox}
	\caption{The $\mathrm{MinRankPKE}$ encryption scheme}
	\label{alg:MinRankPKE}
\end{figure} 

\begin{proposition}[Correctness of the decryption]
	If $rd$ is smaller than $\lfloor \frac{\ell_2-\kappa}{2}\rfloor$, the $\mathsf{Dec}^{\sk}$ algorithm from Figure \ref{alg:MinRankPKE} returns the message $\vec{m}$.
\end{proposition}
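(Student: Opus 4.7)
The plan is to trace through the algebraic cancellations in $\vec{W}$, then bound the residual error in rank weight via Theorem~\ref{theo:fund}, and finally invoke the Gabidulin decoding guarantee of Proposition~\ref{prop:gabidulin}.

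First, I would substitute $\vec{T} = \vec{S}\vec{G} + \vec{E}$ into the decryption formula. Since $\vec{U} = \vec{F}\vec{G}^{\top}$, we obtain
$$
\vec{W} \;=\; \vec{V} - \vec{U}\vec{S}^{\top} \;=\; \vec{F}(\vec{S}\vec{G}+\vec{E})^{\top} + \mathsf{Gab}_{\vec{g}}(\vec{m}) - \vec{F}\vec{G}^{\top}\vec{S}^{\top} \;=\; \mathsf{Gab}_{\vec{g}}(\vec{m}) + \vec{F}\vec{E}^{\top},
$$
so $\vec{W}$ is the transmitted codeword plus the additive error $\vec{F}\vec{E}^{\top}$. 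The rest of the proof reduces to controlling the rank of this error.

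Second, I would unpack the entries: by definition of $\vec{E}$ and $\vec{F}$ and of the vectorization $\rho$, the $(i,j)$ entry of $\vec{F}\vec{E}^{\top}$ is exactly $\rho(\vec{F}^{(i)})\cdot\rho(\vec{E}^{(j)})^{\top} = \langle \vec{F}^{(i)}, \vec{E}^{(j)}\rangle$. By construction of $\mathsf{Keygen}$, all the $\vec{E}^{(j)}$'s share the common column support $E$ of dimension~$r$, and by construction of $\mathsf{Enc}$, all the $\vec{F}^{(i)}$'s share the common row support $F$ of dimension~$d$. Theorem~\ref{theo:fund} applied to the families $(\vec{E}^{(j)})_{j\in[1,\ell_1]}$ and $(\vec{F}^{(i)})_{i\in[1,\ell_2]}$ therefore gives $|\vec{F}\vec{E}^{\top}| \leq rd$. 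Through the isometry identifying $\mathbb{F}_q^{\ell_1\times\ell_2}$ with $\mathbb{F}_{q^{\ell_1}}^{\ell_2}$, this matrix rank coincides with the rank-metric weight of $\vec{F}\vec{E}^{\top}$ viewed as a vector of $\mathbb{F}_{q^{\ell_1}}^{\ell_2}$, i.e.\ the ambient space of $\mathsf{Gab}(\vec{g},\kappa)$.

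Third, the hypothesis $rd \leq \lfloor (\ell_2 - \kappa)/2\rfloor$ puts $\vec{F}\vec{E}^{\top}$ strictly inside the decoding radius of the Gabidulin code. Proposition~\ref{prop:gabidulin} then guarantees that $\mathsf{Decode}^{\mathsf{Gab}}$ applied to $\vec{W}$ returns the error exactly, from which $\mathsf{Gab}_{\vec{g}}(\vec{m})$, and hence $\vec{m}$, is recovered unambiguously.

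The proof is essentially algebraic bookkeeping; the only subtle point is matching the sides of the hypotheses of Theorem~\ref{theo:fund}: one must check that because of the transpose in $\vec{F}\vec{E}^{\top}$, the ``column support'' constraint on the $\vec{E}^{(j)}$'s and the ``row support'' constraint on the $\vec{F}^{(i)}$'s (enforced via $\Sp{\vec{F}^{(i)\top}} = F$ in $\mathsf{Enc}$) indeed align with the roles of $A$ and $B$ in the theorem, and that the matrix $\vec{F}\vec{E}^{\top}$ is consistently interpreted as an element of the Gabidulin ambient space under the isometry $\rho$.
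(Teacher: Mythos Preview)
Your proof is correct and follows essentially the same approach as the paper: compute $\vec{W}=\mathsf{Gab}_{\vec{g}}(\vec{m})+\vec{F}\vec{E}^{\top}$ via the algebraic cancellation, identify the entries of $\vec{F}\vec{E}^{\top}$ as the inner products $\langle \vec{F}^{(i)},\vec{E}^{(j)}\rangle$, bound its rank by $rd$ via Theorem~\ref{theo:fund}, and conclude with the Gabidulin decoding guarantee of Proposition~\ref{prop:gabidulin}. If anything, your write-up is slightly more explicit than the paper's about the isometry and about checking that the column/row support constraints align with the roles of $A$ and $B$ in Theorem~\ref{theo:fund}.
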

\begin{proof}
	We have the following computation,
	\begin{align*}
		\vec{V}-\vec{U}\vec{S}^\top&=\vec{F}\vec{T}^\top+ \mathsf{Gab}_{\vec{g}}(\vec{m}) - \vec{F}\vec{G}^\top\vec{S}^\top\\ &= \vec{F}\vec{G}^\top\vec{S}^\top+\vec{F}\vec{E}^\top-\vec{F}\vec{G}^\top\vec{S}^\top+\mathsf{Gab}_{\vec{g}}(\vec{m})\\ &= \vec{F}\vec{E}^\top +\mathsf{Gab}_{\vec{g}}(\vec{m})\in \Fq^{\ell_1 \times \ell_2}
	\end{align*}
	Thanks to the correspondence between the two formulations of the scalar product, one can see that $\vec{F}\vec{E}^\top$ is the matrix $\vec{D}$ such that $\vec{D}(i,j) = \langle \vec{F}^{(i)},\vec{E}^{(j)} \rangle$. Thus, the rank of $\vec{F}\vec{E}^\top$ is a straightforward application of Theorem \ref{theo:fund} as in the proof of Proposition \ref{prop:gabidulin} in which case the codeword can be decoded. 
\end{proof}

{\bf \noindent Security of the instantiation.} Theorem \ref{theo:secu} can easily be adapted to the instantiation we just described (recall that adding a decodable code in the construction does not affect the security). If an attacker can break this version of the scheme with a public-key containing one code, then we can use this attacker to break the original instantiation with many codes in the public-key and thus solve stationary-$\minrank$ via the search-to-decision reduction. In what follows, we will be interested in finding the best attacks against this new version of the scheme to provide parameters. It is why we will actually be interested in solving the following variation of stationary-$\minrank$: $\MinRank$ Support Learning (\textsf{MSL}). 
Providing parameters based on the best stationary-$\minrank$ solvers would also give a reliable instantiation, but the parameters would not be tight at all, given the losses in reduction from multiple matrix codes in the public key to only one code.

\begin{definition}[$\MinRank$ Support Learning ($\MSL$)] \label{def:msl}
	Let~$m,n, N, k,t,q$ be integers that are functions of some security
	parameter $\lambda$ and such that $mn \geq k$. Let
	$\vec{B}_1, \dots, \vec{B}_{k} \in \mathbb{F}_{q}^{m \times
		n}$ and~$\lambda^{(j)}_1,\dots,\lambda^{(j)}_{k} \in \mathbb{F}_{q}$ for $j \in [1,N]$ be sampled uniformly at
	random. Let $E \subseteq \mathbb{F}_{q}^{m}$ be a random subspace with dimension~$t$ and~$\vec{E}^{(j)} \in \left\{ \vec{X} \in \mathbb{F}_{q}^{m \times n}: \; \Sp{\vec{X}}= E \right\}$ be uniformly distributed  for~$j \in [1,N]$. Let,
	$$
	\forall j \in [1,N], \; \vec{Y}^{(j)} \eqdef \sum_{\ell=1}^{k} \lambda^{(j)}_{\ell} \vec{B}_{\ell} + \vec{E}^{(j)} .
	$$
	The $\mathsf{MSL}(m,n,N,k,t,q)$ problem
	consists, given $\left( \vec{B}_{\ell}\right)_{\ell \in [1,k]},\left(\vec{Y}^{(j)}\right)_{j \in [1,N]}$, in finding $E$.
\end{definition}

It turns out \textsf{MSL} is harder than stationary-$\minrank$. Indeed, suppose that we are given an instance of stationary-$\minrank$: $\mathcal{C}_1,\dots,\mathcal{C}_N$ and $\vec{C}_1 + \vec{E}_1, \dots, \vec{C}_N + \vec{E}_N$ where the $\mathcal{C}_{i}$'s are random matrix codes with dimension $k_i$, the $\vec{C}_{i}$'s are random codewords in the $\mathcal{C}_{i}$'s and the $\vec{E}_{i}$'s all have the same column support. Then,
	$
	\mathcal{C} \eqdef \sum_{i} \mathcal{C}_{i}
	$
	is a random code with dimension $\sum_{i} k_i$ (with overwhelming probability under the condition that $\sum_{i} k_i < mn$). Furthermore, let $\vec{C}'_1, \dots, \vec{C}'_{N}$ be picked uniformly at random in $\mathcal{C}$. It is easily verified that the $\vec{Y}'_{i} \eqdef \vec{C}'_i + \vec{Y}_i$'s together with $\C$ form a valid (average) instance of \textsf{MSL}. 

\section{Algorithmic hardness of $\minrank$ Support Learning and stationary-$\minrank$} \label{sec:security}

This section is devoted to studying the hardness of the problem upon which the concrete security of our encryption scheme is based on: $\mathsf{MSL}$. Furthermore, we will also study the hardness of stationary-$\minrank$, to which we reduce security via a search-to-decision reduction. Our proposed algorithms to solve these two problems will be treated independently in what follows.

First, we will focus on $\MSL$. To derive algorithms solving it, we will draw inspiration from the best algorithms to solve the  well-known Rank Support Learning ($\RSL$) problem, which corresponds to $\MSL$ where an additional $\mathbb{F}_{q^{m}}$--linear structure is added. Our proposed algorithms to tackle $\MSL$ can then be interpreted as an adaptation of~\cite{GHPT17,BB21,BBBG24} where the $\mathbb{F}_{q^{m}}$--linearity can no longer be used. 
Then, we will discuss the hardness of stationary-$\minrank$. As we will see, the best algorithms we found are all derived from $\minrank$-solvers. 
\medskip

{\bf \noindent On the average number of solutions of an $\MSL$ instance.} In \cite{Dyseryn25}, the average number of solutions of a random $\RSL$ instance was given. This is important as it influences the complexity of the attack. Here, we can easily adapt the formula. The number of solutions for a $\MSL$ instance is given by 
$$
\qbinom{m}{t}\cdot \frac{q^{tnN}}{q^{N(mn-k)}}.
$$
To explain this formula, we can proceed recursively. Taking the first instance, we know that there are on average 
$$
\qbinom{m}{t}\cdot \frac{q^{tn}}{q^{mn-k}}
$$ 
solutions. Then, the probability that the second instance also possesses a solution with the same support is 
$$
\frac{q^{tn}}{q^{mn-k}},
$$ 
and so on until the $N$th instance, thus the number of solutions. In our case, it is always lower than $1$.

\subsection{A general approach for the $\MSL$ problem}\label{subsec:GenApp}

Our approach to solve $\MSL$ is analogous to the one proposed in \cite{GHPT17,BB21,BBBG24} treating $\RSL$.
It corresponds in our case to proceed as follows. First, we build a larger decoding problem with many solutions. Then, we can deduce the solution of the $\MSL$ problem by solving a $\minrank$ instance. The advantage of this approach is that
the considered $\minrank$ instance has its parameters reduced, due to the $q^N$ solutions in the large instance.
\medskip 

{\bf \noindent Building a larger code.} In the $\MSL$ problem, we are given $N$ instances of $\MinRank$ are which are obtained using the same code, and the same column support for the different errors. We note the $j$th instance as ${\vec{Y}}^{(j)} = \sum\limits_{i=1}^{k}x_{i}^{(j)}\vec{B}_i + \vec{E}^{(j)}$ with $|\vec{E}^{(j)}| \le t$. Then, we proceed as follows:
\medskip 
\begin{itemize}\setlength{\itemsep}{5pt}
	\item[$\bullet$] Build a code 
	$$
	\C_{aug} \eqdef \mathbf{Span}\left( \vec{B}_1,\dots,\vec{B}_k,{\vec{Y}}^{(1)},\dots,{\vec{Y}}^{(N)} \right)
	$$ 
	which has dimension $k+N$;
	\item[$\bullet$] Find one of the $q^N$ codewords in $\C_{aug}$ that is in 
	$$
	\C' \eqdef \mathbf{Span}\left( \vec{E}^{(1)},\dots, \vec{E}^{(N)}\right). 
	$$
	This is because the $\vec{E}^{(j)}$'s belong to $\C_{aug}$.
\end{itemize}
\medskip 
The following lemma \ref{lemma:dimCaug} is well-known (\cite{GHPT17}, \cite{BB21}).
\begin{lemma}\label{lemma:dimCaug}
	Let $\C' =\mathbf{Span}\left(  \vec{E}^{(1)},\dots, \vec{E}^{(N)} \right)$. Then for all $\vec{E} \in \C'$, $|\vec{E}| \le t$, $\C' \subseteq \C_{aug}$ and $\dim(\C_{aug}) \leq k+N$.
\end{lemma}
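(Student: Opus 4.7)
The plan is to verify each of the three assertions directly from the definitions, since the lemma is essentially bookkeeping rather than a deep result.

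First, for the rank bound, I would invoke the stationary structure of the $\MSL$ instance: by Definition~\ref{def:msl}, all the $\vec{E}^{(j)}$ share a common column support $E \subseteq \Fq^{m}$ of dimension $t$, i.e.\ $\Sp{\vec{E}^{(j)}} \subseteq E$ for every $j \in [1,N]$. In particular, every column of every $\vec{E}^{(j)}$ lies in $E$. For an arbitrary $\vec{E} = \sum_{j=1}^{N} \lambda_{j} \vec{E}^{(j)} \in \C'$, each column of $\vec{E}$ is then an $\Fq$-linear combination of columns of the $\vec{E}^{(j)}$, so it still lies in $E$. Hence $\Sp{\vec{E}} \subseteq E$ and $|\vec{E}| = \dim \Sp{\vec{E}} \leq \dim E = t$.

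Second, for the inclusion $\C' \subseteq \C_{aug}$, I would simply rearrange the defining relation of the instance, $\vec{Y}^{(j)} = \sum_{i=1}^{k} x_{i}^{(j)} \vec{B}_{i} + \vec{E}^{(j)}$, into
$$
\vec{E}^{(j)} \;=\; \vec{Y}^{(j)} - \sum_{i=1}^{k} x_{i}^{(j)} \vec{B}_{i} .
$$
The right-hand side exhibits $\vec{E}^{(j)}$ as an $\Fq$-linear combination of the generators $\vec{B}_{1},\dots,\vec{B}_{k}, \vec{Y}^{(1)},\dots,\vec{Y}^{(N)}$ of $\C_{aug}$, so $\vec{E}^{(j)} \in \C_{aug}$ for every $j$. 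Taking spans gives $\C' \subseteq \C_{aug}$.

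Third, the dimension bound is immediate: by its very definition $\C_{aug}$ is spanned by $k+N$ matrices, so $\dim(\C_{aug}) \leq k+N$.

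There is no genuine obstacle here; the content of the lemma is just a formal restatement of the $\MSL$ structure, and its role in the paper is only to justify the algorithmic reduction developed in Section~\ref{subsec:GenApp}.
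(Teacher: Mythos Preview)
Your argument is correct and is exactly the standard one: the paper does not give a proof of this lemma at all but simply cites it as well-known from \cite{GHPT17,BB21}, and what you have written is precisely the routine verification those references have in mind.
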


A direct implication of this lemma is that it is possible to solve the $\MSL$ problem by finding one of the $q^N$ codewords of rank $t$ in the code $\C_{aug}$, {\it i.e.,} a $\MinRank$ instance.
\medskip

{\bf \noindent Reducing the number of solutions.} To solve this bigger instance, it makes sense to reduce the number of solutions, as the $q^N$ solutions directly give several ways to reduce the parameters. This is done either by finding a matrix of rank \textit{strictly less} than $t$, or by specializing some columns of the solution to columns of zeroes. Both approaches should be considered. 
\medskip

{\bf \noindent An error of smaller rank.} We begin with the matrix of smaller rank, by following \cite{GHPT17} and \cite[Proposition 1]{BB21}.

\begin{lemma}[\cite{BB21}] \label{lemma:cardC'}
	The expected number of codewords of rank $w$ in $\C'$ is 
	$$
\frac{\mathcal{S}_{t,n,w}}{q^{tn-N}}\ .
	$$
\end{lemma}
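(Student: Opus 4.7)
The plan is to reduce the count to a simpler problem in $\mathbb{F}_q^{t \times n}$, then apply linearity of expectation together with the (near-)uniformity of the $\vec{E}^{(j)}$'s conditioned on their column support.

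First I would fix a basis $\vec{A} \in \mathbb{F}_q^{m\times t}$ of the support $E$ (so $\vec{A}$ has full column rank). Since $\Sp{\vec{E}^{(j)}} = E$, each $\vec{E}^{(j)}$ factors uniquely as $\vec{E}^{(j)} = \vec{A}\vec{P}^{(j)}$ with $\vec{P}^{(j)} \in \mathbb{F}_q^{t\times n}$ of full row rank. The map $\vec{M} \mapsto \vec{A}\vec{M}$ is a rank-preserving bijection between $\mathbb{F}_q^{t\times n}$ and the set of matrices in $\mathbb{F}_q^{m\times n}$ with column support contained in $E$. Consequently, counting rank-$w$ codewords of $\C' = \mathbf{Span}(\vec{E}^{(1)},\dots,\vec{E}^{(N)})$ is the same as counting rank-$w$ elements of $\C'' \eqdef \mathbf{Span}(\vec{P}^{(1)},\dots,\vec{P}^{(N)}) \subseteq \mathbb{F}_q^{t\times n}$.

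Then I would apply linearity of expectation:
$$
\Esp\!\big[\,|\{\vec{M}\in \C'' : \rk(\vec{M})=w\}|\,\big] \;=\; \sum_{\substack{\vec{M} \in \mathbb{F}_q^{t\times n} \\ \rk(\vec{M})=w}} \Prob[\vec{M} \in \C''] \;=\; \mathcal{S}_{t,n,w}\cdot \Prob[\vec{M}_0 \in \C'']
$$
for any fixed rank-$w$ matrix $\vec{M}_0$ (by symmetry under the action of $\mathbf{GL}_t(\mathbb{F}_q)\times \mathbf{GL}_n(\mathbb{F}_q)$, which permutes rank-$w$ matrices transitively and preserves the distribution of the $\vec{P}^{(j)}$'s, since they are uniform among full-row-rank matrices).

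The probability $\Prob[\vec{M}_0 \in \C'']$ is evaluated by noting that, up to a negligible correction, each $\vec{P}^{(j)}$ is uniform on $\mathbb{F}_q^{t\times n}$ (the set of full-row-rank matrices has density $\prod_{i=0}^{t-1}(1-q^{i-n})$, very close to $1$). For any fixed nonzero $(\lambda_1,\dots,\lambda_N) \in \mathbb{F}_q^N$ the sum $\sum_j \lambda_j \vec{P}^{(j)}$ is then (nearly) uniform on $\mathbb{F}_q^{t\times n}$, so hits $\vec{M}_0$ with probability $q^{-tn}$. Since $\vec{M}_0 \neq \vec{0}$ (as $w>0$) and since $N \le tn$ makes the $\vec{P}^{(j)}$'s linearly independent with overwhelming probability (so $|\C''|=q^N$ and the representation of $\vec{M}_0$ is unique), a union bound over the $q^N-1$ nonzero coefficient tuples gives
$$
\Prob[\vec{M}_0 \in \C''] \;\simeq\; \frac{q^N-1}{q^{tn}} \;\simeq\; \frac{1}{q^{tn-N}}.
$$
Multiplying by $\mathcal{S}_{t,n,w}$ yields the claimed formula.

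The only mildly delicate step is handling the difference between ``uniform on $\mathbb{F}_q^{t\times n}$'' and ``uniform on full-row-rank matrices'', as well as the rare event that the $\vec{P}^{(j)}$'s are linearly dependent; both contribute only a multiplicative factor of $1+o(1)$ under the parameter regime $N \le tn$ used throughout the paper, which is why the lemma is stated as an expected (approximate) count. The combinatorial bijection via $\vec{A}$ does all the real work, reducing the statement to a standard first-moment computation in a random subspace of $\mathbb{F}_q^{t\times n}$.
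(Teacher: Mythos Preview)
Your proposal is correct and follows essentially the same approach as the paper: both factor $\vec{E}^{(j)} = \vec{A}\vec{P}^{(j)}$ (the paper writes $\vec{V}\vec{P}^{(j)}$) to reduce the count in $\C'$ to the count of rank-$w$ matrices in the random $[t\times n, N]_q$ code spanned by the $\vec{P}^{(j)}$'s, and then compute the expected number as (density of rank-$w$ matrices in $\mathbb{F}_q^{t\times n}$) $\times$ (code size) $= \frac{\mathcal{S}_{t,n,w}}{q^{tn}}\cdot q^{N}$. Your version is more explicit about the approximations (full-row-rank versus uniform, possible linear dependence), which the paper's informal sketch glosses over, but the underlying argument is identical.
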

\begin{proof}
	We give the proof in an informal way here, as it is exactly as in \cite[Proposition 1]{BB21}: the rank of codewords in $\C'$ is determined by the rank of the linear combinations of the $\vec{P}^{(i)}$ where~$\vec{E}^{(i)} = \vec{V}\vec{P}^{(i)}$ and $\vec{P}^{(i)} \in \Fq^{t \times n}$. As there are $N$ such $\vec{P}^{(i)}$'s, they generate a $[t\times n,N]_q$ matrix code. The density of matrices of rank $w$ in $\Fq^{t \times n}$ is given by $\frac{\mathcal{S}_{t,n,w}}{q^{tn}}$, which has to be multiplied by $q^{N}$, thus the result. 
\end{proof}

\begin{lemma} \label{lemma:N/n}
	Assuming $ \vec{E}^{(1)},\dots, \vec{E}^{(N)} $ are linearly independent, one can expect to have a codeword of rank $t-\delta$ for all $\delta$ such that $N \ge \delta(n-t+\delta)$. 
\end{lemma}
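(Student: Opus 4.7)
The strategy is a direct heuristic counting argument based on Lemma~\ref{lemma:cardC'}. Setting $w = t-\delta$, we want to show that the expected number of rank-$(t-\delta)$ codewords in $\C'$ is at least one, i.e.\
$$
\frac{\mathcal{S}_{t,n,t-\delta}}{q^{tn-N}} \;\geq\; 1.
$$
First, I would invoke the linear independence hypothesis on the $\vec{E}^{(i)}$'s: this guarantees that the matrix code generated by the associated coefficient matrices $\vec{P}^{(i)} \in \Fq^{t\times n}$ has dimension exactly $N$, so that the estimate of Lemma~\ref{lemma:cardC'} is valid (without this, the denominator $q^{tn-N}$ should be replaced by a smaller quantity corresponding to the actual dimension, and the bound could only improve).

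Next, I would substitute the standard asymptotic expression for the number of rank-$w$ matrices in an $t \times n$ matrix space over $\Fq$, namely
$$
\mathcal{S}_{t,n,w} \;=\; \Theta\!\left(q^{w(t+n-w)}\right),
$$
which follows from the explicit formula $\mathcal{S}_{t,n,w}=\qbinom{t}{w}\prod_{i=0}^{w-1}(q^{n}-q^{i})$ together with the well-known estimate $\qbinom{t}{w} = \Theta(q^{w(t-w)})$. Plugging $w=t-\delta$, the exponent becomes $(t-\delta)(n+\delta)$, so the condition that the expected count is $\gtrsim 1$ rewrites as
$$
(t-\delta)(n+\delta) \;\geq\; tn - N.
$$

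Finally, I would expand the left-hand side: $(t-\delta)(n+\delta) = tn + t\delta - \delta n - \delta^2 = tn - \delta(n - t + \delta)$. Substituting into the inequality gives exactly
$$
N \;\geq\; \delta(n - t + \delta),
$$
as claimed. The only non-trivial step is the asymptotic estimate for $\mathcal{S}_{t,n,w}$, which is completely standard, so I would simply cite it or recall the one-line derivation; the rest is algebraic manipulation. The ``expect to have'' wording in the statement is exactly what the first-moment computation delivers, so no concentration argument is needed here.
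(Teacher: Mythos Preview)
Your proposal is correct and follows essentially the same approach as the paper: apply Lemma~\ref{lemma:cardC'}, use the standard approximation $\mathcal{S}_{t,n,w}\approx q^{w(t+n-w)}$, set $w=t-\delta$, and unwind the resulting inequality. The paper's proof is in fact terser than yours (it just cites \cite{BB21} for the approximation and says ``thus the result''), so your write-up is a strictly more detailed version of the same argument.
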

\begin{proof}
	It is the same as \cite{BB21}: the number $\mathcal{S}_{t,n,w}$ is approximated by $q^{w(t+n-w)}$ when $q \rightarrow \infty$, thus the result when $w = t-\delta$. 
\end{proof}

In the same fashion as $\RSL$, this lemma shows that there is a very simple way to have a gain in complexity when $N$ grows. In fact, Corollary \ref{cor:reductionN/n} shows that a $\MinRank(m,n,N,K+N,t,q)$ instance coming from $\MSL$ reduces into a $\minrank(m,n,K+N,t-\delta,q)$ one. However, this does not indicate any gain whenever $N < \delta(n-t+\delta)$.

\begin{corollary}\label{cor:reductionN/n}
	We can solve $\MSL(m,n,N,k,t,q)$ by solving $\minrank(m,n,k+N,t-\delta,q)$ where~$N \ge \delta(n-t+\delta)$.
\end{corollary}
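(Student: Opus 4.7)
The plan is a straightforward assembly of the material developed just before the statement. Given an $\MSL(m,n,N,k,t,q)$ instance $(\vec{B}_1,\dots,\vec{B}_k,\vec{Y}^{(1)},\dots,\vec{Y}^{(N)})$, first I would construct the augmented code
$$
\C_{aug} \eqdef \mathbf{Span}\left( \vec{B}_1,\dots,\vec{B}_k,\vec{Y}^{(1)},\dots,\vec{Y}^{(N)}\right),
$$
which by Lemma~\ref{lemma:dimCaug} has dimension at most $k+N$ and contains the matrix code $\C' \eqdef \mathbf{Span}(\vec{E}^{(1)},\dots,\vec{E}^{(N)})$, all of whose codewords have rank at most $t$.

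Next, I would invoke Lemma~\ref{lemma:N/n}: the hypothesis $N \ge \delta(n-t+\delta)$ is precisely what is needed to guarantee (under the generic assumption that the secret matrices $\vec{E}^{(1)},\dots,\vec{E}^{(N)}$ are linearly independent, which holds with overwhelming probability for a random $\MSL$ instance) that $\C'$ contains a nonzero codeword $\vec{E}'$ of rank at most $t-\delta$. Since $\C' \subseteq \C_{aug}$, the same $\vec{E}'$ is a low-rank codeword of $\C_{aug}$, and therefore witnesses a legitimate $\minrank(m,n,k+N,t-\delta,q)$ instance. Feeding $\C_{aug}$ to the assumed $\minrank$-solver with target rank $t-\delta$ returns such an $\vec{E}'$.

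Finally, the column support of the recovered $\vec{E}'$ is a $(t-\delta)$-dimensional subspace of $E$. For $\delta = 0$ we are done; for $\delta \ge 1$ I would iterate the previous step, at each round quotienting $\C_{aug}$ by the span of the codewords of $\C'$ already discovered and re-running the $\minrank$-solver (the abundance count of Lemma~\ref{lemma:cardC'} guarantees there remain many candidates), until the union of their column supports spans all of $E$. The main obstacle --- and essentially the only nontrivial point --- is exactly the genericity hypothesis underlying Lemma~\ref{lemma:N/n}; beyond this, the reduction is a bookkeeping exercise exploiting the inclusion $\C' \subseteq \C_{aug}$ together with the dimension bound $\dim \C_{aug} \le k+N$.
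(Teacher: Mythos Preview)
Your proposal is correct and follows essentially the same approach as the paper: build $\C_{aug}$, invoke Lemma~\ref{lemma:N/n} to guarantee a rank-$(t-\delta)$ codeword in $\C' \subseteq \C_{aug}$, and feed $\C_{aug}$ to a $\minrank$ solver with target rank $t-\delta$. The paper's own proof is in fact terser than yours---it does not spell out any iteration to recover all of $E$, simply noting that the low-rank codeword has its column support contained in $E$ and declaring this a $\minrank(m,n,k+N,t-\delta,q)$ instance---so your extra care about the final recovery step is a refinement rather than a deviation.
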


\begin{proof}
	We know that there is a linear combination of the errors that is of rank $t-\delta$. Thus, there is an error that has the same support as $\vec{E}^{(1)},\dots,\vec{E}^{(N)}$, but has rank $t-\delta$. This then corresponds to a $(m,n,k+N,t-\delta,q)$ $\MinRank$ instance. 
\end{proof}

{\bf \noindent Shortening the code.} It is possible to perform another specialization, which leads to a better gain in the complexity, especially in the regime $t\le N-1$. This consists of specializing the columns to zeroes instead of reducing the rank, a process called {\em shortening}.

\begin{lemma} \label{lemma:N/t}
	Assuming $ \vec{E}^{(1)},\dots, \vec{E}^{(N)} $ are linearly independent, there is always at least one linear combination $\sum\limits_{i=1}^{N}\lambda_i \vec{E}^{(i)} = \begin{pmatrix}
		\vec{0}^{m \times a} & \Tilde{\vec{E}}
	\end{pmatrix}$ where $a= \lfloor (N-1)/t\rfloor$.
\end{lemma}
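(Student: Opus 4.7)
The plan is to exploit the common column support of the $\vec{E}^{(i)}$'s to reduce the existence of a ``zero-columns'' linear combination to a counting argument on a homogeneous linear system.

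First I would factor out the common column support. Since $\Sp{\vec{E}^{(i)}} = E$ for all $i \in [1,N]$, and $E$ has dimension $t$, I would fix a basis of $E$ and arrange it as the columns of a matrix $\vec{V} \in \Fq^{m \times t}$ of full column rank. Every $\vec{E}^{(i)}$ then factors as $\vec{E}^{(i)} = \vec{V}\vec{P}^{(i)}$ for a unique $\vec{P}^{(i)} \in \Fq^{t \times n}$, and any linear combination satisfies
$$
\sum_{i=1}^{N} \lambda_i \vec{E}^{(i)} = \vec{V}\left(\sum_{i=1}^{N} \lambda_i \vec{P}^{(i)}\right).
$$
Because $\vec{V}$ has independent columns, the $j$-th column of $\sum_i \lambda_i \vec{E}^{(i)}$ vanishes if and only if $\sum_i \lambda_i \mathbf{Col}(\vec{P}^{(i)},j) = \vec{0}_t$.

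Next I would count equations. Requiring this vanishing for the first $a$ columns yields $at$ homogeneous $\Fq$-linear equations in the $N$ unknowns $\lambda_1,\dots,\lambda_N$. As soon as $at \leq N-1 < N$, the rank of this system is at most $N-1$, so its solution space has dimension at least one and contains a non-zero vector $(\lambda_1,\dots,\lambda_N)$. The largest such $a$ is $a = \lfloor (N-1)/t \rfloor$, giving the claimed value. The corresponding combination $\sum_i \lambda_i \vec{E}^{(i)}$ has its first $a$ columns identically zero, hence is of the form $\begin{pmatrix} \vec{0}^{m \times a} & \Tilde{\vec{E}} \end{pmatrix}$.

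Finally, I would note that the $\Tilde{\vec{E}}$ part is genuinely non-zero (so the statement is non-vacuous): by the standing assumption the $\vec{E}^{(i)}$ are $\Fq$-linearly independent, and $(\lambda_1,\dots,\lambda_N)$ was chosen non-zero, so $\sum_i \lambda_i \vec{E}^{(i)} \neq \vec{0}_{m\times n}$. The argument has essentially no obstacle beyond getting the floor right: the only place where care is needed is verifying that $at \leq N-1$ is exactly equivalent to $a \leq \lfloor (N-1)/t \rfloor$, which one checks by writing $N = qt + r$ with $0 \leq r < t$ and handling $r=0$ and $r>0$ separately.
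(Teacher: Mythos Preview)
Your proof is correct and follows essentially the same approach as the paper's: factor $\vec{E}^{(i)} = \vec{V}\vec{P}^{(i)}$ through the common column support, then count homogeneous linear constraints on the $\lambda_i$'s coming from the first $a$ columns of the $\vec{P}^{(i)}$'s. The paper's proof is more terse (it simply asserts the existence of the desired linear combination of the $\vec{P}^{(i)}$'s ``by definition''), whereas you spell out the $at \leq N-1$ counting argument and additionally verify non-triviality of the combination using the linear independence hypothesis---a point the paper leaves implicit.
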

\begin{proof}
	It is the same as \cite{BB21} and \cite{BBBG24}. We know that $\vec{E}^{(i)} = \vec{V}\vec{P}^{(i)}$ for a fixed $\vec{V} \in \Fq^{m \times t}$ and $\vec{P}^{(i)} \in \Fq^{t \times n}$ with $\vec{V}$ and each $\vec{P}^{(i)}$ of rank $t$. Then, we know by definition that there is a linear combination $\sum\limits_{i=1}^{N}\lambda_i \vec{P}^{(i)}$ that has $a = \lfloor (N-1)/t \rfloor$ columns equal to zero. The result follows immediately. 
\end{proof}

A consequence of this lemma is that it makes it possible to reduce the $(m,n,k+N,t,q)$ instance into a $(m,n-a,k+N-am,t,q)$ one.

\begin{corollary}\label{cor:reduction}
	We can solve $\MSL(m,n,N,K,t,q)$ by solving a $\minrank(m,n-a,K+N-am,t,q)$ where $a =  \lfloor (N-1)/t \rfloor$.
\end{corollary}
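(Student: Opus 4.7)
The plan is to combine Lemma~\ref{lemma:N/t} with the standard shortening trick used in the $\RSL$ setting (cf.~\cite{BB21,BBBG24}). Under the generic assumption that $\vec{E}^{(1)},\dots,\vec{E}^{(N)}$ are linearly independent (which holds with overwhelming probability for random $\MSL$ instances), Lemma~\ref{lemma:N/t} furnishes a non-trivial linear combination $\vec{E}' \eqdef \sum_i \lambda_i \vec{E}^{(i)}$ of rank at most $t$ and with $a = \lfloor (N-1)/t \rfloor$ zero columns. The first step of the reduction is to form the augmented code $\C_{aug} \eqdef \mathbf{Span}(\vec{B}_1,\dots,\vec{B}_k,\vec{Y}^{(1)},\dots,\vec{Y}^{(N)})$ which, by Lemma~\ref{lemma:dimCaug}, has dimension at most $K+N$ and contains every $\vec{E}^{(i)}$, hence also $\vec{E}'$.

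Next, up to a permutation of the $n$ columns, I may assume that the zero columns of $\vec{E}'$ are the first $a$ ones. I then consider the shortened code $\C_{short}$ of $\C_{aug}$ obtained by keeping only those codewords whose first $a$ columns vanish, and viewing these codewords as elements of $\Fq^{m \times (n-a)}$ after discarding the all-zero columns. The shortening imposes $am$ linear constraints; as $\C_{aug}$ is built from uniformly random matrices $\vec{B}_\ell$ and uniformly random scalars, these constraints are generically independent, so $\dim(\C_{short}) \geq K+N-am$. Moreover, the restriction of $\vec{E}'$ to its non-zero columns is a codeword of $\C_{short}$ of rank at most $t$. Finding such a codeword is therefore an instance of $\minrank(m,n-a,K+N-am,t,q)$.

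Finally, any solution of this $\minrank$ instance pulls back to a rank-$\le t$ codeword of $\C_{aug}$ having $a$ prescribed zero columns; with overwhelming probability its column support must then coincide with the shared support $E$ of the $\vec{E}^{(i)}$'s, which solves $\MSL$. The only subtlety is that one does not know a priori which $a$ columns of $\vec{E}'$ are zero; this will be the main, but benign, obstacle, and I would handle it by enumerating the at most $\binom{n}{a}$ subsets of columns to shorten along, a combinatorial overhead that is absorbed into the reduction and negligible compared to the cost of the $\minrank$ call itself.
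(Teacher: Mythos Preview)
Your proof is correct and follows the same shortening approach as the paper. One simplification: the $\binom{n}{a}$ enumeration over column subsets is unnecessary, because $a=\lfloor (N-1)/t\rfloor$ gives $at<N$, so forcing \emph{any} fixed set of $a$ columns of $\sum_i\lambda_i\vec{P}^{(i)}$ to vanish imposes only $at<N$ linear conditions and a non-trivial $(\lambda_i)$ always exists---this is precisely what Lemma~\ref{lemma:N/t} asserts for the first $a$ columns, and the paper accordingly shortens there directly with no guessing. (Minor aside: in ``generically independent, so $\dim(\C_{short})\geq K+N-am$'' the logic is reversed---the lower bound holds unconditionally, and generic independence yields the equality you actually want.)
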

\begin{proof}
	We know that there is a linear combination of the errors with $a$ columns that are zeroes. We are thus in the situation where we have that, for some values of $x_1,\dots,x_k,\lambda_1,\dots,\lambda_N$, \begin{equation} \label{eq:spec0} 
		\sum\limits_{i=1}^{k} x_{i}\vec{B}_i+\sum\limits_{i=1}^{N} \lambda_{i}{\vec{Y}}^{(i)} = \vec{E}= \begin{pmatrix}
			\vec{0}^{m \times a} & \widetilde{\vec{E}}
		\end{pmatrix}. 
	\end{equation} Using the $am$ linear equations that correspond to the $a$ first columns, The set of variables is then reduced to $\mu_1,\dots,\mu_{k+N-ma}$, such that $\vec{E} = \sum\limits_{i=1}^{k+N-ma} \mu_i\vec{B}'_i$ where each $\mu_i$ correspond to a linear combination of the $x_i$'s and $\lambda_j$'s, and where each $\vec{B}'_i$ is a linear combination of the $\vec{B}_j$'s and ${\vec{Y}}^{(\ell)}$'s, for all $ i\in\oneto{k+N-ma}$, $j \in \oneto{k}$, $\ell \in \oneto{N}$. To solve the $\MSL$ instance, one can then shorten the code by removing the first $a$ columns. It corresponds to a $\MinRank(m,n-a,k+N-am,t,q)$ instance.
\end{proof}

\begin{remark}
	This reduction of parameters corresponds exactly to what is actually done in the hybrid approach on $\MinRank$ and $\RSD$ \cite{BBB+23}. However, this reduction is done ``for free" here, due to the number of solutions.
\end{remark}

{\bf \noindent Combining the two approaches.} These two approaches are not mutually exclusive, and it is beneficial to combine them. If $N-\delta (n-t+\delta)>0$, it is still possible to try and apply the second approach on top of it. In fact, there are approximately $q^{N-\delta(n - t+\delta)}$ codewords of rank $t-\delta$. Thus, one shortens the code by taking the same method as previously, but with~$a = \lfloor\frac{ N-1-\delta(n - t+\delta)}{(t-\delta)}\rfloor$ instead. There are then approximately $q^{N - \delta(n - t+\delta) - a (t-\delta)}$ codewords of rank $t-\delta$ in the shortened code. This multiplicity of solutions allows us to reduce the dimension of the code once more, by~$N - \delta(n - t+\delta) - a (t-\delta)$. The complexity will be taken by using the optimal values of $\delta$ and~$a$.

Essentially, what this combination tells us is that it is possible to do trade-offs between the rank of the word we are looking for and the shortening we consider. Taking $\delta = 0$ obviously comes down to only shortening the code. Doing so, we know there will be $q^{N-at}$ correct codewords and so the code can have a dimension reduced by $N-at$. When $\delta >0$, the same thing appears, leading to our previous explanation.
\medskip

{\bf \noindent Hybrid approach.} As this will apply to all the attacks used to solve a $\MinRank$ instance, we briefly recall the complexity of the hybrid approach given by \cite{BBB+23} and in \cite{MiRitH}. This approach consists in multiplying the matrix code by a matrix $\widetilde{\vec{P}}$, and making the bet that this multiplication makes the first $a$ columns of the error to be $\vec{0}$, allowing a reduction of parameters as previously explained. Several $x_i$'s can also be guessed, to further reduce the dimension \cite{courtois,FSS10,BBC+20}. Thus, for a cost of $q^{\ell t+v}$ repetitions, it is possible to reduce a $\MinRank$ instance of parameters $(m,n,k,t,q)$ into an instance with parameters $(m,n-\ell,k-\ell m-v,t,q)$. Hence, the running-time of this approach is given by,
$$
O\left(q^{\ell t+v}\mathsf{C}_{\MinRank}(m,n-\ell,k-\ell m-v,t,q) \right)
$$ for optimal values $\ell$ and $v$. 

\subsection{Combinatorial attacks} \label{subsec:combinatorialattacks}

We recall the well-known kernel attack in Algorithm \ref{alg:kernelattack}. When there is only one solution in  $\MinRank(m,n,k+N,t,q)$, its complexity is given by
$$
\mathsf{C}_{\mathrm{Kernel}}(q,m,n,k,t) = O\left(k^\omega q^{t\lceil \frac{k+N}{m} \rceil}\right).
$$

\begin{algorithm}[h]
	\caption{Kernel attack on a MinRank instance with parameters $(m,n,k+N,t,q)$}
	\label{alg:kernelattack}
	\begin{algorithmic}[1]
		\Require Matrices $\vec{B}_1,\dots,\vec{B}_{k+N} \in \Fq^{m\times n}$.
		\Ensure A non-zero matrix $\vec{E} \in \mathbf{Span}\left( \vec{B}_1,\dots,\vec{B}_{k+N} \right)$ of rank equal to or smaller than $t$.
		\State Set $\ell = \lceil \frac{k+N}{m} \rceil$
		\Repeat 
		\State Sample a space $W \subseteq\Fq^{n}$ of dimension $\ell$, with matrix representation $\vec{W} \in \Fq^{n \times \ell}$.
		\State Set ${\vec{Y}} = \sum\limits_{i=1}^{k+N}x_i\vec{B}_i$ with unknowns $x_1,\dots,x_{k+N}$.
		\State Solve the linear system ${\vec{Y}}\vec{W} = \vec{0}$ in the $\{x_i\}_{i \in \oneto{k+N}}$ and compute the matrix~$\vec{E}$ associated to the solution.
		\Until{$|\vec{E}| \le t$}
		\State\Return $\vec{E}$.
	\end{algorithmic} 
\end{algorithm}

\begin{remark}
	One should notice that the algorithm is very similar to the one in \cite[Section 4.3]{GHPT17}. The reason is very simple: they are actually the same algorithms. The only difference is that the kernel attack aims at guessing a space of dimension~$\lceil (k+N)/m \rceil$ that lies in the kernel of the matrix $\vec{E}$, while the algorithm from \cite{GHPT17} aims at guessing a space of dimension~$\lceil m-(k+N)/n\rceil$ in which the error lies. The two algorithms thus perform exactly the same thing (when considering the transpose of the code). In \cite[Theorem 2]{GHPT17}, the behaviour of this algorithm in such conditions is already analyzed. Its running-time is given in \cite[Theorem 2]{GHPT17}. The running-time of the Kernel attack on the $\MinRank$ instance built from $\C_{aug}$, when not considering any shortening is given by 
	$$ \widetilde{O}\left(q^{\min(e_-,e_+)} \right)
	$$ where $e_- = (t-\delta)(\ell-\delta)$ and $e_+ = (t-\delta-1)(l-\delta-1)+n(\delta+1)-N$, with $\ell=  \lceil (k+N)/m \rceil$ and $\delta = \lfloor N/n \rfloor$.
\end{remark}

The formula we just described is not optimal however: we will always shorten the code as it is always beneficial. This leads to the following proposition.

\begin{proposition} \label{prop:compkernel}
	There is a combinatorial algorithm that solves the $\MSL(q,m,n,k,t,N)$ with running-time
	\begin{align*}
		{O}\Big(&\mathsf{C}_{\mathrm{Kernel}}(q, m, n-a, k -a m + \delta(n - t+\delta) + a (t-\delta), t-\delta)\Big)
	\end{align*}
	where $\delta$ and $a'$ are such that $N> \delta(n-t+\delta)+a'(t-\delta)$.
\end{proposition}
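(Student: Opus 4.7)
The plan is to combine the two specializations developed in Section~\ref{subsec:GenApp}: lowering the target rank by $\delta$ (Corollary~\ref{cor:reductionN/n}) and shortening by $a$ columns (Corollary~\ref{cor:reduction}), and then to invoke the kernel attack on the resulting reduced instance.

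First, I would form the augmented code $\C_{aug} \eqdef \mathbf{Span}(\vec{B}_1,\dots,\vec{B}_k,\vec{Y}^{(1)},\dots,\vec{Y}^{(N)})$, of dimension $k+N$, which contains every $\vec{E}^{(j)}$ and hence every element of $\C'$. Inside $\C' \subseteq \C_{aug}$, the counting argument behind Lemma~\ref{lemma:cardC'} together with the approximation used in Lemma~\ref{lemma:N/n} yields (heuristically) $q^{N-\delta(n-t+\delta)}$ codewords of rank at most $t-\delta$. Applying then the transposed version of Lemma~\ref{lemma:N/t} to these reduced-rank codewords, a proportion $q^{-a(t-\delta)}$ of them has its first $a$ columns equal to zero, leaving a pool of approximately
$$
q^{N-\delta(n-t+\delta)-a(t-\delta)}
$$
valid solutions once we shorten on those columns. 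The assumption $N > \delta(n-t+\delta)+a(t-\delta)$ is precisely what guarantees this pool is nonempty with overwhelming probability.

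Next, I would convert the shortening into a parameter reduction exactly as in the proof of Corollary~\ref{cor:reduction}: removing the first $a$ columns trims the length from $n$ to $n-a$ and imposes $am$ linear constraints on the coefficients $(x_i,\lambda_j)$, lowering the ambient dimension of the code from $k+N$ down to $k+N-am$. Since there still remain roughly $q^{N-\delta(n-t+\delta)-a(t-\delta)}$ independent solutions of rank $\le t-\delta$, we may further quotient out a random subspace of that dimension while keeping at least one solution. Collecting the exponents gives a residual $\MinRank$ instance with parameters
$$
\bigl(\,m,\; n-a,\; k-am+\delta(n-t+\delta)+a(t-\delta),\; t-\delta,\; q\,\bigr),
$$
to which I apply the kernel attack of Algorithm~\ref{alg:kernelattack}; its running-time is by definition $\mathsf{C}_{\mathrm{Kernel}}$ evaluated on those parameters, matching the claim.

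The main subtle point will be the double bookkeeping when combining rank reduction with shortening: one must check that the $q^N$ solutions in $\C_{aug}$ split according to the two filtrations --- first by having rank $\le t-\delta$, then by vanishing on the first $a$ columns --- essentially independently, so that the two gains multiply. This is where the linear independence of $\vec{E}^{(1)},\dots,\vec{E}^{(N)}$ (implicit in Lemmas~\ref{lemma:N/n} and~\ref{lemma:N/t}) and the strict inequality on $N$ are used. Optimizing over all admissible pairs $(\delta,a)$ then selects the best combinatorial parameters of this family.
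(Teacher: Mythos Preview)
Your proposal is correct and follows exactly the approach sketched in the paper's proof, which simply says it is a straightforward application of Corollaries~\ref{cor:reductionN/n} and~\ref{cor:reduction} followed by the kernel algorithm on the shortened, dimension-reduced code. You have filled in the bookkeeping that the paper leaves implicit in the paragraph ``Combining the two approaches'' (the count $q^{N-\delta(n-t+\delta)-a(t-\delta)}$ of surviving solutions and the corresponding dimension drop to $k-am+\delta(n-t+\delta)+a(t-\delta)$), so there is nothing materially different.
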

\begin{proof}
	This is a straightforward application of Corollaries \ref{cor:reductionN/n} and \ref{cor:reduction}, the kernel algorithm is applied to a shortened code with a reduced dimension, hence the result.  
\end{proof}

\begin{remark}
	The hybrid approach from Section \ref{subsec:GenApp} still applies, which can reduce the complexity even further. This should be taken into account: the whole complexity is actually 
	$$
	O\Big(q^{\ell t+v}\mathsf{C}_{\mathrm{Kernel}}(q, m, n-a-\ell, k -a m + \delta(n - t+\delta) + a (t-\delta)-\ell m-v, t-\delta)\Big)
	$$
	where we optimize over $\ell$ and $v$.
\end{remark}

{\bf \noindent A bound for a polynomial complexity.} It is well-known that the $\RSL$ problem becomes solvable in polynomial time if enough instances of decoding are given. It is the same for $\MSL$. 

\begin{corollary}
	If $N \ge \frac{kt+m}{m-1}$, then $\MSL(q,m,n,N,k,t)$ is solvable in polynomial time.
\end{corollary}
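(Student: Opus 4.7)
The plan is to instantiate the combinatorial approach of Proposition~\ref{prop:compkernel} with parameters that drive the exponent of the kernel attack down to zero, making it run in polynomial time.

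First, I would apply the shortening reduction of Corollary~\ref{cor:reduction}: setting $a = \lfloor (N-1)/t \rfloor$, the $\MSL(q,m,n,N,k,t)$ instance reduces to a $\MinRank(m, n-a, k+N-am, t, q)$ instance. This amounts to restricting the augmented code $\C_{aug}$ (of dimension $k+N$, by Lemma~\ref{lemma:dimCaug}) to its codewords whose first $a$ columns vanish; the existence of a nonzero low-rank codeword of this form follows from Lemma~\ref{lemma:N/t}.

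Second, I would run the kernel attack (Algorithm~\ref{alg:kernelattack}) on the shortened instance. Its cost is $q^{t \lceil (k+N-am)/m \rceil}$ up to polynomial factors. The critical observation is that once $am \geq k+N$, the ceiling collapses to zero: a single random choice of the auxiliary subspace $W$ suffices, and the associated linear system can be solved in $\mathrm{poly}(m,n,k,N,\log q)$ time. The returned matrix has column support contained in $E$ and, by a genericity argument in the large-$N$ regime, its support coincides with $E$, which is enough to recover the secret.

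Third, the threshold $am \ge k+N$ with $a = \lfloor(N-1)/t\rfloor$ unpacks, after clearing the floor, into a linear lower bound on $N$ of the required shape: $m(N-1)/t \ge k+N$ rearranges to $N(m-t) \ge kt + m$, which together with the transpose variant of the kernel attack (having denominator $n-a$ rather than $m$) and careful handling of the ceiling yields the announced bound. The main obstacle is precisely this last bookkeeping step: matching the clean constant $(kt+m)/(m-1)$ requires tracking floor and ceiling losses tightly, and possibly choosing between the primal and transposed formulations of the kernel attack to get the optimal denominator. There is no deep idea beyond that—the heart of the argument is simply that shortening by $\lfloor(N-1)/t\rfloor$ columns kills the entire dimension of $\C_{aug}$ as soon as $N$ grows linearly in $k$, $m$, and $t$.
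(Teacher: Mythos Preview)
Your approach is essentially identical to the paper's: apply the shortening reduction of Corollary~\ref{cor:reduction} with $a=\lfloor(N-1)/t\rfloor$, run the kernel attack on the resulting $\MinRank(m,n-a,k+N-am,t,q)$ instance, and solve for the threshold on $N$ that forces the exponent $\lceil(k+N-am)/m\rceil$ to vanish. The paper's proof is a two-line sketch doing exactly this, assuming $a=(N-1)/t$ is an integer.

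One remark on your final paragraph: your arithmetic $N(m-t)\ge kt+m$ is exactly what the paper's own computation yields as well, so the natural denominator is $m-t$, not $m-1$. Neither the transpose variant nor tighter ceiling bookkeeping produces $m-1$; the stated constant in the corollary appears to be a typo for $m-t$. Do not spend effort trying to force the $m-1$ denominator---the paper's proof does not obtain it either.
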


\begin{proof}
	We simply solve $0 = \lfloor \frac{k-am+N}{m}\rfloor$ to remove the exponential part of the complexity of $\mathsf{C}_{\mathrm{Kernel}}(q,m,n-a,k+N-am,t)$. By considering $a = (N-1)/t$ is an integer, we obtain the result in a straightforward manner. 
\end{proof}

\begin{remark}
	We omit many specializations such as rank reduction, hybrid approach, or even reducing the dimension by $N-at$. We do this to keep computations simpler. The obtained bound would be only slightly different and the result would be roughly the same: $\MSL$ becomes solvable in polynomial time whenever~$N$ is greater than approximately $kt/m$. 
\end{remark}

\subsection{Algebraic attacks} \label{subsec:algebraicattacks}

\subsubsection{Support Minors.} We now explain the Support Minors modelling from \cite{BBC+20,BBB+23,BG25}. The goal of this modelling is to obtain a large number of quadratic equations, and hope that it is possible to linearize the system. We explain the modelling when applied to $\C_{aug}$ without reductions.

Let $\vec{E}$ be the error of rank $t$ that must be found in the matrix code 
$$
\C_{aug} \eqdef \mathbf{Span}\left(\vec{B}_1,\dots,\vec{B}_k,{ \vec{Y}}^{(1)},\dots,{\vec{Y}}^{(N)}\right).
$$ 
First, let $\vec{S} \in \Fq^{m \times t}$ and $\vec{C} \in \Fq^{t \times n}$ such that $\vec{E} = \vec{SC} = \sum\limits_{i=1}^{k} x_{i}\vec{B}_i+\sum\limits_{i=1}^{N} \lambda_{i}{\vec{Y}}^{(i)}$. 

To obtain the quadratic equations, set $r_j$ the $j$th row of $ \sum\limits_{i=1}^{k} x_{i}\vec{B}_i+\sum\limits_{i=1}^{N} \lambda_{i}{\vec{Y}}^{(i)}$. Then, the matrix 
$$
\widetilde{\vec{C}} = \begin{pmatrix}
	r_j\\\vec{C}
\end{pmatrix}
$$ 
is of rank $t$ and its maximal minors are zeros. This results in Modelling \ref{mod:SMM}.

\begin{modeling}[$\MSL$ Support Minors Modeling] \label{mod:SMM}
	Let $\vec{C}$ be a matrix of unknowns of size $t \times n$. We consider the system given by the maximal minors of the matrices $\begin{pmatrix}
		r_j\\\vec{C}
	\end{pmatrix}$, i.e., $$\Big\{f=0 ~\vert~ f \in \normalfont{{\bf \noindent MaxMinors}}\begin{pmatrix}
		r_j\\\vec{C}
	\end{pmatrix},~~~j\in \oneto{m}\Big\}$$ This system is composed of:
	\medskip 
	\begin{itemize}\setlength{\itemsep}{5pt}
		\item[$\bullet$] $k+N+\binom{n}{t}$ variables $x_1,\dots,x_k,\lambda_1,\dots,\lambda_N$ and the $c_T$'s, with $T \subset \{1..n\}$, $|T| = t$, which represents maximal minors of $\vec{C}$;
		\item[$\bullet$] $m\binom{n}{t+1}$ bilinear equations with coefficients in $\Fq$.
	\end{itemize}
\end{modeling}

If linearization is not possible immediately, the quadratic equations are all multiplied by all the variables $x_i$ and $\lambda_i$, until there are enough equations compared to the number of monomials. We refer to \cite{BBC+20} for more explanations on their numbers, and summarize it in Heuristic \ref{heur:SMMnbeq}.

\begin{heuristic}\label{heur:SMMnbeq}
	The number of linearly independent equations of bi-degree (1,b) obtained from Modelling \ref{mod:SMM} is $$\mathcal{N}_{b}(m,n,k+N,t) =  \sum_{i = 1}^{b}(-1)^{i+1} \binom{n}{ t+i}\binom{k+N+b-1-i}{b-i}
	\binom{m+i-1}{i}$$
	
	The number of monomials that appear in these equations is then 
	$$
	\mathcal{M}_b(m,n,k+N,t)= \binom{k+N+b-1}{b}\binom{n}{t}
	$$
\end{heuristic}

After the reductions from Section \ref{subsec:GenApp}, these values are 
$$\mathcal{N}_{b}^{\mathrm{red}} = \mathcal{N}_{b}^{\mathrm{red}}(m,n-a-\ell,k-am+ \delta(n - t+\delta) + a (t-\delta)-\ell m,t-\delta)
$$ 
$$
\mathcal{M}_b^{\mathrm{red}} = \mathcal{M}_b^{\mathrm{red}}(m,n-a-\ell,k-am+ \delta(n - t+\delta) + a (t-\delta)-\ell m,t-\delta).$$

Over $\mathbb{F}_2$, it is beneficial to multiply by all monomials of degree lower than $b$. The complexity of solving the instance is then $\mathcal{O}\left( q^{\ell t+v}\mathcal{N}_{\le b}^{\mathrm{red}} \mathcal{M}_{\le b}^{\mathrm{red}}  \right)$ for the first value of $t+2 > b \ge 1$ such that~$\mathcal{N}_{\le b}^{\mathrm{red}}  \ge \mathcal{M}_{\le b}^{\mathrm{red}} -1$. 

\begin{remark}
	A reader should note that thanks to the use of $\MinRank$ and $\MSL$ instead of decoding random $\Fqm$--linear codes and $\RSL$, the modelling is already well-known and analyzed thoroughly, contrary to new modellings for $\RSL$ such as~\cite{BB21}.
\end{remark}

\subsubsection{Minors}\label{subsub:Minors}
The Support Minors modelling is not the only algebraic attack on $\MinRank$. In fact, the minors modelling previously existed and has been analyzed in \cite{FSS10,FSS13}. This algorithm must not be neglected, as it can perform better than Support Minors in some cases.

The Minors modelling consists simply in computing the matrix $\widetilde{\vec{E}}=  \sum\limits_{i=1}^{k} x_{i}\vec{B}_i+\sum\limits_{i=1}^{N} \lambda_{i}{\vec{Y}}^{(i)}$ where all the $x_i$ and $\lambda_i$ are unknowns, and then solving the system composed of the minors of size $t+1$ of $\widetilde{\vec{E}}$.
The Hilbert series of the ideal generated by the system is \begin{align*}
	HS(x) &=\left[ (1-x)^{(m-t)(n-t)-(K+1)}\frac{\det(A(x))}{x^{\binom{t}{2}}}\right],\\
	\text{with }    A(x) &=
	\begin{pmatrix}
		\sum_{\ell=0}^{\max(m-i,n-j)} \binom{m-i}{\ell}\binom{n-j}{\ell}x^\ell
	\end{pmatrix}_{\substack{1\le i \le t, 1\le j \le t}} 
\end{align*}
The complexity is then 
$$
\mathsf{C}_{\mathrm{Minors}}(q,m,n,k,t) = \widetilde{{O}}\left(\binom{k+D}{D}^\omega\right)
$$ 
where $D= \operatorname{deg}(HS(x))+1$ \cite{FSS10,FSS13}.
The specializations of variables must be made to lower the complexity (by optimizing over $\ell$ and $v$, as was done previously) 
$$
	{O}\Big(q^{\ell t+v}\mathsf{C}_{\mathrm{Minors}}(q, m, n-a-\ell, k -a m+ \delta(n - t+\delta) + a (t-\delta)-\ell m-v, t-\delta)\Big)
$$

\subsection{Attacking the stationary-$\MinRank$ problem}

We are now interested in solving stationary-$\minrank$. For that, we will try to adapt the attacks on $\MinRank$, and see how they perform. Essentially, one could try to adapt the kernel attack \ref{alg:kernelattack} or the support minors modelling from the previous section, using the fact that all errors have the same support. 
\medskip

{\bf \noindent Adapting the kernel attack.} Let $\vec{B}_1^{(1)},\dots,\vec{B}_k^{(1)},\vec{Y}^{(1)},\dots,\vec{B}_1^{(N)},\dots,\vec{B}_k^{(N)},\vec{Y}^{(N)} \in \Fq^{m\times n}$ be an instance of the stationary-$\MinRank$ problem. Adapting the kernel attack comes down to try and find a matrix $\vec{K} \in \Fq^{\ell \times m}$ such that 
$$
\vec{K}\left(\vec{Y}^{(j)}+\sum\limits_{i=1}^{k} x_i^{(j)}\vec{B}_i^{(j)}\right) = \vec{0}
$$ 
for all $j \in [1,\dots,N]$, where $\ell$ is as previously a value such that the linear system possesses enough equations. Note that we took the left kernel here to take advantage of the fact that the support of the errors is the same. This attack works exactly the same as the kernel attack. However, although the number of equations doubles, so does the number of variables. The result of this is that there is no gain in the running-time compared to attacking a single instance of the problem.
\medskip 

{\bf \noindent Adapting support minors.} The same goes for the support minors modeling: more equations are available to an attacker, at the price of more variables and monomials. In particular, there will be $N \cdot m \cdot \binom{n}{t+1}$ equations of degree two, but $\binom{n}{t}\cdot k \cdot N$ monomials of degree two. Note that this is also the case in higher degrees, as the increase of the degree is done by multiplying the equations by monomials composed of the $k \cdot N$ variables. Once again, this prevents any gain in the running-time, as this will not allow linearization earlier than for the usual attack. 
\medskip

{\bf \noindent Using $\MSL$.} One could also try to attack this by using $\MSL$. More precisely, for a given number of instances, say, $\ell \le N$ , a way to solve could be to build the code 
$$
\mathbf{Span}\left( \vec{B}_1^{(1)},\dots,\vec{B}_k^{(1)},\dots,\vec{B}_1^{(\ell)},\dots,\vec{B}_k^{(\ell)} \right).
$$ 
Then, the associated $\MinRank$ instances correspond to an $\MSL$ instance with parameters 
$$
(m,n,\ell,k\ell,t,q).
$$ 
This approach can obviously only be used if $k\ell$ is not greater than $mn$ otherwise the whole space is saturated and many solutions will exist. However, this is not very efficient: the dimension of the code is multiplied by $\ell$, which negates all the improvements an $\MSL$ instance could bring. It is thus never worth considering such an attack.  

As a result, the most efficient attacks on this problem are the usual attacks on $\MinRank$ previously described (we stress that in this case the specialization mentioned above does not apply, thus using a usual $\MinRank$ instance).

\subsection{Relations between the problems}

We now explain briefly the relations between different problems. First, in the same way that decoding a random $\mathbb{F}_{q^{m}}$--linear code via a $\MinRank$ solver, $\RSL$ can be solved through an $\MSL$ solver, by considering the matrix code associated to the~$\Fqm$--linear code. Then, $\MSL$ and stationary-$\MinRank$ are related as seen previously, and it is obvious that stationary-$\MinRank$ is a generalization of $\MSL$. Finally, stationary-$\MinRank$ can be solved through $\MinRank$, by considering only one of the instances. Although it does not prove its hardness, the fact that the adaptations of $\MinRank$ solvers do not work better seems to indicate that it is indeed closer to $\MinRank$ than $\MSL$.

\begin{figure}[h]
	\centering
	\scalebox{0.8}{
		\begin{tikzpicture}[->,>=stealth,thick,node distance=4cm]
\tikzstyle{node}=[circle, draw, align=center, minimum width=2.5cm]
			
\node[node] (SM) {\textsf{Stationary }\\ $\MinRank$};
			\node[node, right of=SM, xshift = 2cm] (MSL) {$\MinRank$ \\ \textsf{Support }\\\textsf{Learning}};
			\node[node, left of=SM] (M) {$\MinRank$};
			\node[node, above of=MSL] (RSL) {$\RSL$};
			\node[node, above of=M] (RSD) {Rank decoding};
\draw[->,>=stealth,dashed] (SM) edge[bend left] (MSL)
			node[xshift=3.2cm,yshift=-2.3cm,above,align=center]
			{\textsf{Generalization }\\ \textsf{of the problem}};
			\draw[->,>=stealth,thick] (MSL) edge[bend left] (SM)
			node[xshift=-2.8cm,yshift=1.3cm,above, align=center]
			{\textsf{Increase} \\ \textsf{of dimension}};
			\draw (SM) -- (M);
			\draw (RSL) -- (RSD);
			\draw (RSD) -- (M);
			\draw (RSL) -- (MSL);
			\node[
			draw,
			dashed,
			fit=(M)(SM),
			inner sep=15pt
			] (groupbox1) {};
			\node[
			draw,
			dashed,
			fit=(M)(SM)(MSL),
			inner sep=40pt
			] (groupbox) {};
			\node[below=0pt of groupbox1, align=center]
			{\textsf{Same parameters for $\MinRank$}};
			\node[below=0pt of groupbox, align=center]
			{\textsf{Use $\MinRank$ solvers}};
	\end{tikzpicture}}
	\caption{The notation $\mathsf{A} \rightarrow \mathsf{B}$ indicates that problem $\mathsf{A}$ reduces to problem $\mathsf{B}$. In other words, solving the problem $\mathsf{B}$ leads to a solution of the problem $\mathsf{A}$.  A dashed arrow $\dashrightarrow$ means there is a change of parameters in the reduction, making it impractical for a large set of parameters. Finally, what we mean by ``same parameters for $\MinRank$" is that when stationary-$\minrank$ is outside the parameters for which a reduction to $\MSL$ is possible, there is at the moment no better way than to attack only one of the $\MinRank$ instances.}
\end{figure}
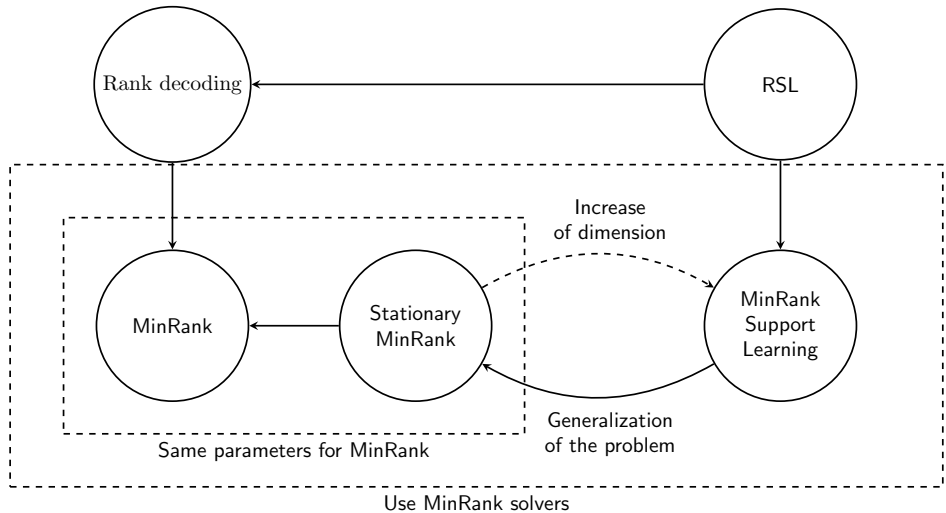
 \section{Parameters} \label{sec:parameters}
\subsection{Sizes and performances of the scheme}\mbox{ }
\medskip

{\bf \noindent Sizes of public key and ciphertext.} Thanks to the primal and dual formulations of the $\MinRank$ problem, it is always possible to reduce the sizes of the public-key and ciphertexts. 
For instance, using $\vec{EH}^\top \in \Fq^{\ell_1 \times (mn-k)}$ instead of $\vec{SG}+\vec{E}$ greatly reduces the size of the public-key, bringing it down to 
$$
\ell_1 \cdot (mn-k)\cdot \log_2(q) \ .
$$ 
For the ciphertext, it has a size given by 
$$
(\ell_1 \cdot \ell_2 + \ell_2 \cdot k)\cdot \log_2(q) \ .
$$ 
We present parameters of \textsf{MinRankPKE}, described in Section~\ref{sec:MRPKE}, in Table \ref{tab:parametres}, where the running times of attacks are the minimum between solving $\MSL$ with parameters $(q,m,n,k,r,\ell_1)$ or $(q,m,n,mn-k-\ell_1,d,\ell_2)$. Note that it is also possible to take unbalanced parameters to reduce either the size of the ciphertext or the size of the public-key.

\begin{table}[h] 
	\centering
	\scalebox{0.9}{
		\renewcommand{\arraystretch}{1.2} \begin{tabular}{|c||c|c|c|c|c|c|c|c||c|c|c||c|c|}
			\hline
			\multirow{2}{*}{Security} & \multicolumn{8}{c||}{Parameters} & \multicolumn{3}{c||}{Complexity of attacks} & \multicolumn{2}{c|}{Sizes} \\
			\cline{2-14}
			& $q$ & $m$ & $n$ & $k$ & $r$ & $d$ & $\ell_1$ & $\ell_2$ 
			& $\mathsf{C}_{\mathrm{Kernel}}$ & $\mathsf{C}_{\mathrm{SupportMinors}}$ & $\mathsf{C}_{\mathrm{Minors}}$ & $|\pk|$ & $|\mathsf{ct}|$  \\
			\hline
			\multirow{1}{*}{I} & 2 & 81 & 81 & 3201 & 4 & 4 & 35 & 35 
			&  $2^{150}$ & $2^{150}$&$2^{150}$ &  14 700 B 
			& 14 158 B\\
			\hline
			\multirow{1}{*}{III} & 2 & 103 & 103 & 5270
			& 5 & 5 & 53 & 53 
			& $2^{226}$ & $2^{207}$ & $2^{249}$ & 35 370 B 
			& 35 365 B  \\
			\hline
			\multirow{1}{*}{V} & 2 & 115 & 115 & 6613
			& 6 & 6 & 75 & 75 
			& $2^{298}$ & $2^{272}$ & $2^{325}$ & 62 020 B 
			& 62 700 B  \\
			\hline
	\end{tabular}}
	\vspace{1.5mm}
	\caption{Parameters and sizes for $\textsf{MinRankPKE}$ according to the NIST security levels I, III and V, with running time of the main known attacks, size of the public-key, and size of the ciphertext (in Bytes), taking $\omega = 2.8$. Parameters are chosen such that $|\pk| \approx |\mathsf{ct}|$. Furthermore, we always take $\kappa$, the dimension of the Gabidulin code, to be $3$.}\label{tab:parametres}
	
\end{table}

\medskip 

{\bf \noindent Performances of the scheme.} In Table \ref{tab:performances}, we give the performances of the scheme for the balanced parameter sets, in millions of CPU Cycles. The implementation has been done using \texttt{rbc-lib} \cite{rbclib}, and has room for many improvements as it is a naive implementation. Still, this allows us to see that the scheme has the potential to be fast and is competitive with other encryption schemes like \textsf{FrodoKEM}. Furthermore, it has potential for a lot of parallelization as computations are mostly matrix multiplications.

\begin{table}[h] 
	\centering
	\scalebox{0.9}{
		\renewcommand{\arraystretch}{1.2}
		\begin{tabular}{|c||c|c|c|c|c|c|c|c||c|c|c|}
			\hline
			\multirow{2}{*}{Security} & \multicolumn{8}{c|}{Parameters} & \multicolumn{3}{c|}{Performances (M)} \\
			\cline{2-12}
			& $q$ & $m$ & $n$ & $k$ & $r$ & $d$ & $\ell_1$ & $\ell_2$ 
			& KeyGen & Encryption & Decryption \\
			\hline
			I & 2 & 81 & 81 & 3201 & 4 & 4 & 35 & 35 
			& 4.2 ms & 4.3 ms & 0.3 ms \\
			\hline
			III  & 2 & 103 & 103 & 5270
			& 5 & 5 & 53 & 53 
			& 11.6 ms & 11.3 ms & 0.5 ms \\
			\hline
			V & 2 & 115 & 115 & 6613
			& 6 & 6 & 75 & 75 
			& 20.0 ms & 20.1 ms & 1 ms \\
			\hline
	\end{tabular}}
	\vspace{1.5mm}
	\caption{Parameters and performances of $\mathsf{MinRankPKE}$, with timing results for key generation, encryption, and decryption (in milliseconds). The tests were run on an Intel® Core™ i7-1365U (13th Gen, 12 threads) with 32 GB RAM.}\label{tab:performances}
\end{table}

\subsection{Comparison with other encryption schemes}

We propose in Table \ref{tab:comparaison} a comparison of $\mathsf{MinRankPKE}$ with other encryption schemes. As expected, our scheme is less efficient than schemes such as \textsf{HQC} or \textsf{RQC} \cite{HQC,RQC,ABD24} but our scheme has the benefit of not having a structure like quasi-cyclicity or $\mathbb{F}_{q^{m}}$--linearity. It also remains close to \textsf{FrodoKEM} with only $5$kB difference in the public-key and in the ciphertext sizes, while relying on small fields. Furthermore, the combination of the public-key and ciphertext sizes compares well with original McEliece instantiation or other matrix code encryptions such as the scheme from \cite{ACD24}, or Loidreau's cryptosystem~\cite{Loidreau,Pham}. Compared to some schemes relying on unstructured $\RSL$ \cite{BGHO}, we perform better by a significant margin. What this table shows is that more structure (and thus more security assumptions) implies more efficient schemes. It is necessary to have both kinds of schemes, as the structures (arising from groups) may potentially turn out to be unsecure (at least adding these structures a priori decreases the security), even if all the schemes in the table are secure according to current attacks. Of course, our scheme still does not rely on plain $\MinRank$. However, as seen in the previous sections, the problem we introduced is solved only through $\MinRank$ solver and possesses a search-to-decision reduction, which are strong arguments concerning the security of the assumption.

\begin{table}[h] 
	\centering
	\renewcommand{\arraystretch}{1.2}
	\scalebox{0.7}{\begin{tabular}{|c||c|c|c|c|c|c|c|}
			\hline
			Scheme & Metric & No ideal structure & No masking of a code & No extension & No large field &$\mathsf{pk}$ & $\mathsf{ct}$ \\
			\hline
			Kyber \cite{Kyber} & Euclidean & \xmark & \cmark & \cmark & \xmark & 0.8 kB       &  0.8 kB       \\
			HQC \cite{HQC} & Hamming & \xmark & \cmark & \cmark & \cmark &2.2  kB    &    4.4 kB   \\
			RQC \cite{RQC} & Rank & \xmark & \cmark & \xmark & \cmark & 0.3 kB       &  1.1 kB       \\
			LowMS \cite{ADG+} & Rank & \cmark & \xmark & \xmark & \cmark & 4.77 kB & 1.14 kB\\
			Loidreau \cite[Conclusion]{Pham} & Rank & \cmark & \xmark & \xmark & \cmark & 34.5 kB & 1.8 kB \\ 
			Generalization of Loidreau \cite{EL25} & Rank & \cmark & \xmark & \xmark & \cmark  & 9.5 kB & 0.94 kB \\
			McEliece \cite{McEliece} & Hamming & \cmark & \xmark & \cmark & \cmark & 261 kB        &  96 B      \\
			MinRank Gabidulin \cite{ACD24}& Rank & \cmark & \xmark & \cmark & \cmark  & 33-78 kB        & 207-84 B    \\
			FrodoKEM \cite{Frodo} & Euclidean & \cmark & \cmark & \cmark & \xmark &9.6 kB  & 9.7 kB   \\
			Multi-UR-AG \cite{BBBG24} & Rank & \cmark & \cmark & \xmark & \cmark & 4.1 kB       &  6.9 kB       \\
			Injective Rank Trapdoor \cite{BGHO}& Rank & \cmark & \cmark & \xmark & \cmark & 203 kB & 1663 kB \\
			Alekhnovich \cite{A03} & Hamming & \cmark & \cmark & \cmark & \cmark & $\ge$ MB*       & $\ge$ MB*           \\
			\textbf{\textrm{MinRankPKE-I}}  & Rank & \cmark & \cmark & \cmark & \cmark  & \textbf{14.7 kB} & \textbf{14.1 kB} \\
			\hline
		\end{tabular}}
	\vspace{1.5mm}
	\caption{Public-key ($\mathsf{pk}$) and ciphertext sizes ($\mathsf{ct}$) of $\mathsf{MinRankPKE}$ for the security level I of the NIST. No actual instantiation of Alekhnovich's scheme has been proposed in the literature. However, we estimate the sizes as several megabytes.}
	\label{tab:comparaison}
\end{table}
 \section{Conclusion and further work}

This work presents an encryption scheme that follows Alekhnovich and Regev' framework, adapted to the rank metric using $\Fq$--linear matrix codes. Our scheme possesses several features, namely: $(i)$ a search-to-decision security reduction, $(ii)$ practical parameters and $(iii)$ having implementation performances comparable to other unstructured schemes such as \textsf{FrodoKEM}.
To study the security of our scheme, we had to introduce two new problems: stationary-$\MinRank$ and \textsf{MSL} which are closely related. We thoroughly studied their algorithmic hardness via usual $\MinRank$ solvers. Overall, this results in a practical encryption scheme, that we called \textsf{MinRankPKE}, while still keeping security reductions. We stress that this security reduction is present in very few schemes, for instance \textsf{HQC} does not benefit of such a property.

For future work, it would be interesting to instantiate our scheme by introducing some structures, in the way \textsf{HQC}, \textsf{RQC} or \textsf{Kyber} do, or on the contrary to find a way to be even closer to $\MinRank$.  \label{sec:ccl}
    \section*{Acknowledgments.}
The authors are supported by the French \emph{Agence Nationale
	de la Recherche} (ANR) through the \emph{Plan France 2030 programme}
ANR-22-PETQ-0008 ``{PQ-TLS}''. The work of Thomas Debris-Alazard was
funded through the French ANR project \emph{Jeunes Chercheuses, Jeunes
	Chercheurs} ANR-21-CE39-0011 ``{COLA}''.

\iftoggle{llncs}{ \bibliographystyle{src/splncs04}}{
\bibliographystyle{alpha}
}
\newcommand{\etalchar}[1]{$^{#1}$}

\newpage
\appendix

\section{Proof of Theorem \ref{theo:fund}}

{\bf \noindent Notation.} A {\em distinguisher} between two distributions $\mathcal{D}_{0}$ and
$\mathcal{D}_{1}$ is a probabilistic polynomial time algorithm
$\mathcal{A}$ that takes as input an oracle $\mathcal{O}_{b}$ corresponding
to a distribution $\mathcal{D}_{b}$ with $b\in \{0, 1\}$ and outputs an
element $\mathcal{A}(\mathcal{O}_{b})\in\{0, 1\}$.
Consider the following approach for solving a decision problem
between two distributions $\mathcal{D}_{0}$ and $\mathcal{D}_{1}$, pick
$b\Unif \{0, 1\}$ and answer $b$ regardless of the input. This
algorithm solves this problem with probability $1/2$ which is not
interesting. The efficiency of an algorithm $\mathcal{A}$ solving a
decision problem is measured by the difference between its probability
of success and $1/2$. The relevant quantity to consider is the
{\em advantage} defined as:
$$
Adv_{\mathcal{A}}(\mathcal{D}_{0}, \mathcal{D}_{1}) \eqdef
\dfrac{1}{2} \left( \mathbb{P}(\mathcal{A}(\mathcal{O}_{b}) =
1 \mid b = 1) - \mathbb{P}(\mathcal{A}(\mathcal{O}_{b}) = 1 \mid b = 0) \right)
$$
where the probabilities are computed over the internal randomness of
$\mathcal{A}$, a uniform $b \in \{0,1\}$ and inputs according to a distribution
$\mathcal{D}_{b}$. The advantage of a distinguisher
$\mathcal{A}$ measures how good it is to solve a distinguishing problem.
Indeed, it is classical fact that:
$$
\mathbb{P}(\mathcal{A}(\mathcal{O}_{b}) = b) = \frac{1}{2} + Adv_{\mathcal{A}}(\mathcal{D}_{0}, \mathcal{D}_{1}).
$$
\begin{remark}
	Even if it means answering $1-\mathcal{A}(\mathcal{O}_{b})$ instead of
	$\mathcal{A}(\mathcal{O}_{b})$, the advantage can always be assumed to be
	a positive quantity.
\end{remark}

Let us start by introducing the decisional version of stationary-$\minrank$.

		\begin{definition}[decisional stationary-$\minrank$]\label{prob:decStatMinrank} 
			Let $m,n, N, k_1,\dots,k_N,t,q$ be integers which are functions of some security
			parameter $\lambda$ and such that $mn \geq k_j$ for all $j \in [1,N]$. Let $(\vec{B}_{\ell}^{j})_{j \in [1,N]}, \vec{Y}_{0}^{(j)}$ for $j \in [1,N]$ be sampled as in stationary-$\minrank(m,n,N,(k_i)_{i \in [1,N]}, t,q)$ and $\vec{Y}_{1}^{(j)} \in \mathbb{F}_{q}^{m \times n}$ for~$j \in [1,N]$ be sampled uniformly at random.

			Let $b \in \{0,1\}$ be a uniform bit. The decisional stationary-$\minrank(m,n,N,(k_i)_{i \in [1,N]},t,q)$ problem consists, given $\left( \left( \vec{B}_{\ell}^{(j)} \right)_{\ell \in [1,k_j]}, \vec{Y}_{b}^{(j)} \right)_{j \in [1,N]}$, in finding~$b$. 
		\end{definition}

		\begin{definition}[stationary-$\minrank$ advantage]
			Let $\mathcal{X}_{b} \eqdef \left( \left(  \vec{B}_{\ell}^{(j)} \right)_{\ell}, \vec{Y}_{b}^{(j)} \right)_{j \in [1,N]}$ and $b$ be distributed as in decisional stationary-$\minrank(m,n,N,(k_i)_{i \in [1,N]},t,q)$. 
			The stationary-$\minrank$ advantage for parameter $(m,n,N,(k_i)_{i \in [1,N]},t,q)$ in time $T$ is defined as 
			$$
			Adv^{\mbox{st-}\mathsf{Mr}}(m,n,N,(k_i)_{i \in [1,N]},t,q,T) \eqdef 
			\frac{1}{2}\max\limits_{\mathcal{A}} Adv_{\mathcal{A}}(\mathcal{X}_{0}, \mathcal{X}_{1})	
$$
			where the maximum is taken over all the algorithms $\mathcal{A}$ running in time $\leq T$.
\end{definition}

		In the following lemma we show that breaking our scheme implies an algorithm to solve decisional stationary-$\minrank$. 
		\begin{lemma}\label{lemma:advScheme}
			Consider an attacker against the public-key encryption scheme described in Figure~\ref{algo:scheme}. Suppose that this attacker extracts an encrypted bit in time $T$ with probability $1/2+\varepsilon$ by using the knowledge of the public-key. Then,
			$$
			\varepsilon_{0} \geq \frac{\varepsilon}{2} \quad \mbox{or} \quad \varepsilon_1 \geq \frac{\varepsilon}{2} \quad \mbox{ where } \; \left\{ \begin{array}{l} \varepsilon_0 \eqdef Adv^{\mbox{st-}\mathsf{Mr}}(m,n,\ell_1,(k_j)_{j \in [1,\ell_1]},r,q,T) \\
				\varepsilon_1 \eqdef Adv^{\mbox{st-}\mathsf{Mr}}(m,n,N,(mn/\ell_2-k'_i)_{i\in [1,\ell_2]},d,q,T)  \end{array}\right. \ .
			$$
			where $\sum_{j=1}^{\ell_1} k_j +\ell_1 = \sum_{i=1}^{\ell_2} k_i'$. 
\end{lemma}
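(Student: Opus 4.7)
The plan is to prove the lemma via a two-step hybrid argument that moves the adversary from the real game to a game in which its view is information-theoretically independent of~$b$, with each transition controlled by one of the two claimed stationary-$\minrank$ advantages.

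First, I set up three games. In $G_0$ (the real game) the adversary $\mathcal{B}$ receives a genuine public-key $\mathsf{pk}$ together with $\ct = \mathsf{Enc}^{\mathsf{pk}}(b)$ for a uniform $b \in \{0,1\}$; by hypothesis $\Pr[\mathcal{B}(\mathsf{pk},\ct)=b \mid G_0] = 1/2+\varepsilon$. In $G_1$ the $\vec{Y}^{(j)}$ part of $\mathsf{pk}$ is replaced by uniform matrices (the $\vec{B}_\ell^{(j)}$ are already uniform), while the ciphertext is still produced by the honest $\mathsf{Enc}$ algorithm from this modified~$\mathsf{pk}$; in particular, when $b=0$ the $\vec{C}_i^\perp$'s are drawn from the dual of the now-uniform public code. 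In $G_2$ the $b=0$ ciphertext is additionally replaced by $\ell_2$ uniform matrices, so that the ciphertext in $G_2$ is identically distributed regardless of~$b$ and $\Pr[\mathcal{B}=b\mid G_2] = 1/2$.

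Next I reduce each transition to the decisional variant of stationary-$\minrank$ introduced in Definition~\ref{prob:decStatMinrank}. For $G_0 \to G_1$, the distinguisher receives a challenge $\left(\left(\vec{B}_\ell^{(j)}\right)_\ell, \vec{Y}^{(j)}\right)_{j\in[1,\ell_1]}$ with parameters $(m,n,\ell_1,(k_j)_j,r,q)$, uses it verbatim as $\mathsf{pk}$, samples a fresh bit~$\beta$, honestly computes $\ct \gets \mathsf{Enc}^{\mathsf{pk}}(\beta)$, forwards $(\mathsf{pk},\ct)$ to~$\mathcal{B}$, and outputs~$1$ iff $\mathcal{B}$'s answer equals~$\beta$; the gap $\left\lvert\Pr[\mathcal{B}=b\mid G_0] - \Pr[\mathcal{B}=b\mid G_1]\right\rvert$ is then exactly the distinguishing advantage of this reduction, hence controlled by~$\varepsilon_0$. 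For $G_1 \to G_2$, I switch to the dual representation of $\minrank$ (Definition~\ref{def:minRankDF}) and transpose so that the shared \emph{row} support of the $\vec{F}^{(i)}$'s becomes the shared \emph{column} support required by Definition~\ref{def:stMinrank}; the ciphertext $(\vec{C}_i^\perp+\vec{F}^{(i)})_{i\in[1,\ell_2]}$ then appears as a stationary-$\minrank$ instance with $\ell_2$ correlated errors of rank~$d$, whose codes have dimensions matching $(mn/\ell_2 - k'_i)_i$ once the budget $\sum_j k_j + \ell_1 = \sum_i k'_i$ is distributed across the $\ell_2$ positions. Distinguishing from $\ell_2$ uniform matrices is therefore bounded by~$\varepsilon_1$.

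The triangle inequality finally yields $\varepsilon \le \varepsilon_0 + \varepsilon_1$ (up to the normalisation of $Adv^{\mbox{st-}\mathsf{Mr}}$), so at least one of $\varepsilon_0,\varepsilon_1$ is $\ge \varepsilon/2$. The main obstacle is the $G_1 \to G_2$ step: the single-code encryption distribution must be repackaged as an instance of the \emph{multi-code} problem of Definition~\ref{def:stMinrank}, which requires working in the dual form, carefully tracking how the rank budget is split under the constraint $\sum_j k_j + \ell_1 = \sum_i k'_i$, and verifying via the bijection~$\rho$ that sampling $\vec{C}_i^\perp + \vec{F}^{(i)}$ is identically distributed (as a dual-form stationary-$\minrank$ sample) to the challenge produced by an oracle for the second problem.
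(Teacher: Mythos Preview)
Your proposal is correct and follows essentially the same two-step hybrid as the paper: first swap the $\vec{Y}^{(j)}$'s in the public key for uniform matrices (cost $\varepsilon_0$), then swap the $b=0$ ciphertext for uniform (cost $\varepsilon_1$), and conclude by $\varepsilon \le \varepsilon_0+\varepsilon_1$. Your game-based framing is slightly more formal than the paper's, and you explicitly flag the transposition needed to align the common \emph{row} support of the $\vec{F}^{(i)}$'s with the column-support convention of Definition~\ref{def:stMinrank}, a point the paper passes over silently; conversely, the paper makes the direction of the second reduction more concrete by building the single public code as the dual of the sum of the $\ell_2$ challenge codes and re-randomising the challenge samples into ciphertexts, which is exactly the ``repackaging'' you identify as the main obstacle.
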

		\begin{proof}
			Suppose that we replace public-keys in our scheme by perfectly random matrices. Let~$1/2+\varepsilon'$ be the probability of that attacker to succeed to break this version of the scheme. We clearly have 
			\begin{equation}\label{eq:epsPrime} 
			|\varepsilon'-\varepsilon| \leq \varepsilon_{0} \Longrightarrow \varepsilon' \geq \varepsilon - \varepsilon_{0}
		\end{equation}

			Suppose now that we are given an instance of stationary-$\minrank$ for parameters $(m,n,\ell_2,(mn/\ell_2-k'_i)_{i \in [1,\ell_2]},d,q)$:
			$$
			\left( \left( \vec{B}^{(j)}_{\ell}\right)_{\ell \in [1,mn/\ell_2-k'_j]},\vec{Y}^{(j)}_{b}\right)_{j \in [1,\ell_2]}
			$$
			Let us consider the code 
			$$
			\mathcal{D} \eqdef \mathbf{Span}\left( \left( \left( \vec{B}^{(j)}_{\ell}\right)_{\ell \in [1,mn/\ell_2-k'_j]} \right)\right)_{j \in [1,\ell_2]}
			$$
			It has dimension $mn - \sum_{j=1}^{\ell_2} k'_j = mn - \sum_{i=1}^{\ell_1} k_i-\ell_1$. 
			Notice that $\mathcal{D}^{\perp}$ has dimension $\sum_{i=1}^{\ell_1} k_i+\ell_1$ and it is a random code. We can decompose this code as $\ell_1$ random matrix codes with dimension $k_{i}$'s with additional $\ell_1$ uniform matrices. In other words, what we have just built is just the public-key of the scheme that our considered attacker can break with probability $1/2+\varepsilon$'. Notice now that during encryption of $b = 0$, we have to sample noisy codewords with underlying code $\mathcal{D}$. Let us pick uniformly at random $\vec{D}_1,\dots,\vec{D}_{\ell_2} \Unif \mathcal{D}$ and compute 
			$$
			\forall j \in [1,\ell_2], \; \vec{Z}^{(j)}_{b} \eqdef \vec{Y}^{(j)}_b + \vec{D}_{j}
			$$
			If the $\vec{Y}^{(j)}$'s are uniformly distributed, then $\vec{Z}^{(j)}_{b}$'s are also uniformly distributed and they correspond to the encryption of $b=1$. On the other hand the $\vec{Z}^{(j)}_b$'s are distributed as the encryption of~$b= 0$. We deduce that with advantage $\varepsilon' \leq \varepsilon_1$ our attacker solves the given stationary-$\minrank$ instance. Therefore, using Equation~\eqref{eq:epsPrime}, we deduce that $\varepsilon \leq \varepsilon_0 + \varepsilon_1$ which concludes the proof. 
		\end{proof}

		The above lemma shows that to prove Theorem~\ref{theo:fund} we just have to show how from an algorithm solving  the decisional form of stationary-$\minrank$ with probability $\geq \varepsilon/2$ we deduce an algorithm solving its search counter-part. To obtain such reduction we will use Goldreich-Levin Theorem~\cite{GL89,G01_a} that we recall now.

		\begin{theorem}[Goldreich-Levin Theorem]\label{propo:Goldreich} Let $f : \mathbb{F}_{2}^{*} \rightarrow \mathbb{F}_2^{*}$, $\mathcal{A}$ be a probabilistic algorithm running in time $T(n)$ and $\varepsilon(n) \in ( 0,1 )$ be such that
			$$	
			\mathbb{P}\left( \mathcal{A}( f(\vec{x}_n),\vec{r}_{n})  = \vec{x}_{n} \cdot \vec{r}_{n} \right)  = \frac{1}{2} + \varepsilon(n)	
			$$	
			where the probability is computed over the internal coins of $\mathcal{A}$, $\vec{x}_{n}$ and $\vec{r}_{n}$ that are uniformly distributed over $\mathbb{F}_2^{n}$. 		
			Let $\ell(n) \eqdef \log(1/\varepsilon(n))$. Then, it exists an algorithm $\mathcal{A}'$ running in time~$O\left(n^{2}\ell(n)^{3}T(n)\right)$ that satisfies
			$$
			\mathbb{P}\left( \mathcal{A}'(f(\vec{x}_n)=\vec{x}_n) \right) = \Omega\left(\varepsilon(n)^{2}\right)
			$$
			where the probability is computed over the internal coins of $\mathcal{A}'$ and $\vec{x}_n$.
		\end{theorem}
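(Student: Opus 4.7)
The plan is to upgrade the predictor $\mathcal{A}$ of the hard-core bit $\vec{x}\cdot\vec{r}$ into a full inverter of $f$. First, by an averaging argument over $\vec{x}$, an $\Omega(\varepsilon(n))$ fraction of inputs are ``good'', in the sense that $\mathcal{A}$ predicts $\vec{x}\cdot\vec{r}$ correctly with bias $\Omega(\varepsilon(n))$ over the remaining randomness; it is enough to design a reconstruction procedure that succeeds on such $\vec{x}$ with constant probability, which already contributes a factor $\Omega(\varepsilon(n))$ to the overall success probability.

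A naive attempt would query $\mathcal{A}(f(\vec{x}),\vec{e}_i)$ coordinate by coordinate and read off $x_i$, but the hypothesis on $\mathcal{A}$ only guarantees good behaviour on \emph{random} second arguments. The standard fix is to randomize: for a random $\vec{r}$, the bit $\mathcal{A}(f(\vec{x}),\vec{r}) \oplus \mathcal{A}(f(\vec{x}),\vec{r}+\vec{e}_i)$ equals $x_i$ whenever both predictions are correct. Yet errors on distinct $\vec{r}$'s may be adversarially correlated, and a direct Chernoff-style majority vote is out of reach because sampling enough truly independent $\vec{r}$'s is prohibitively costly when $\varepsilon(n)$ is small.

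The crucial idea is pairwise-independent sampling. Set $k \eqdef \Theta(\log(n/\varepsilon(n)^{2}))$ and sample $\vec{r}_1,\dots,\vec{r}_k$ uniformly at random; for every non-empty $S\subseteq [1,k]$ define $\vec{r}_S \eqdef \sum_{j\in S}\vec{r}_j$. The $2^{k}-1$ vectors $\vec{r}_S$ are pairwise independent and each is uniformly distributed, yet are generated from only $k$ truly random seeds. I would then exhaustively guess the $k$ bits $b_j \eqdef \vec{x}\cdot\vec{r}_j$ (only $2^{k}$ possibilities); for the correct guess, $b_S \eqdef \sum_{j\in S} b_j$ matches $\vec{x}\cdot\vec{r}_S$ for every $S$, and consequently $\mathcal{A}(f(\vec{x}),\vec{r}_S+\vec{e}_i) \oplus b_S$ is a prediction of $x_i$ with bias $\Omega(\varepsilon(n))$.

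The main obstacle is to argue that the majority vote over the $2^{k}-1$ values of $S$ recovers $x_i$ with probability $1-O(1/n)$, so that a union bound over $i\in [1,n]$ returns the whole vector $\vec{x}$. Thanks to the pairwise independence of the $\vec{r}_S$'s, the variance of the number of correct predictions of $x_i$ is bounded by $2^{k}/4$, and Chebyshev's inequality closes the argument provided $2^{k}=\Omega(n/\varepsilon(n)^{2})$. Among the $2^{k}$ candidate vectors produced by the different guesses of the $b_j$'s, the correct $\vec{x}$ is singled out by testing $f(\vec{x}')=f(\vec{x})$. The running time is dominated by $2^{k}\cdot n$ oracle calls to $\mathcal{A}$ plus $2^{2k}\cdot n$ bit operations for the votes, which after substituting $k=\Theta(\ell(n))$ and combining with the $\Omega(\varepsilon(n))$ factor lost in the averaging step yields the announced $\Omega(\varepsilon(n)^{2})$ success probability together with the stated running time.
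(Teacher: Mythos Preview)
The paper does not prove this statement at all: Theorem~\ref{propo:Goldreich} is merely quoted from \cite{GL89,G01_a} and used as a black box in the search-to-decision reduction. There is therefore no ``paper's own proof'' to compare against. Your sketch is the standard Goldreich--Levin argument via pairwise-independent sampling and exhaustive guessing of the seed bits, and the overall strategy is correct.

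One quantitative point deserves care. With $k=\Theta(\log(n/\varepsilon(n)^{2}))$ you get $2^{k}=\Theta(n/\varepsilon(n)^{2})$, so the number of oracle calls is $n\cdot 2^{k}=\Theta(n^{2}/\varepsilon(n)^{2})$ and the voting step costs $n\cdot 2^{2k}=\Theta(n^{3}/\varepsilon(n)^{4})$. Your final line ``after substituting $k=\Theta(\ell(n))$'' conflates $k$ with $\ell(n)=\log(1/\varepsilon(n))$ and, more importantly, the running time depends on $2^{k}$ rather than $k$. The bound you actually obtain is the classical $O\!\big((n^{2}/\varepsilon(n)^{2})\,T(n)+n^{3}/\varepsilon(n)^{4}\big)$, which is polynomial in $1/\varepsilon(n)$, whereas the paper's stated $O(n^{2}\ell(n)^{3}T(n))$ is only polylogarithmic in $1/\varepsilon(n)$. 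This discrepancy is not a flaw in your reasoning but rather reflects the specific formulation the authors imported from Goldreich's textbook; since the theorem is cited without proof, you cannot be expected to reproduce those exact constants from the sketch alone.
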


		To apply this theorem in our case we will first use the following {\em hybrid argument.} Let $i \in [1,N]$ and $\mathcal{H}_{i}$ be the following distribution. We sample 
		$$
		\left( \left( \vec{B}_{\ell}^{(j)} \right)_{\ell \in [1,k_j]}, \vec{Y}^{(j)} \right)_{j \in [1,i]}
		$$ 
		as a proper stationary-$\minrank$ distribution (notice that here we have $i$ samples from a stationary-$\minrank$ instance, not $N$ samples) and we samples 
		$$
		\left( \left( \vec{B}_{\ell}^{(j)} \right)_{\ell \in [1,k_j]}, \vec{U}^{(j)} \right)_{j \in [i+1,N]}
		$$ 
		where the $\vec{B}_{\ell}^{(j)}$'s and~$\vec{U}^{(j)}$'s are uniform matrices. In particular the decisional stationary-$\minrank$ requires to distinguish between $\mathcal{H}_{0}$ and $\mathcal{H}_{N}$.

		\begin{lemma}[Hybrid argument]\label{lemma:hybrid} 
There exists $i_0\in [1,N]$ such that,
$$
			Adv_{\mathcal{A}}(\mathcal{H}_{i_0},\mathcal{H}_{i_0+1}) \geq \frac{Adv_{\mathcal{A}}(\mathcal{H}_0, \mathcal{H}_N)}{N}
$$ 
		\end{lemma}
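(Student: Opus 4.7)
The plan is to apply the standard hybrid (telescoping) argument combined with a simple averaging step. For each $i \in [0, N]$, define
$p_i \eqdef \mathbb{P}\left( \mathcal{A}(\mathcal{O}_{\mathcal{H}_i}) = 1 \right),$
where the probability is taken over the internal randomness of $\mathcal{A}$ and over a sample drawn from $\mathcal{H}_i$. Up to replacing $\mathcal{A}$ by $1 - \mathcal{A}$ (as allowed by the remark in the notation paragraph), I may assume without loss of generality that $p_N \geq p_0$, so that
$Adv_{\mathcal{A}}(\mathcal{H}_0, \mathcal{H}_N) = \tfrac{1}{2}(p_N - p_0).$

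The core of the argument is the telescoping identity
$$
p_N - p_0 = \sum_{i=0}^{N-1} \left( p_{i+1} - p_{i} \right),
$$
which holds trivially. An averaging (pigeonhole) argument then yields some index $i_0 \in [0, N-1]$ (equivalently $[1,N]$ after reindexing) such that
$$
p_{i_0+1} - p_{i_0} \geq \frac{p_N - p_0}{N}.
$$
Multiplying by $1/2$ and recognizing the left-hand side as $2 \cdot Adv_{\mathcal{A}}(\mathcal{H}_{i_0}, \mathcal{H}_{i_0+1})$ (again up to a sign that may be absorbed into $\mathcal{A}$) gives the claimed lower bound $Adv_{\mathcal{A}}(\mathcal{H}_{i_0}, \mathcal{H}_{i_0+1}) \geq Adv_{\mathcal{A}}(\mathcal{H}_0, \mathcal{H}_N) / N$.

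The only subtlety, and what I would be most careful about, is the sign convention in the definition of the advantage. Since the advantage is defined as a signed quantity but is always assumed positive after possibly flipping the output of the distinguisher, the index $i_0$ produced by the averaging step may \emph{a priori} correspond to a term with the ``wrong'' sign. This is handled by noting that after the reduction to $p_N \geq p_0$, the sum of the $N$ differences is non-negative, so at least one difference is non-negative and at least $(p_N - p_0)/N$; for that particular $i_0$ no further sign correction is needed, and the lemma follows directly. No deep obstacle arises: the argument is entirely combinatorial and independent of the specific structure of the distributions $\mathcal{H}_i$.
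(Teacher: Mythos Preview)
Your proposal is correct and follows essentially the same approach as the paper: both rely on the telescoping identity $Adv_{\mathcal{A}}(\mathcal{H}_0,\mathcal{H}_N)=\sum_{i=0}^{N-1}Adv_{\mathcal{A}}(\mathcal{H}_i,\mathcal{H}_{i+1})$ (which is exactly your identity $p_N-p_0=\sum(p_{i+1}-p_i)$ multiplied by $1/2$) followed by averaging. Your extra care with the sign convention is fine but unnecessary here, since the paper's remark already allows assuming the advantage is non-negative.
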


		\begin{proof}
			The following equality holds:
			$$
			Adv_{\mathcal{A}}(\mathcal{H}_{0}, \mathcal{H}_{N}) =
			\sum_{i=0}^{N-1}\adv_{\mathcal{A}}(\mathcal{H}_{i}, \mathcal{H}_{i+1}).
			$$
			Therefore, it exists $i_{0}\in [1, N]$ such
			that
			$Adv_{\mathcal{A}}(\mathcal{H}_{i_{0}},\mathcal{H}_{i_{0}+1}) \ge \frac{Adv_{\mathcal{A}}(\mathcal{H}_{0}, \mathcal{H}_{N})}{N}$. 
		\end{proof}

		We are now ready to prove the following search-to-decision reduction. 
		\begin{theorem}[stationary-$\minrank$ search-to-decision reduction]\label{theo:searchToDecision}
			Let $\mathcal{A}$ be a probabilistic algorithm running in time $T$ whose stationary-$\minrank$ advantage is given by $\varepsilon$ for parameters $(m,n,N,(k_i)_{i \in [1,N]},t,2)$. Let~$\ell \eqdef \log(1/\varepsilon)$. 
			Then it exists an algorithm $\mathcal{A}'$ that solves stationary-$\minrank$ for parameters $(m,n,N,(k_i)_{i \in [1,N]},t,2)$ in time $O(Nmn^{2}\ell^{3})T$ and with probability $\Omega(\frac{\varepsilon^{2}}{N^{2}})$.  
		\end{theorem}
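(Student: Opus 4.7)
I follow the Fischer-Stern strategy~\cite{FS96}: combine the hybrid argument (Lemma~\ref{lemma:hybrid}) with the Goldreich-Levin theorem (Theorem~\ref{propo:Goldreich}), the crux being to convert the distinguisher $\mathcal{A}$ into a bit predictor via a rank-one perturbation of the vectorized instance.

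First I would apply Lemma~\ref{lemma:hybrid} to the hybrid sequence $\mathcal{H}_0, \ldots, \mathcal{H}_N$ to obtain an index $i_0 \in \{0, \ldots, N-1\}$ for which $\mathcal{A}$ distinguishes $\mathcal{H}_{i_0}$ from $\mathcal{H}_{i_0+1}$ with advantage at least $\varepsilon/N$. Since $i_0$ is unknown, the reduction will try all $N$ candidate values, contributing the factor $N$ in running time and the $1/N^2$ loss in success probability.

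Next, fix the guess $i_0$ and let $\bigl((\vec{B}_\ell^{(j)})_\ell, \vec{Y}^{(j)}\bigr)_{j\in[1,N]}$ be the stationary-$\minrank$ instance whose support $E$ I seek. I will vectorize the $(i_0+1)$-th sample via $\rho$: let $\vec{G} \in \mathbb{F}_2^{k_{i_0+1}\times mn}$ be the matrix whose rows are the $\rho(\vec{B}_\ell^{(i_0+1)})$, and set $\vec{y} \eqdef \rho(\vec{Y}^{(i_0+1)})$ and $\vec{e} \eqdef \rho(\vec{E}^{(i_0+1)})$, so that $\vec{y} = \vec{\lambda}\vec{G} + \vec{e}$ with unknown secret $\vec{\lambda} = \vec{\lambda}^{(i_0+1)}$. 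Given a random query $\vec{r} \in \mathbb{F}_2^{k_{i_0+1}}$, I predict the bit $\vec{\lambda}\cdot\vec{r}$ as follows: sample fresh $\vec{v} \Unif \mathbb{F}_2^{mn}$ and $b \Unif \mathbb{F}_2$, and set $\vec{G}' \eqdef \vec{G} + \vec{r}\vec{v}^\top$ and $\vec{y}' \eqdef \vec{y} + b\vec{v}$. Expanding gives $\vec{y}' = \vec{\lambda}\vec{G}' + \vec{e} + (b - \vec{\lambda}\cdot\vec{r})\vec{v}$, so that when $b = \vec{\lambda}\cdot\vec{r}$ the pair $(\vec{G}', \vec{y}')$ is a fresh stationary-$\minrank$ sample sharing the same column support $E$ as the first $i_0$ samples; when $b \ne \vec{\lambda}\cdot\vec{r}$ the pair is uniform, since $\vec{v}$ is uniform and independent of $(\vec{G}', \vec{\lambda}, \vec{e})$ thanks to the measure-preserving bijection $(\vec{G},\vec{v}) \mapsto (\vec{G}',\vec{v})$. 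Feeding to $\mathcal{A}$ the first $i_0$ unmodified samples, then the matrix sample obtained from $(\vec{G}', \vec{y}')$ via $\rho^{-1}$, then fresh uniform samples for indices $i_0+2,\ldots,N$, therefore yields an input distributed exactly as $\mathcal{H}_{i_0+1}$ when $b = \vec{\lambda}\cdot\vec{r}$ and as $\mathcal{H}_{i_0}$ otherwise. Outputting $b$ when $\mathcal{A}$ answers $1$ and $1-b$ otherwise predicts $\vec{\lambda}\cdot\vec{r}$ with advantage $\Omega(\varepsilon/N)$.

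Finally I would apply Theorem~\ref{propo:Goldreich} to the above predictor, treating $\vec{\lambda}\in\mathbb{F}_2^{k_{i_0+1}}$ as the hidden bit-string of length at most $mn$. Goldreich-Levin then returns $\vec{\lambda}^{(i_0+1)}$ with probability $\Omega((\varepsilon/N)^2) = \Omega(\varepsilon^2/N^2)$ in time $O((mn)^2 \ell^3 T)$; multiplying by the $N$ guesses of $i_0$ matches the claimed complexity. Once $\vec{\lambda}^{(i_0+1)}$ is recovered, $\vec{E}^{(i_0+1)} = \rho^{-1}(\vec{y} - \vec{\lambda}^{(i_0+1)}\vec{G})$ is read off and $E = \Sp{\vec{E}^{(i_0+1)}}$ is returned. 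The main obstacle is the distributional claim in Step 2: that the rank-one perturbation $\vec{r}\vec{v}^\top$ together with the additive bit-shift $b\vec{v}$ produces an input whose conditional distribution given $b$ coincides exactly with $\mathcal{H}_{i_0+1}$ or $\mathcal{H}_{i_0}$, which rests on the independence of $\vec{v}$ from $\vec{G}'$ witnessed by the measure-preserving bijection above.
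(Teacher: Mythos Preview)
Your proof is correct and follows the same overall strategy as the paper: hybrid argument (Lemma~\ref{lemma:hybrid}) plus a rank-one perturbation to turn the distinguisher into a bit predictor, followed by Goldreich--Levin (Theorem~\ref{propo:Goldreich}), with a loop over the unknown hybrid index.

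The implementation differs in a notable way. The paper works in the \emph{dual} representation: it keeps the syndrome $\vec{H}^{(i_0)}(\vec{e}^{(i_0)})^{\top}$, replaces $\vec{H}^{(i_0)}$ by $\vec{M}^{(i_0)}\eqdef\vec{H}^{(i_0)}-\vec{u}^{\top}\vec{r}$, and uses the identity $\vec{H}^{(i_0)}(\vec{e}^{(i_0)})^{\top}=\vec{M}^{(i_0)}(\vec{e}^{(i_0)})^{\top}+(\vec{e}^{(i_0)}\!\cdot\vec{r})\,\vec{u}$ to predict $\vec{e}^{(i_0)}\!\cdot\vec{r}$; Goldreich--Levin then recovers the vectorized error $\vec{e}^{(i_0)}\in\mathbb{F}_2^{mn}$. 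You instead work in the \emph{primal} representation, perturb both $\vec{G}$ and $\vec{y}$, and predict $\vec{\lambda}^{(i_0+1)}\!\cdot\vec{r}$ so that Goldreich--Levin recovers the coefficient vector $\vec{\lambda}^{(i_0+1)}\in\mathbb{F}_2^{k_{i_0+1}}$, from which you read off $\vec{E}^{(i_0+1)}$ and hence $E$. Your route has the small pleasant feature that $\vec{\lambda}^{(i_0+1)}$ is genuinely uniform over $\mathbb{F}_2^{k_{i_0+1}}$, so the hypothesis of Theorem~\ref{propo:Goldreich} as stated is met verbatim, whereas the paper's secret $\vec{e}^{(i_0)}$ is a low-rank matrix and hence not uniform in $\mathbb{F}_2^{mn}$ (Goldreich--Levin still applies, but one must invoke the fixed-secret formulation). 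Conversely, the paper's trick avoids your auxiliary random bit $b$. Both routes yield the announced time and success probability; they are dual instantiations of the same Fischer--Stern idea.
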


		\begin{proof}
			First, notice that a stationary-$\minrank$ instance in dual representation for $q=2$ can be written as (via the vectorization of matrices as defined in Equation~\eqref{eq:rho})
			$$
			\left( \vec{H}^{(j)}, \vec{H}^{(j)}\left( \vec{e}^{(j)}\right)^{\top} \right)_{j \in [1,N]}
			$$
			where $\vec{e}^{(j)} = \rho(\vec{E}^{(j)}) \in \mathbb{F}_{2}^{mn}$ and the $\ell$-thm row of $\vec{H}^{(j)} \in \mathbb{F}_{2}^{mn-k_j}$ is given by $\rho\left( \vec{B}_{\ell}^{(j)} \right)$ which is a uniformly distributed vector.

			In particular, using notation of Lemma~\ref{lemma:hybrid}, an instance of $\mathcal{H}_{i_0}$ can be written as 
			\begin{equation}\label{eq:fctHi0}
				\left( \vec{H}^{(j)}, \vec{H}^{(j)}\left( \vec{e}^{(j)}\right)^{\top} \right)_{j \in [1,i_0]}, \; 
				\left( \vec{H}^{(j)}, \vec{u}^{(j)} \right)_{j \in [i_0+1,N]}
			\end{equation} 
			where the $\vec{u}^{(j)}\in \mathbb{F}_{2}^{mn-k}$ are uniformly distributed. Notice that all the vectors $\vec{e}^{(j)}$ are obtained via the $\vec{E}^{(j)}$'s which have a same column support. We can therefore interpret elements of Equation~\eqref{eq:fctHi0} as the output of some function $f(\vec{e}^{(i_0)})$. The different vectors $\vec{e}^{(j)}$ are then obtained via a pseudo-random generator taking $\vec{e}_{i_0}$ as input. Our goal now is to show how from $f(\vec{e}_{i_0})$,~$\vec{r} \in \mathbb{F}_{2}^{mn}$ and a distinguisher $\mathcal{A}_{i_0}$ between $\mathcal{H}_{i_0}$ and $\mathcal{H}_{i_0+1}$ with advantage $\varepsilon'$ we can deduce $\vec{e}_{i_0} \cdot \vec{r}$ with probability~$1/2 + \varepsilon'$. 
			\medskip

			{\bf Algorithm} $\mathcal{A}'$ :

			\quad \texttt{Input}: $\left( \vec{H}^{(j)}, \vec{H}^{(j)}\left( \vec{e}^{(j)}\right)^{\top} \right)_{j \in [1,i_0]}$, 
			$\left( \vec{H}^{(j)}, \vec{u}^{(j)} \right)_{j \in [1,N]}$ and $\vec{r} \in \mathbb{F}_2^{n}$,

			\qquad 1. $\vec{u} \in \mathbb{F}_2^{mn-k_{i_0}}$ be uniformly distributed

			\qquad 2. $\vec{M}^{(i_0)} \eqdef \vec{H}^{(i_0)} - {\vec{u}}^{\top}\vec{r}$

			\qquad 3. $b$ be the output of $\mathcal{A}_{i_0}$ when we feed as input $\left( \vec{H}^{(j)}, \vec{H}^{(j)}\left( \vec{e}^{(j)}\right)^{\top} \right)_{j \in [1,i_0]}$,
			$\left( \vec{H}^{(j)}, \vec{u}^{(j)} \right)_{j \in [1,N]}$ but where we replaced $\vec{H}^{(i_0)}$ by $\vec{M}^{(i_0)}$.

			\quad \texttt{Output}: $b$
			\medskip

			The matrix $\vec{H}^{(i_0)}$ is uniformly distributed by definition, therefore $\vec{M}^{(i)}$ is also uniformly distributed. Notice now that ,
			\begin{equation*} 
				\vec{H}^{(i_0)}\left( \vec{e}^{(i_0)} \right)^{\top} = \vec{M}_{i_0}\left( \vec{e}^{(i_0)} \right)^{\top} + \left( \vec{e}^{(i_0)}\cdot\vec{r} \right)\vec{u}.
			\end{equation*}
			Let,
			\begin{equation*}
				\vec{s}' \eqdef \vec{M}_{i_0}\left( \vec{e}^{(i_0)} \right)^{\top} + \vec{u}.
			\end{equation*}
			It is readily verified that $\vec{s}'$ is uniformly distributed. Therefore, according to $ b = \vec{e}^{i_0}\cdot\vec{r} = 0$ or~$1$, we obtain distributions $\mathcal{H}_{i_0}$ or $\mathcal{H}_{i_0+1}$. Therefore, by conditioning on $\vec{e}^{(i_0)} \cdot \vec{r}$,  the probability that~$\mathcal{A}'$ outputs $\vec{x}\cdot \vec{r}$ is given by~$1/2 + \varepsilon'$.

			To conclude the proof notice that we don't have an access to a distinguisher $\mathcal{A}_{i_0}$, we only have an access to distinguisher $\mathcal{A}$ between $\mathcal{H}_{0}$ and $\mathcal{H}_{N}$ with advantage $\varepsilon$ by assumption. But by Lemma~\ref{lemma:hybrid} we can use $\mathcal{A}$ to distinguish $\mathcal{H}_{i_{0}}$ and $\mathcal{H}_{i_0+1}$ with probability $\geq \varepsilon/N$ for some unknown~$i_{0}$. What we are going to do is to use $\mathcal{A}$ for all $i \in [1,N]$ and applying the previous process $\mathcal{A}$' and then the transformation from Goldreich-Levin theorem. It concludes the proof.  
		\end{proof}
		
		All the ingredients are now in place to prove Theorem~\ref{theo:secu}.

		\begin{proof}[Proof of Theorem~\ref{theo:secu}] We simply combine Theorem~\ref{theo:searchToDecision} with Lemma~\ref{lemma:advScheme}.  
	\end{proof} \label{app:reduction}

\end{document}